\definecolor{bluegray}{rgb}{0.4, 0.6, 0.8}
\definecolor{turquoise}{rgb}{0.2, 0.7, 0.6}
\newcommand\co[1]{{\color{black}#1}}  
\newcommand\rev[1]{{\color{black}#1}} 
\title{Simplicial quantum contextuality}
\author{Cihan Okay}
\email{cihan.okay@bilkent.edu.tr}
\affiliation{Department of Mathematics, Bilkent University, Ankara, Turkey}
\author{Aziz Kharoof}
\email{aziz.kharoof@bilkent.edu.tr} 
\author{Selman Ipek}
\email{selman.ipek@bilkent.edu.tr} 
\begin{document}
  \maketitle


  


\begin{abstract}
We introduce a new framework for contextuality based on simplicial sets, combinatorial models of topological spaces that play a prominent role in modern homotopy theory. Our approach extends measurement scenarios to consist of spaces (rather than sets) of measurements and outcomes, and thereby generalizes nonsignaling distributions to simplicial distributions, which are distributions on spaces modeled by simplicial sets. Using this formalism we present a topologically inspired new proof of Fine's theorem for characterizing noncontextuality in Bell scenarios. Strong contextuality is generalized suitably for simplicial distributions, allowing us to define cohomological witnesses that extend the earlier topological constructions restricted to algebraic relations among quantum observables to the level of probability distributions. Foundational theorems of quantum theory such as the Gleason's theorem and Kochen--Specker theorem can be expressed naturally within this new language.
\end{abstract}

\tableofcontents

\section{Introduction}

It has long been recognized that quantum mechanics stands in stark contrast to the classical physics that preceded it. Early pioneers, such as Schrodinger \cite{schrodinger35}, as well as Einstein, Podolsky, and Rosen (EPR) \cite{epr35}, were quick to recognize \emph{nonlocality} as a novel feature of quantum mechanics, a notion which was later refined by the seminal work of Bell \cite{bell64}. In tandem to this Kochen and Specker (KS) \cite{KS67}, and independently Bell \cite{bell66},  
demonstrated no-go results concerning the possibility of hidden variable models with pre-existing outcome assignments. 
In recent years these ostensibly different phenomena have been given a unified account, being recognized as facets of a single fundamental feature of quantum theory known simply as contextuality.  

Along side its important role in the foundations of quantum mechanics, contextuality has found many applications in quantum information theory and has been shown to be responsible for many information-processing advantages. See \cite{mermin1993hidden,budroni2021quantum,brunner2014bell,amaral2019resource} and references therein for both of these aspects. There are several different approaches to capture the essence of this important quantum phenomenon: operation-theoretic \cite{Spekkens}, sheaf-theoretic \cite{abramsky2011sheaf} and (hyper)graph-theoretic \cite{cabello2010non,Acin}. 
Among these the sheaf-theoretic approach has yielded a successful framework for a systematic study of contextuality for nonsignaling distributions arising from commuting quantum observables, not necessarily acting on distinct tensor factors; allowing for a comprehensive treatment of both quantum nonlocality as well as  contextuality more broadly. 
More recently, a topological approach was introduced in \cite{Coho} that is well-suited for
studying contextuality as a computational resource in  
measurement-based quantum computation (MBQC)  \cite{raussendorf2016cohomological}. Building off of these efforts, here we introduce a new framework that 
\co{extends}
both the   topological and sheaf-theoretic approaches and is based on the theory of simplicial sets, which are combinatorial models of topological spaces generalizing simplicial complexes. For previous work where topology and sheaf-theory come in contact, see e.g., \cite{beer2018contextuality,
cunha,giovanni2019a,sivert,
montanhano2021contextuality}.

Studies of contextuality often adopt an operational stance wherein physical systems are treated as ``boxes" that take inputs and produce outputs. A measurement scenario then specifies a finite set of inputs and outputs corresponding to choices of measurement and their potential outcomes. Drawing on the quantum analogy, not all measurements can be simultaneously performed, rather only certain subsets of measurements called {\it contexts}.
The statistics of a long run of measurements can then be summarized by a collection of probability distributions that satisfy a compatibility condition known as the {\it nonsignaling condition}. Examples of contextual nonsignaling distributions were first observed in studies of quantum nonlocality through so-called Bell scenarios, most famously by Bell \cite{bell64} and by many authors since then; see e.g., \cite{pr94,werner2001all,barrett2005nonlocal}; also see \cite{brunner2014bell}. For more general types of scenarios where contexts are given by commuting observables, not necessarily local, the sheaf-theoretic formalism \cite{abramsky2011sheaf} provides a rigorous approach to study contextuality. On the other hand, certain types of contextuality proofs, mainly the state-independent ones such as the Mermin square scenario \cite{mermin1993hidden}, can be described using topological methods \cite{Coho} based on chain complexes and group cohomology. In the latter framework measurement contexts are organized into a topological space and a cohomological obstruction defined on this space detects state-independent contextuality. Although this approach cannot be directly applied to the study of nonsignaling distributions, certain types of state-dependent contextuality proofs that are useful as a computational resource in MBQC can be successfully analyzed.

Motivated by the topological approach we introduce a new framework, the {\it simplicial approach to contextuality}, in which we define contextuality for scenarios consisting of {\it spaces of measurements} and {\it spaces of outcomes}. Upgrading the set of measurements and outcomes to a space allows for more general types of distributions. We refer to those new types of distributions as {\it simplicial distributions}. Ordinary nonsignaling distributions defined on   
\emph{sets} of measurements and outcomes  
are referred to as {\it discrete scenarios} and can be studied as a special case, but with extra freedom provided by topology.  
A key conceptual innovation on this front is to model a measurement and its corresponding outcome as a topological  simplex which comes with an intrinsic spatial dimension. 
Such simplicies can then be organized into a combinatorial space which forms a simplicial set.
For example, the Clauser, Horne, Shimony, Holt (CHSH) scenario \cite{chsh69}, a well-known bipartite Bell scenario, can be treated in a couple of different ways. 
The canonical realization of the CHSH scenario as a discrete scenario gives a one-dimensional space (boundary of a square where measurements label the vertices), whereas a certain two-dimensional realization (a square whose edges are labeled by the measurements) proves to be useful for a new topological proof of Fine's theorem \cite{fine1982hidden,fine1982joint} \rev{for characterizing noncontextuality}. Alternatively, another realization  of the same scenario (as a punctured torus where measurements label the edges) gives a characterization of contextuality in terms of an extension condition to the whole torus. 
In this paper we focus on realizations  where measurements and outcomes label one-dimensional simplices (edges) increasing the geometric dimension by one as compared to the canonical realization in the discrete case.

The main benefit of mixing  spaces with probabilities is the topological intuition that allows us to decompose such distributions to distributions on simpler pieces of the underlying space. Contextual properties of those simpler pieces can be analyzed separately and then ``glued back" to determine the contextuality of the original distribution.  
\rev{In this paper our goal is to analyze the well-known scenarios and provide new insights using our simplicial formalism. However, our framework goes beyond the theory of nonsignaling distributions; see Example \ref{ex:small-contextual}. A systematic study of such new scenarios will appear elsewhere.}
Our main contributions in this paper can be summarized as follows:
\begin{itemize}
\item We introduce a new class of objects called simplicial distributions (Definition \ref{def:simp-dist}) that generalize simplicial sets to probabilistic ones and 
\co{extends}
the theory of nonsignaling distributions.

\item Contextuality for simplicial distributions (Definition \ref{def:simp-contextuality}) is introduced and analyzed for well-known scenarios. \rev{This notion of contextuality generalizes the  sheaf-theoretic formulation (Theorem \ref{thm:ComparisonToSheaf}).}

\item A new proof of Fine's theorem \cite{fine1982hidden}\cite{fine1982joint} for the CHSH scenario based on topological arguments is presented (Theorem \ref{thm:Fines-theorem}).

\item We  characterize contextuality in terms of the failure of extending distributions (Proposition \ref{pro:ext-noncontextual}) and apply this to the CHSH scenario (Corollary \ref{cor:extension-torus}).

\item The cohomological framework of \cite{Coho} is extended to simplicial distributions, and in particular to nonsignaling distributions, by constructing cohomology classes that detect strong contextuality (Corollary \ref{cor:Coho-StrongContex}).


\item Quantum measurements are generalized to simplicial quantum measurements, where outcomes are represented by a space instead of a set. We then pass to simplicial distributions via the Born rule and define contextuality for quantum states with respect to a simplicial quantum measurement (Definition \ref{def:quantum-contextuality}).


\item Finally two foundational results from quantum theory, Gleason's theorem \cite{gleason1975measures} and Kochen--Specker theorem \cite{KS67}, are presented in our simplicial framework (Theorem \ref{thm:Gleason} and Corollary \ref{thm:KS}) demonstrating the 
\co{expressive}
 power of the simplicial language.
\end{itemize} 


The rest of the paper is organized as follows: In Section \ref{sec:MotIdea} we provide an informal introduction to our framework, building basic intuition about 
simplicial distributions from a topological viewpoint. %
%
In Section \ref{sec:DistSpace} we provide a formal foundation for our framework, defining abstract simplicial sets before modeling measurements and outcomes as spaces within this setting. %
%
A generalization (Lemma \ref{lem:gluing-lemma}) of Fine's ansatz \cite{fine1982joint,halliwell2019fine} is provided in Section \ref{sec:Glue-Extend} which is later utilized in our topological proof of Fine's theorem. %
In Section \ref{sec:StrongContCoho} we formulate strong contextuality for simplicial distributions and describe its relation to cohomology. 
Section \ref{sec:QuantumMeasSpace} introduces simplicial quantum measurements and applies the cohomological framework of the previous section to state-independent contextuality proofs that were previously analyzed in \cite{Coho}. Some foundational results in quantum theory are described in Section \ref{sec:FoundThms} before we conclude the main text. 
In the appendix, Section \ref{sec:app-SimplicialIdentities} provides background information on simplicial identities and Section \ref{sec:discrete-scenarios}  describes how to embed the sheaf-theoretic formalism of \cite{abramsky2011sheaf} into the simplicial framework.

\section{Motivation and the idea}\label{sec:MotIdea}

In this section we describe our simplicial framework in an informal way.
Starting from the topological approach to the state-independent Mermin square we build step-by-step towards probabilistic scenarios, such as the CHSH scenario.%
%

\subsection{Topological proofs of contextuality}
\label{sec:TopProofCont}

In the topological framework developed in \cite{Coho} a state-independent proof of quantum contextuality is detected by a cohomology class. 
For instance, the Mermin square scenario can be illustrated by a torus together with a certain triangulation as depicted in Fig.~(\ref{fig:Mermin-sq-beta}). 
\begin{figure}[h!]
\centering
\begin{subfigure}{.49\textwidth}
  \centering
  \includegraphics[width=.6\linewidth]{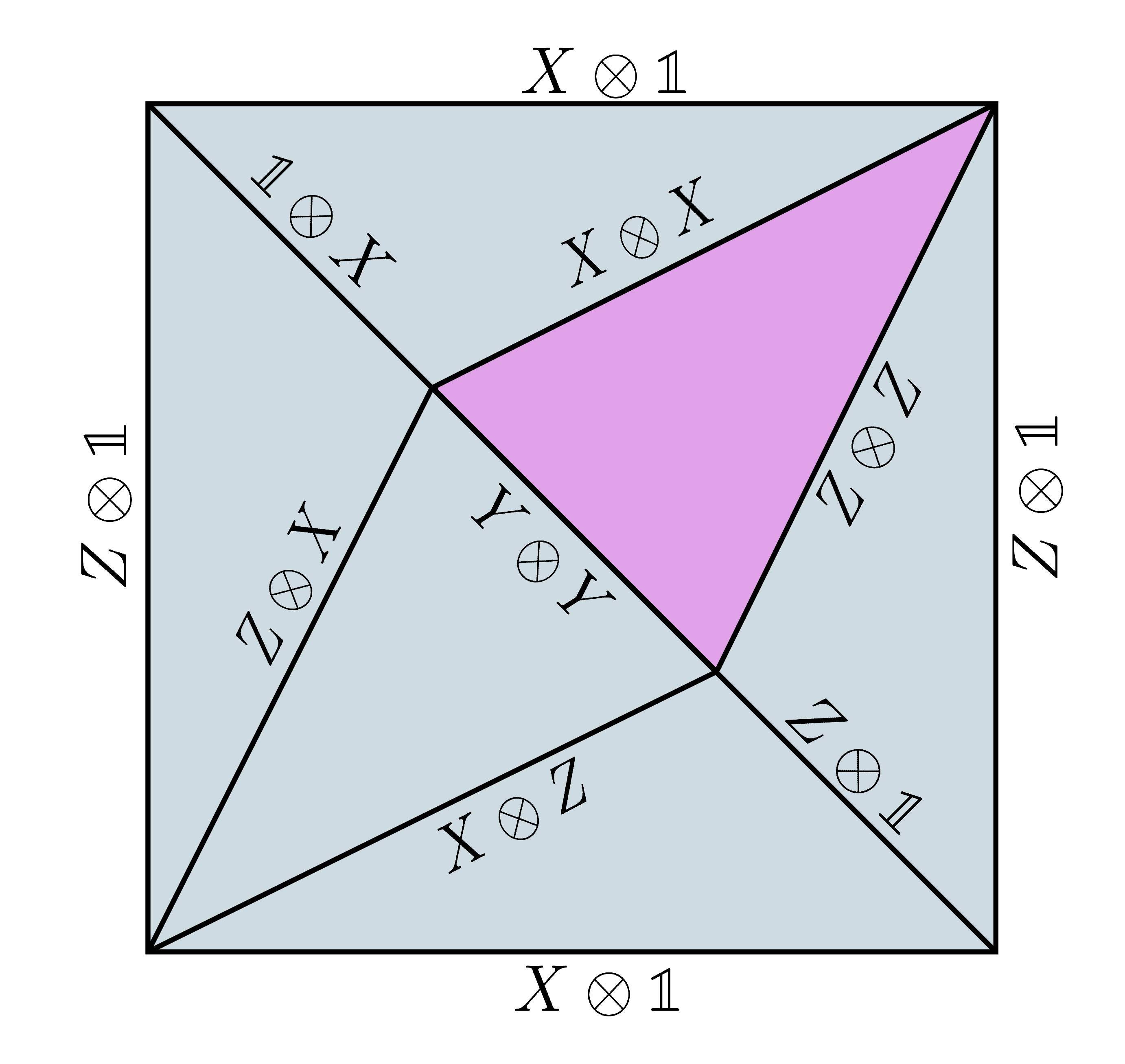}
  \caption{}
  \label{fig:Mermin-sq-beta}
\end{subfigure}%
\begin{subfigure}{.49\textwidth}
  \centering
  \includegraphics[width=.6\linewidth]{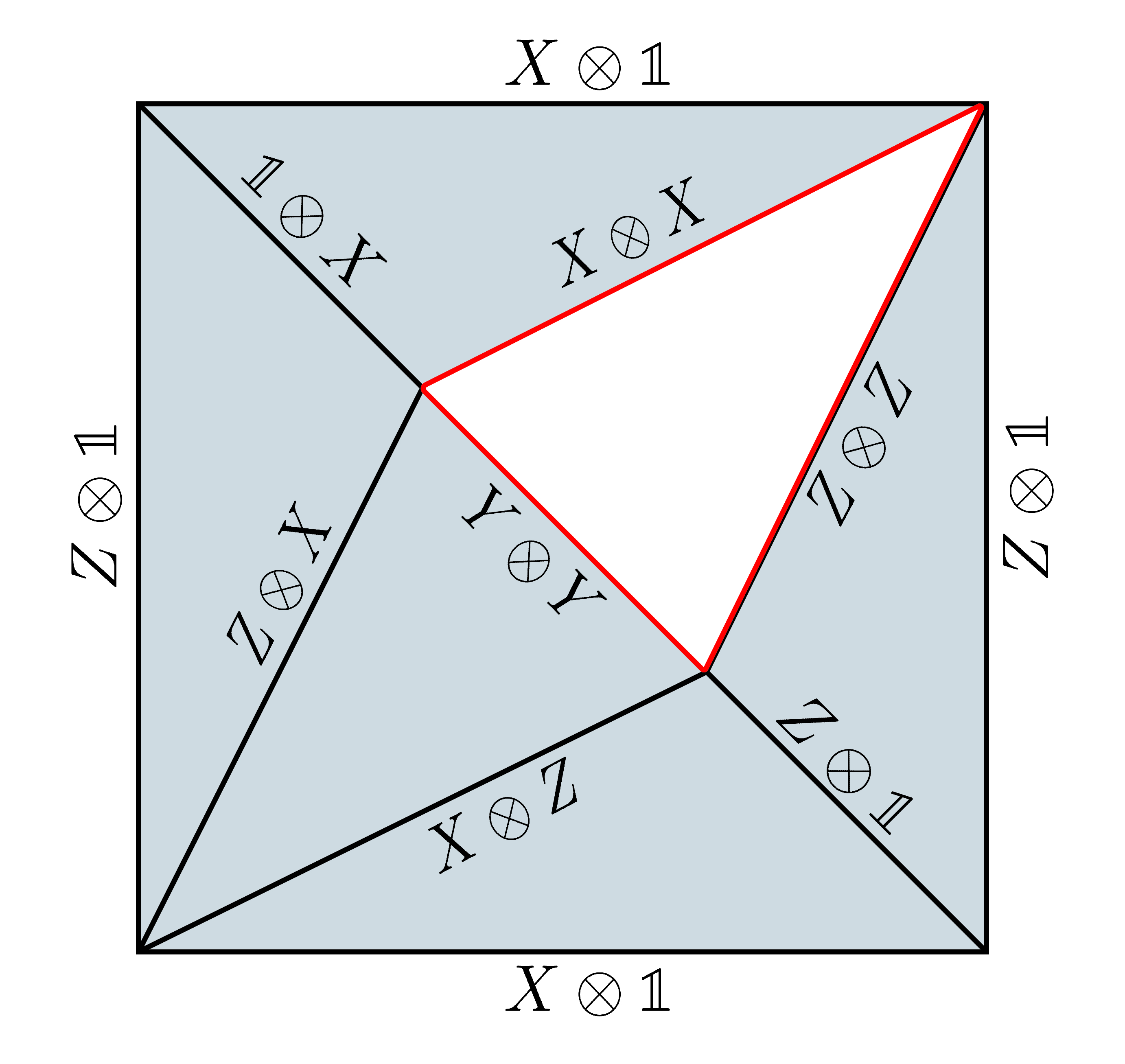}
  \caption{}
  \label{fig:Mermin-sq-relative}
\end{subfigure}
\caption{ (a) Torus representing the state-independent Mermin square. (b) Punctured torus representing the state-dependent Mermin square. The triangle with $\beta=1$ is removed.
}
\label{fig}
\end{figure}
The observables labeling the edges of the triangles, say denoted by $A,B,C$, satisfy   
\begin{equation}\label{eq:beta}
ABC = (-1)^{\beta(A,B)} \one. 
\end{equation}
The function $\beta$ which determines the sign gives rise to  a cohomology class $[\beta]$ that lives on the underlying space  constructed by gluing the triangles.
The fact that there is no way to assign eigenvalues $\pm 1$ to each observable consistent with the algebraic relations given by Eq.~(\ref{eq:beta}) translates into the statement that the cohomology class $[\beta]$ is nonzero.
This observation can be generalized to provide a topological basis for state-independent contextuality proofs using the language of chain complexes from algebraic topology.

In this topological formalism  it is also possible to discuss certain types of state-dependent proofs of contextuality. 
For example, removing the context on which $\beta$ is nonzero results in a scenario that is contextual with respect to the Bell state $|\Psi\rangle=(|00\rangle + |11\rangle)/\sqrt{2}$, the simultaneous eigenstate of the observables $X\otimes X$, $Y\otimes Y$ and $Z\otimes Z$ with the associated eigenvalues $1$, $-1$ and $1$; respectively. In this case one can modify the cocycle with respect to these eigenvalues to introduce a new cocycle $\beta_\Psi$. 
Topologically the underlying space $X$ is a punctured torus, that is a torus minus a point. Now the associated cohomology class $[\beta_\Psi]$, which lives on the space $X$, is a relative cohomology class defined with respect to the boundary $\partial X$. 
Another well-known example is the  Mermin star scenario (without the nonlocal context), which is contextual with respect to the Greenberger-Horne-Zeilinger (GHZ) state $|\Psi\rangle = (|000\rangle+|111\rangle)/\sqrt{2}$. These types of state-dependent contextuality proofs can be formalized using the language of relative chain complexes \cite{Coho,Fraction2018}.

\subsection{Including probability distributions}
\label{sec:IncProDist}

Although the formalism described in \cite{Coho} can deal with certain types of state-dependent contextuality proofs, neither the quantum state nor the measurement statistics enter the picture directly. Our approach 
\co{fills} this gap.

Before going into the details of the formalism we will illustrate the basic idea in the special case of a bipartite scenario. Consider two parties, Alice and Bob, each performing a quantum measurement described by the observables $A$ and $B$ acting on a Hilbert space $\hH$ with eigenvalues $(-1)^a$ where $a\in \ZZ_2=\set{0,1}$. 
\begin{figure}[h!]
\begin{center}
\includegraphics[width=.3\linewidth]{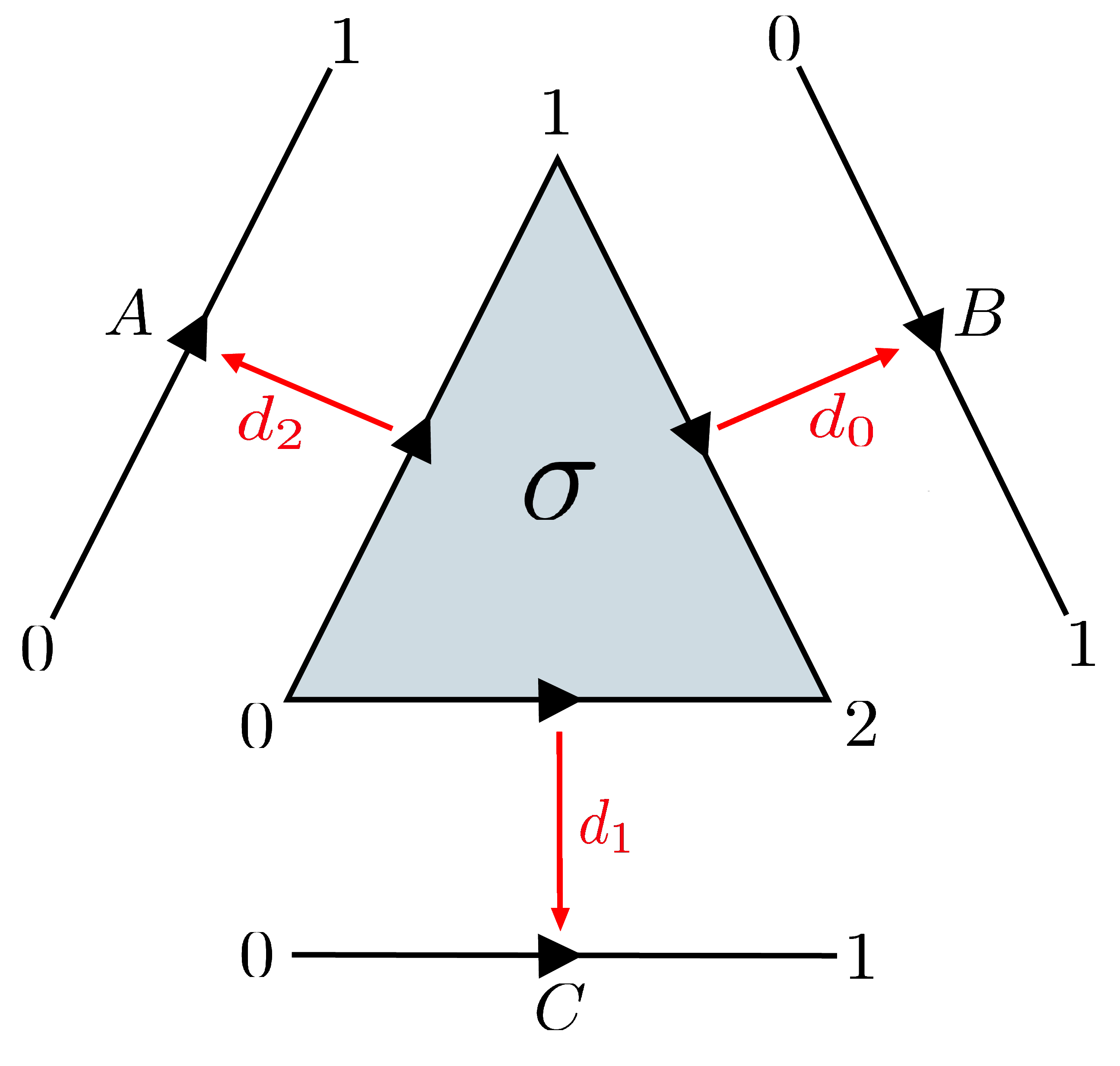} 
\end{center} 
\caption{A triangle whose edges are labeled by commuting observables. 
}
\label{fig:Triangle-faces}
\end{figure} 
At this point we are not assuming that the observables are local. 
We only assume that the measurements can be performed simultaneously, i.e. the observables commute.
Following the topological approach described in Section \ref{sec:TopProofCont} these observables can be organized into a triangle $\sigma$, or $2$-simplex as in Fig.~(\ref{fig:Triangle-faces}). 
The faces of the triangle are therefore denoted by  $d_0\sigma$ labeled by $B$, $d_1\sigma$ labeled by $C=AB$ and $d_2\sigma$ labeled by $A$. These are the edges, i.e. the $1$-simplices, constituting the boundary of the triangle as in Fig.~(\ref{fig:Triangle-faces}).
To clarify our topological conventions two remarks are in order:
(1) Notice that this approach models an individual measurement as a one-dimensional object (i.e., $1$-simplex) while a context consisting of more than one measurement spans at least a two-dimensional space (i.e., $2$-simplex). This increases the working dimension of our spaces, as compared to the sheaf \cite{abramsky2011sheaf} and (hyper)graph theoretic \cite{cabello2010non}\cite{Acin} approaches, thus facilitating a richer connection to topology.
(2) In general, for a 
\co{pair}
 of commuting observables $A$ and $B$\co{, and a complex valued function $F$ of two variables,} quantum mechanics asserts that one can always find a third commuting observable $C = F(A,B)$. Following \cite{Coho}, which itself drew from the state-independent contextuality proofs of Mermin \cite{mermin1993hidden}, we focus our attention on those compatible observables that are formed by product relations $AB$. In the context of MBQC such observables are more aptly called \emph{inferables}; see e.g., \cite{raussendorf2016cohomological}.

A pair of observables defines a projective measurement. 
Let $\Pi_{AB}^{ab}$ denote the projector onto the simultaneous eigenspace of $A$ and $B$ with eigenvalues $(-1)^a$ and $(-1)^b$; respectively. Marginalizing over each outcome gives the projectors for the observables $A$, $B$ and $C$:
\begin{equation}\label{eq:marginal-proj}
\Pi_{A}^a = \sum_{b} \Pi_{AB}^{ab}, \;\;\;
\Pi_{B}^b =  \sum_{a} \Pi_{AB}^{ab}, \;\;\;
 \Pi_{C}^c =  \sum_{a+b=c} \Pi_{AB}^{ab}
\end{equation} 
where $\Pi_A^a$ projects onto the $(-1)^a$ eigenspace, and similarly for the other observables. 
We can organize this information in a topological way by interpreting outcomes as geometric objects. A pair of outcomes $(a,b)$ represents a triangle and to each such triangle we associate the projector $\Pi_{AB}^{ab}$. 
Assembling the projectors $\Pi_{AB}^{ab}$ into a projective measurement $\Pi_{AB}:\ZZ_2^2 \to \Proj(\hH)$ and $\Pi_A^a$ (similarly $\Pi_B^b$ and $\Pi_C^c$) into a projective measurement $\Pi_{A}:\ZZ_2 \to \Proj(\hH)$ Eq.~(\ref{eq:marginal-proj}) can be written as face relation  in  Fig.~(\ref{fig:Triangle-proj-faces}):
$$
d_i \Pi_{AB} = \left\lbrace
\begin{array}{ll}
\Pi_B & i=0 \\
\Pi_C & i=1 \\
\Pi_A & i=2.
\end{array}
\right.
$$
\begin{figure}[h!]
\centering
\begin{subfigure}{.49\textwidth}
  \centering
  \includegraphics[width=.7\linewidth]{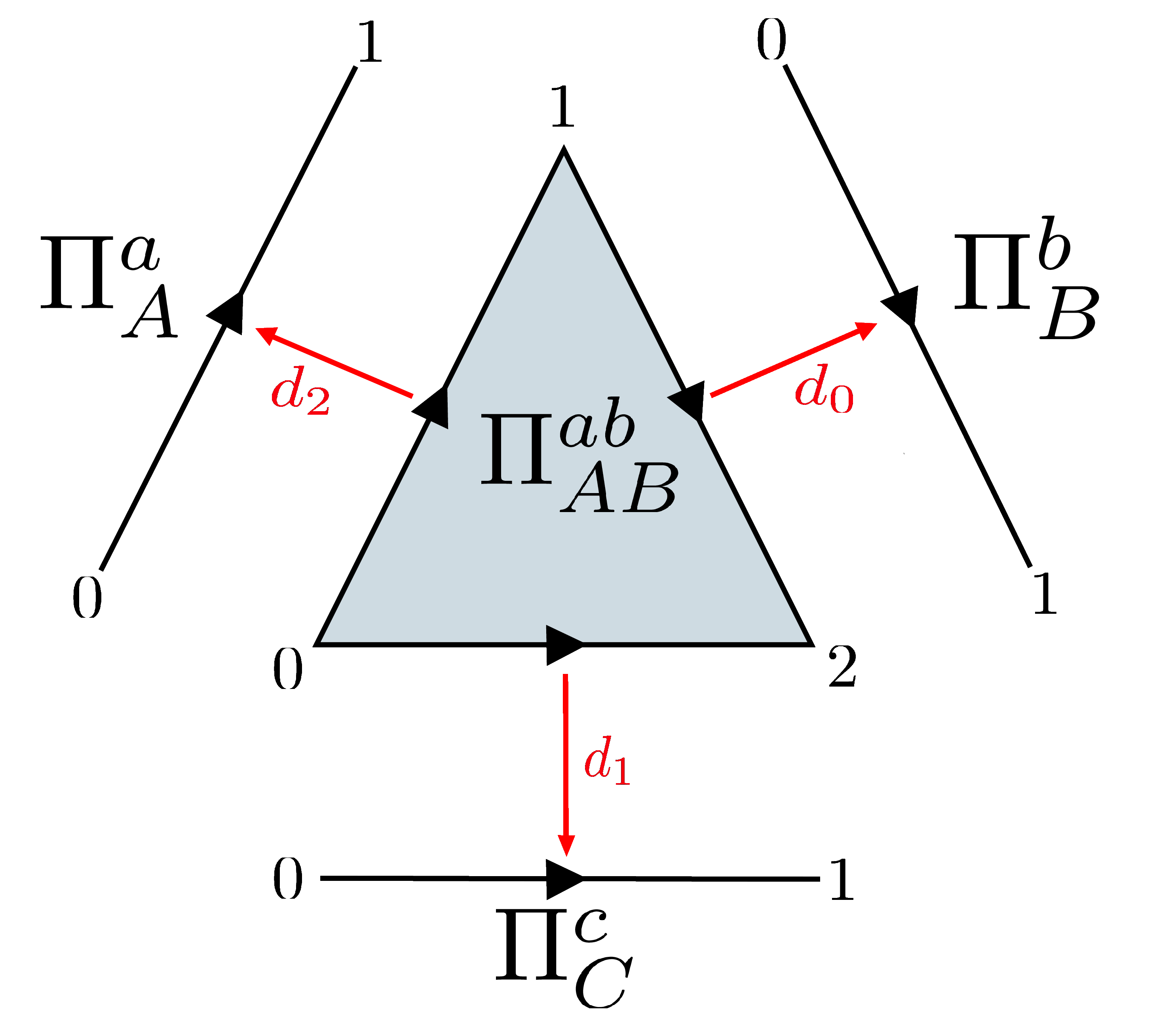}
  \caption{}
  \label{fig:Triangle-proj-faces}
\end{subfigure}%
\begin{subfigure}{.49\textwidth}
  \centering
  \includegraphics[width=.7\linewidth]{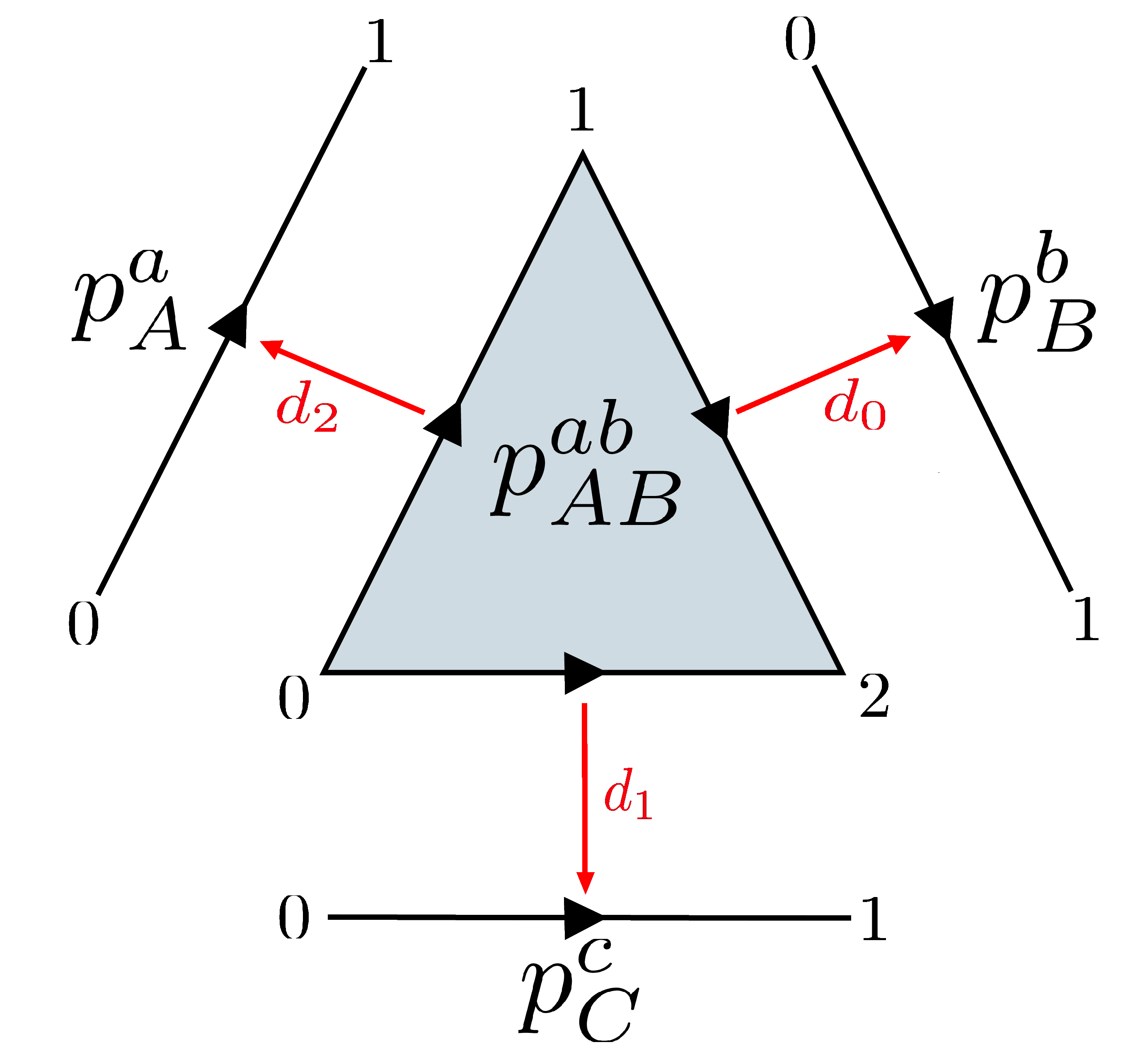}
  \caption{}
  \label{fig:Triangle-prob-faces}
\end{subfigure}
\caption{(a) The pair $(A,B)$ of observables is represented by a projective measurement on the collection of triangles labeled by the pairs $(a,b)\in \ZZ_2^2$ of outcomes. 
The outcome associated to the $d_1$ face is given by $c=a+b$. (b) The  probability distribution on the collection of triangles is given by the Born rule.
}
\label{fig}
\end{figure}

To pass to probabilities we pick a quantum state $\rho$ and apply the Born rule. 
 The probability  
 $p_{AB}^{ab} = \Tr(\rho \Pi_{AB}^{ab})$  is assigned to the triangle labeled by the pair $(a,b)$ of outcomes. 
 More precisely,  we regard $p_{AB}^{ab}$ as a probability distribution on the collection of triangles labeled by the outcomes $(a,b)$. 
We have the following marginals
\begin{equation}\label{eq:marginal-prop}
p_{A}^a = \sum_{b} p_{AB}^{ab}, \;\;\;
p_{B}^b =  \sum_{a} p_{AB}^{ab}, \;\;\;
p_{C}^c =  \sum_{a+b=c} p_{AB}^{ab}
\end{equation}   
where $p_A^a = \Tr(\rho \Pi_A^a)$, and similarly for the other observables.  
Face relations in  Fig.~(\ref{fig:Triangle-prob-faces})
encode these marginals 
$$
d_i p_{AB} = \left\lbrace
\begin{array}{ll}
p_B & i=0 \\
p_C & i=1 \\
p_A & i=2
\end{array}
\right.
$$
where $p_{AB}:\ZZ_2^2 \to \nnegR$ is the probability distribution associated to $p_{AB}^{ab}$ and $p_{A}:\ZZ_2 \to \nnegR$ (similarly $p_B$ and $p_C$) is the probability distribution associated to $p_A^a$.

\subsection{The CHSH scenario}\label{sec:chsh-scenario}
 
A bipartite Bell scenario consists of two parties, Alice and Bob, where Alice performs a measurement $x$ with outcome $a$ and Bob performs a measurement $y$ with outcome $b$. The joint statistics observed by Alice and Bob can be summarized by a conditional probability distribution $p_{xy}^{ab}$.  When there are multiple measurement choices for Alice and Bob this results in a collection of such probabilities that are subject to compatibility requirements called nonsignaling conditions; see e.g., \cite{masanes2006ns}.
For instance, in a CHSH scenario where Alice can perform the measurements $x_0$ and $x_1$ and Bob can perform $y_0$ and $y_1$, obtaining probabilities $p_{x_{i}y_{j}}^{ab}$ for each pairing $\{x_{i},y_{j}\}$, the corresponding nonsignaling conditions are given by
\begin{equation}\label{eq:nonsignaling-cond}
p_{x_i}^a= \sum_{b} p_{x_iy_0}^{ab} = \sum_{b} p_{x_iy_1}^{ab}  \; \text{ and }\; p_{y_j}^b=\sum_{a} p_{x_0y_j}^{ab} = \sum_{a} p_{x_1y_j}^{ab}~.
\end{equation}
These expressions ensure that Alice's choice of measurement does not affect the outcome of Bob's measurement, and \emph{vice versa}. If Alice and Bob are far apart (i.e., they are causally disconnected), then the nonsignaling conditions prohibit superluminal communication between them, in accordance with the theory of special relativity.


%
%

In our approach the nonsignaling conditions are depicted quite naturally. Using the topological representation of Section \ref{sec:IncProDist} each $p_{x_iy_j}^{ab}$ can be placed on a triangle and the nonsignaling conditions can be encoded as the matching of the face relations at the intersections of the triangles. 
To see this we start with the simpler case of
a single measurement per party.
We would like to represent the probabilities $p_{xy}^{ab}$ in a topological fashion as in the previous section. First we associate a triple of compatible measurements with the faces of the triangle. More specifically, to the faces $d_{2}\sigma$ and $d_{0}\sigma$ we assign the measurements $x$ and $y$, respectively. To complete the picture, however, we must also associate a measurement with the $d_{1}\sigma$ face compatible with $x$ and $y$.
%
\begin{figure}[h!]
\centering
\begin{subfigure}{.49\textwidth}
  \centering
  \includegraphics[width=.7\linewidth]{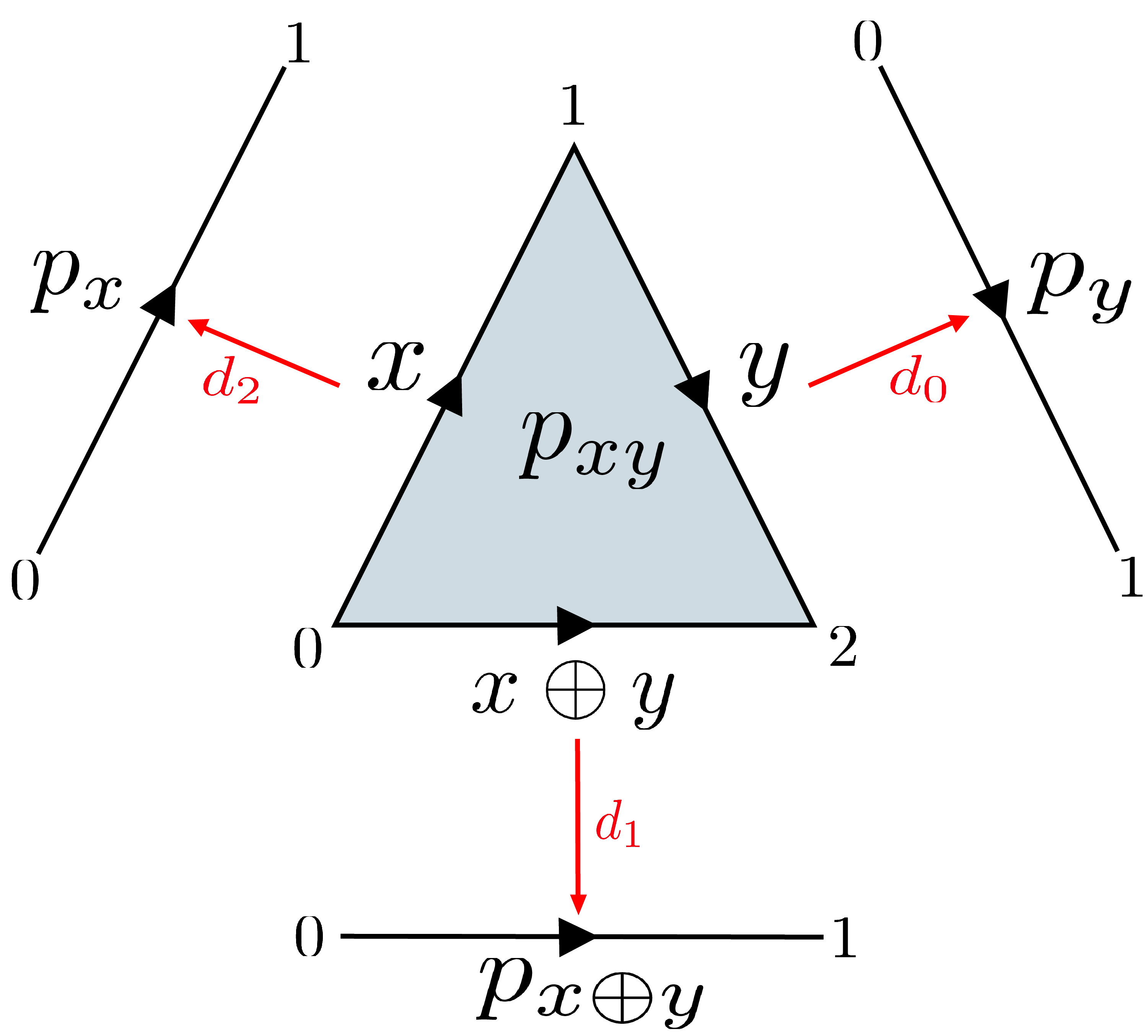}
  \caption{}
  \label{fig:Triangle-meas-faces-a}
\end{subfigure}%
\begin{subfigure}{.49\textwidth}
  \centering
  \includegraphics[width=.85\linewidth]{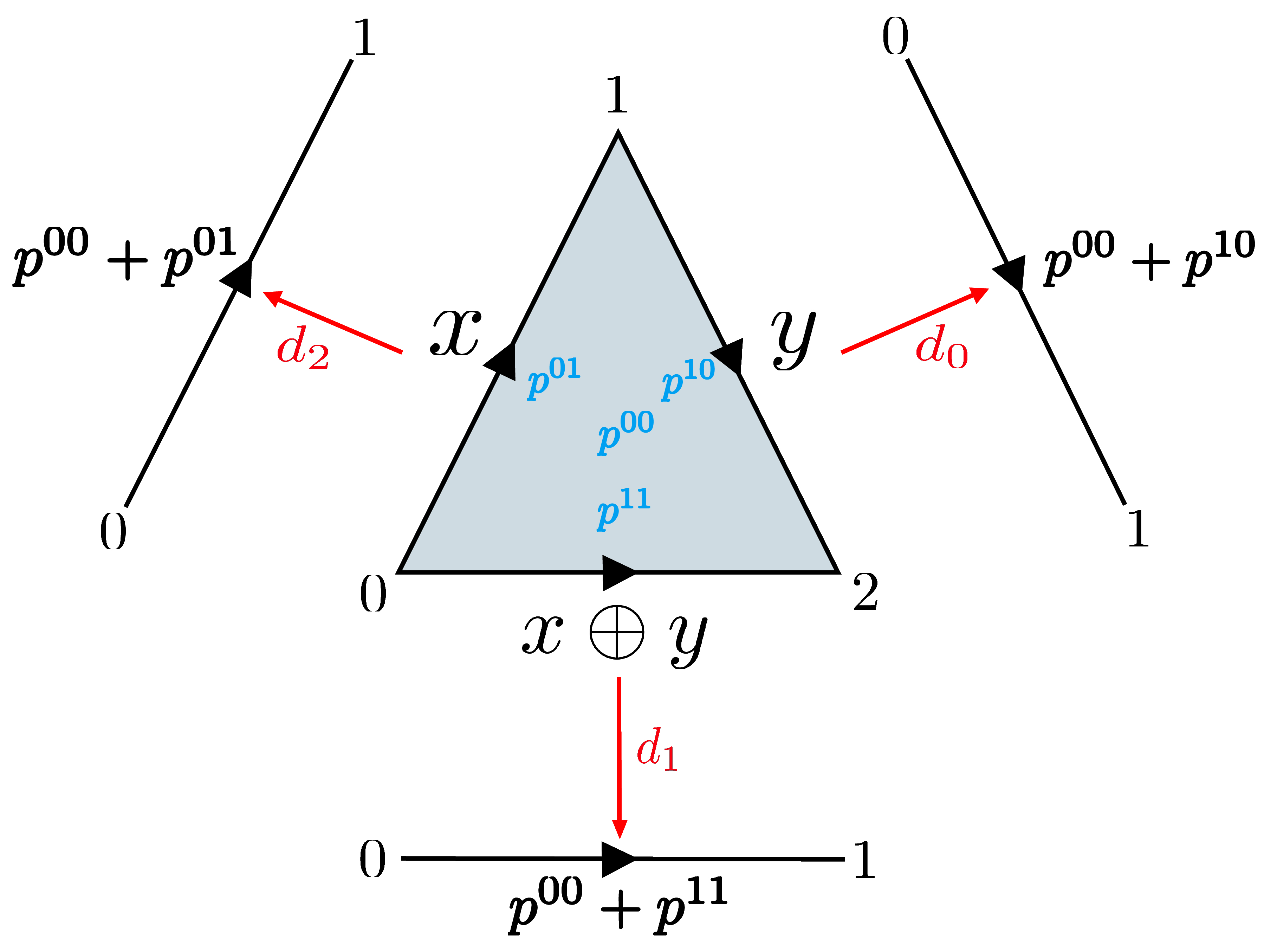}
  \caption{}
  \label{fig:Triangle-meas-faces-b}
\end{subfigure}
\caption{(a) Face relations of a distribution on a triangle. (b) A convenient way to represent the probabilities $p^{ab}$ on the triangle. \rev{Marginals on each edge can be obtained by summing $p^{00}$ with the probability close to that edge: $p_x^0=p^{00}+p^{01}$, $p_y^0=p^{00}+p^{10}$ and $p_{x\oplus y}^0=p^{00}+p^{11}$.}
}
\label{fig}
\end{figure}
A measurement 
is said to be \emph{compatible} with both $x$ and $y$ if its outcome
can be obtained by a classical post-processing of the outcomes of $x$ and $y$ individually.
This definition of compatibility draws on the operational definition of joint measurability introduced in \cite{krishna2017deriving,liang2011specker}. Similar notions of compatibility have also been expressed in \cite{raussendorf2016cohomological}.  
For our purposes we take $x\oplus y$ to be the XOR of $x$ and $y$ so that $x\oplus y$ produces the outcome $c=a+b \mod 2$ when the outcome of $x$ happens to be $a$ and the outcome of $y$ happens to be $b$.  Such a choice is well-motivated on physical grounds as the XOR produces precisely that statistic which is relevant for many studies of contextuality; e.g., appearing through correlation functions \cite{brunner2014bell,cleve2004nonlocal}, linear side-processing in MBQC \cite{raussendorf2016cohomological}, as well as in so-called all-versus-nothing arguments \cite{abramsky2017avn}. Later we will see that the XOR outcomes can be modeled by 
the nerve space (Definition \ref{def:NZd}). 

Probability distributions associated with the measurement $(x,y) $ are represented as in Fig.~(\ref{fig:Triangle-meas-faces-a}).   In this representation $p_{xy}$ is the probability distribution on the collection of triangles labeled by pairs $(a,b)\in  \ZZ_2^2$. The distributions on the edges $p_x$, $p_y$ and $p_{x\oplus y}$ are obtained by marginalization in the usual way:
\begin{equation}\label{eq:prob-mar-xy}
p_{x}^a = \sum_{b} p_{xy}^{ab}, \;\;\;
p_{y}^b =  \sum_{a} p_{xy}^{ab}, \;\;\;
p_{x\oplus y}^c =  \sum_{a+b=c} p_{xy}^{ab},
\end{equation}
or equivalently using the face relations
$$
d_i p_{xy} = \left\lbrace
\begin{array}{ll}
p_y & i=0 \\
p_{x\oplus y} & i=1 \\
p_x & i=2.
\end{array}
\right.
$$

%

For the CHSH scenario we will take a triangle for each pair $(x_i,y_j)$ of measurements and glue them in a way that the topology encodes the nonsignaling conditions given in Eq.~(\ref{eq:nonsignaling-cond}). There are different ways of doing this which produces different spaces at the end. We will consider the one given in Fig.~(\ref{fig:Bell-punctured-torus}). 
The edges labeled by $x_0$ are identified, and similarly the ones labeled by $x_1$. The resulting space is a punctured torus $T^\circ$.
Note that we are simply removing the two triangles (diamond shape) in the middle of the torus representing the   Mermin square scenario in Fig.~(\ref{fig:Mermin-sq-beta}).  (Topologically this space is equivalent to the picture in Fig.~(\ref{fig:Mermin-sq-relative})). 
The collection of contexts labeling the triangles are given by 
\begin{equation}\label{eq:Bell-contexts}
C_{ij} = \set{x_i,y_j} \;\;\text{ where }i,j\in \ZZ_2. 
\end{equation}
The nonsignaling (compatibility) conditions for $p$ can be rewritten in a topological way
\begin{equation}\label{eq:nonsignaling-top}
\begin{aligned}
d_0 p_{y_0x_0} &= p_{x_0} = d_2 p_{x_0y_1} \\
d_0 p_{y_0x_1} &= p_{x_1} =d_2 p_{x_1y_1}   \\
d_2 p_{y_0x_0} &= p_{y_0} =d_2 p_{y_0x_1}    \\
d_0 p_{x_0y_1} &= p_{y_1} =d_0 p_{x_1y_1}. \\
\end{aligned}
\end{equation}
The orientation on the boundary of each triangle determines the label for the contexts (compare with Fig.~(\ref{fig:Triangle-meas-faces-a})), e.g. $p_{y_0x_{0}}^{ab}$ is the probability for the outcome assignment $(y_0,x_{0})\mapsto (b,a)$, and so on.
\begin{figure}[h!]
\centering
\begin{subfigure}{.49\textwidth}
  \centering
  \includegraphics[width=.6\linewidth]{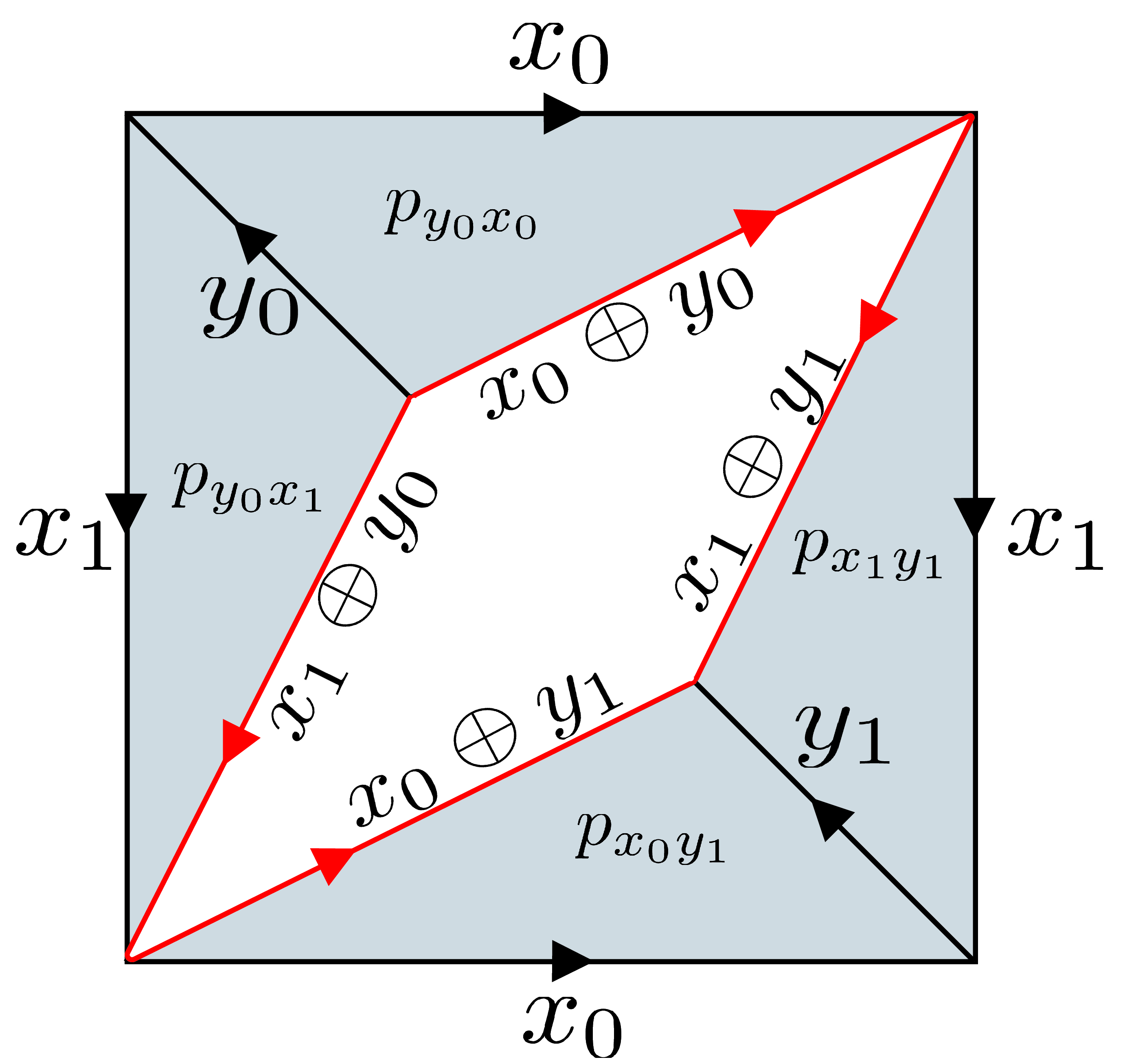}
  \caption{}
  \label{fig:Bell-punctured-torus}
\end{subfigure}%
\begin{subfigure}{.49\textwidth}
  \centering
  \includegraphics[width=.6\linewidth]{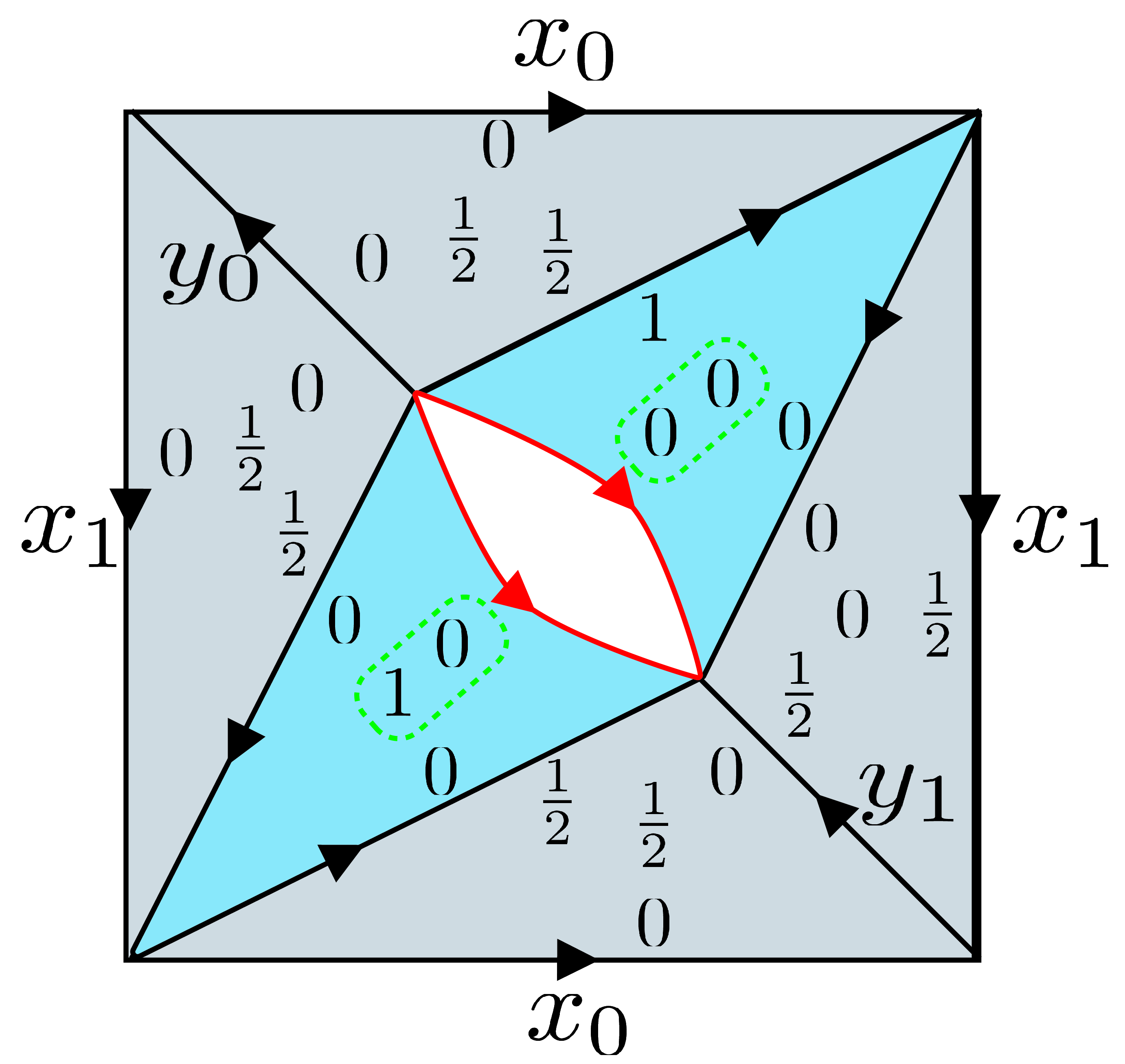}
  \caption{}
  \label{fig:Bell-punctured-torus-ext}
\end{subfigure}
\caption{ (a) The CHSH scenario organized into a surface topologically equivalent to a punctured torus. The edges labeled by $x_0$ (and $x_1$) are identified. (b) The PR box on the punctured torus cannot be extended to a distribution on the torus\rev{; see Fig.~(\ref{fig:Triangle-meas-faces-b}) for notation.}
}
\label{fig}
\end{figure}

\subsection{Contextuality in the new framework}\label{sec:ContNewFramework}

The full definition of contextuality will require the precise notion of a space (given in Section \ref{sec:SpaceOfMeas}) on which our formalism is based. For now we will give an idea how contextuality is formulated using the topological pictures introduced above.
Let us start with deterministic distributions. Those are given by delta distributions associated with an outcome assignment on each measurement. For example, in the CHSH scenario considered in  Fig.~(\ref{fig:Bell-punctured-torus}) deterministic distributions are of the form $\delta^s$ where $s$ is a function from $\set{x_0,x_1,y_0,y_1}$ to $\ZZ_2$. 
By definition \cite{budroni2021quantum, pitowsky1989quantum}  a nonsignaling distribution $p$ is noncontextual if it is a probabilistic mixture of deterministic distributions:
\begin{equation}\label{eq:noncontextual-prob-mix}
p = \sum_{s} \lambda(s)\, \delta^s
\end{equation}
where $\lambda(s)\geq 0$ and $\sum_s \lambda(s) =1$. 

\Ex{[Triangle scenario]\label{ex:SingleTriangle}
{\rm
Any distribution on a scenario consisting of two measurements $\set{x,y}$ as in Fig.~(\ref{fig:Triangle-meas-faces-a}) is noncontextual: 
$$  
\includegraphics[width=.35\linewidth]{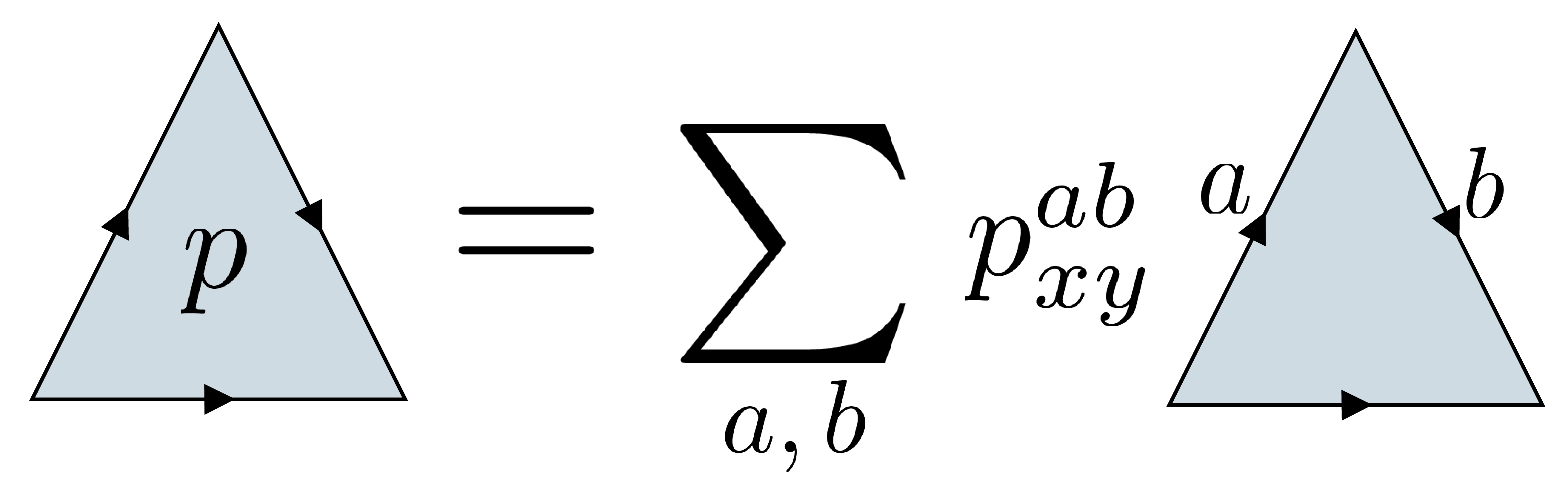}  
$$ 
}}

\rev{
\Ex{[New contextual scenarios]\label{ex:small-contextual}
{\rm
The triangle scenario can be modified to obtain a contextual scenario  by gluing the $d_0$ and $d_1$ faces. This forces the probabilities $p=(p^{00},p^{01},p^{10},p^{11})$ to satisfy the additional relation $p^{10}=p^{11}$ which is enforced by the topological identification. 
For this scenario $p$ is contextual if and only if $p^{00}+p^{01}<1$. For example, $(0,0,1/2,1/2)$ is contextual:
$$ 
\includegraphics[width=.5\linewidth]{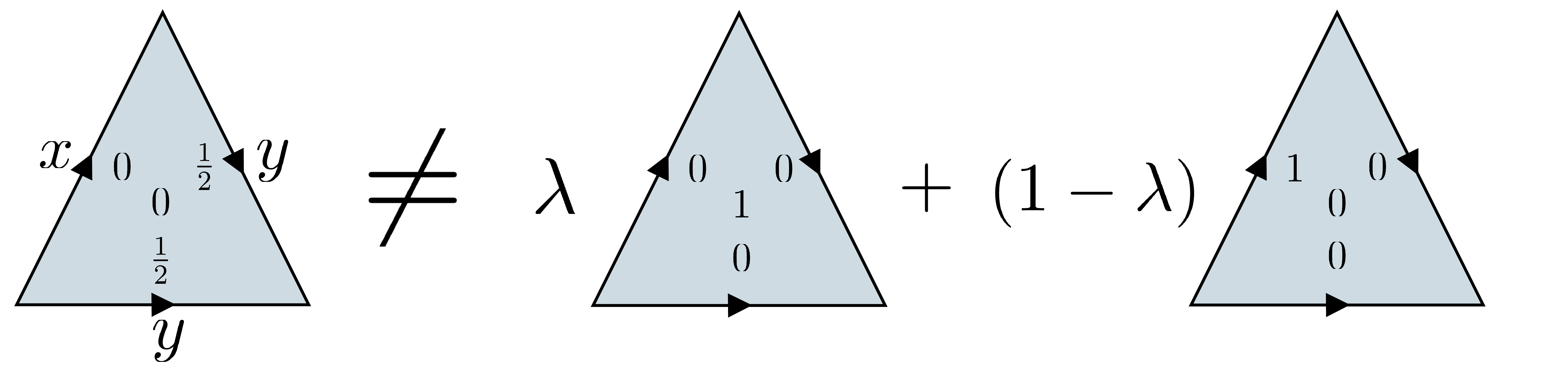}  
$$ 
\co{A noncontextual distribution is one with $p^{00}=\lambda$, $p^{01}=1-\lambda$, and $p^{11}=p^{10}=0$ for some $0\leq \lambda\leq 1$.}
In general  topological identifications can be used to construct  new scenarios that impose additional constraints satisfied by the probabilities.
These kinds of   scenarios that arise in our framework cannot be realized directly as nonsignaling distributions.  In this paper we will not attempt for a systematic study of these new kinds of scenarios, such a  study will appear elsewhere.
}}

}

\Ex{[Diamond scenario]{\rm \label{ex:TwoTriangles}
A slightly harder example is the scenario consisting of four measurements $\set{x_0,y_0,x_1,y_1}$ with the nonsignaling condition $d_1p=d_1q$, or equivalently
$$
\sum_{a+b=e} p^{ab}_{x_0y_0} = \sum_{c+d=e} q^{cd}_{x_1y_1},\;\;\;\; e\in \ZZ_2.
$$
In this case there exists probabilities $\lambda(abc)\in [0,1]$ satisfying the equation:
$$ 
\includegraphics[width=.4\linewidth]{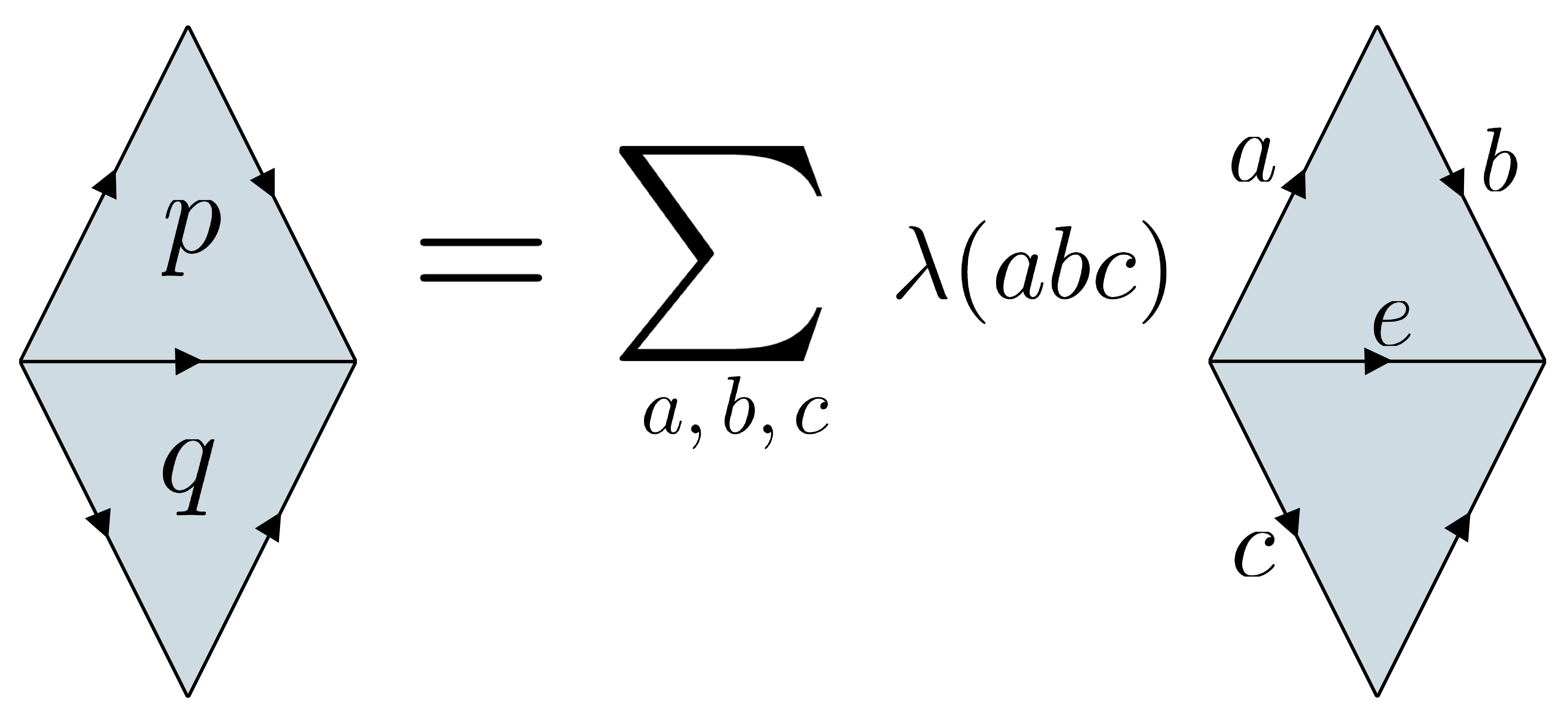}  
$$ 

\noindent One solution can be given as follows:
\begin{equation}\label{eq:lambda}
\lambda(abc) = \left\lbrace
\arraycolsep=3.5pt\def\arraystretch{2}
\begin{array}{ll}
 p^{aa} q^{cc}/(p^{00}+p^{11})   &\quad e=0 \\
 p^{a\bar a} q^{c\bar c}/(p^{01}+p^{10})  &\quad e=1
\end{array}
\right.
\end{equation}
where $\bar a$ stands for $a+1$, and similarly for $\bar c$. 
}
} 

\Ex{[Mermin square scenario]\label{ex:MerminStateDep}
{\rm
Next let us consider a contextual example, such as the state-dependent version of the Mermin square scenario given in Fig.~(\ref{fig:Mermin-sq-relative}). To see that the nonsignaling distribution obtained from the quantum mechanical system consisting of the given observables and the Bell state is contextual it suffices to look at the restriction $p|_{\partial X}$ to the boundary. This is a deterministic distribution specified by $(\delta^e,\delta^f,\delta^g)$ with $e+f+g=1$ as dictated by the eigenvalues on the Bell state. However, a noncontextual   distribution satisfies the property that $e+f+g=0\mod 2$ 
as a consequence of the following equation:
$$ 
\includegraphics[width=.5\linewidth]{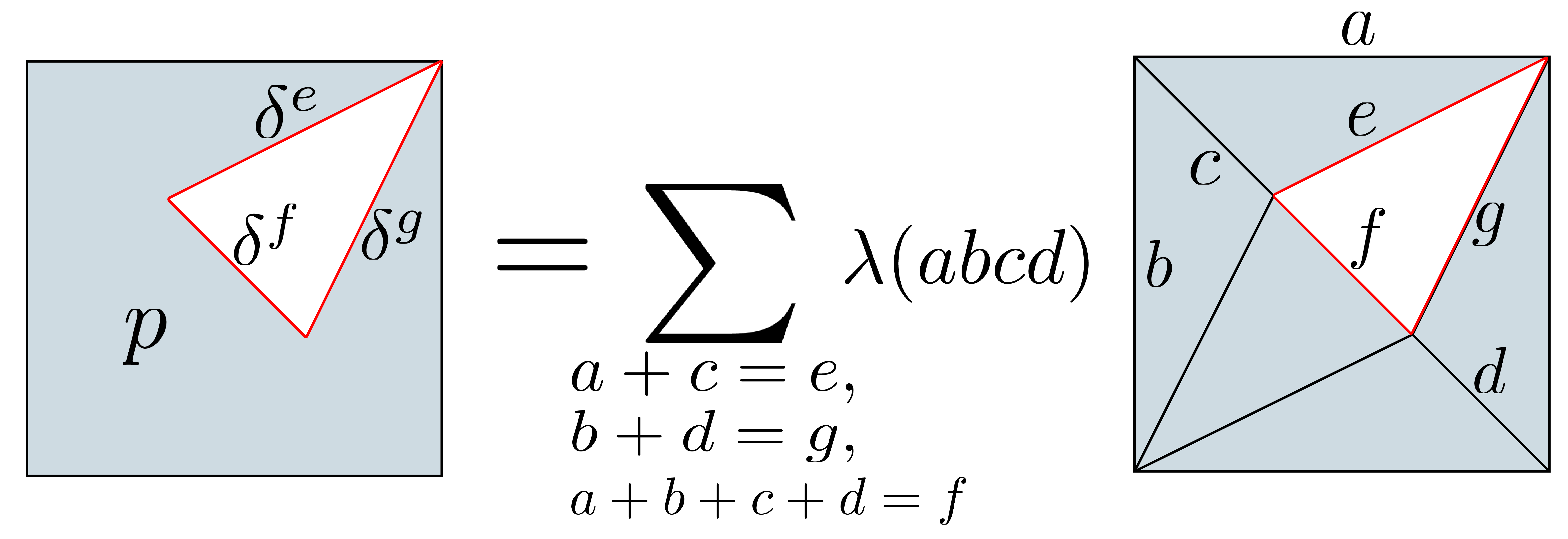}  
$$ 
\noindent 
On the right-hand side the outcomes assigned to the torus satisfy the relations $e=a+c$, $g=b+d$ and $f=(b+c)+(a+d)$.
}
}

\Ex{[CHSH scenario]{\rm \label{ex:Bell222}
Contextuality analysis of the CHSH scenario in  Fig.~(\ref{fig:Bell-punctured-torus}) is more complicated but also well-known. There is a characterization of noncontextual nonsignaling distributions for this scenario due to Fine \cite{fine1982hidden,fine1982joint}.  
The distribution $p|_{\partial X}$, which is the restriction to the boundary of the punctured torus in   Fig.~(\ref{fig:Bell-punctured-torus}), consists of the distributions $p_{x_i\oplus y_j}$ for the measurements $x_i\oplus y_j$ where $i,j\in \ZZ_2$. 
According to Fine's theorem \cite{fine1982hidden,fine1982joint} a nonsignaling distribution on the CHSH scenario  is noncontextual if and only if the following CHSH inequalities \cite{chsh69} hold
\begin{equation}\label{eq:CHSH234}
\begin{aligned}
0\leq p_{x_0\oplus y_0}^0 + p_{x_0\oplus y_1}^0 + p_{x_1\oplus y_0}^0 - p_{x_1\oplus y_1}^0 \leq 2&\\
0\leq p_{x_0\oplus y_0}^0 + p_{x_0\oplus y_1}^0 - p_{x_1\oplus y_0}^0 + p_{x_1\oplus y_1}^0 \leq 2&\\
0\leq p_{x_0\oplus y_0}^0 - p_{x_0\oplus y_1}^0 + p_{x_1\oplus y_0}^0 + p_{x_1\oplus y_1}^0 \leq 2&\\
0\leq -p_{x_0\oplus y_0}^0 + p_{x_0\oplus y_1}^0 + p_{x_1\oplus y_0}^0 + p_{x_1\oplus y_1}^0 \leq 2&.
\end{aligned}
\end{equation}
(These inequalities are equivalent to the ones in terms of the correlations under the  standard relationship \cite[Section 4a]{brunner2014bell} between XOR probabilities and correlation functions.)
We will carry on a careful analysis of this scenario in Section 
\co{\ref{sec:TopProofFine}}
as an application of our framework. Therein 
we use another space (left figure in Fig.~(\ref{fig:S-H})) that realizes this scenario to give a topological proof of Fine's theorem.
Then using the current realization, given by the punctured torus, we characterize noncontextuality in terms of an extension condition. To illustrate this latter point consider a  Popescu--Rohrlich (PR) box \cite{pr94} regarded as a distribution on the punctured torus.
This distribution cannot be extended to a distribution on the torus as demonstrated in   Fig.~(\ref{fig:Bell-punctured-torus-ext}) (where we use the convenient representation given in Fig.~(\ref{fig:Triangle-meas-faces-b})):  Let $p_+$ be the distribution on the context $(x_0\oplus y_0, x_1\oplus y_1)$ and $p_-$ be the one on $( x_1\oplus y_0,x_0\oplus y_1)$. The marginals on the $d_0$ and $d_2$ faces implies that $p_+$ is the deterministic distribution $\delta^{10}$ assigning $(x_0\oplus y_0, x_1\oplus y_1)\mapsto (1,0)$ and $p_-=\delta^{00}$ assigning $( x_1\oplus y_0,x_0\oplus y_1)\mapsto (0,0)$. But then the marginals on the $d_1$ faces do not match since $d_1p_+=0\neq 1 = d_1p_-$. 
This is a general feature of contextual distributions on the CHSH scenario. In Corollary \ref{cor:extension-torus} we show that a distribution on the punctured torus extends to a distribution on the torus if and only if it 
is noncontextual.

}
}

\Rem{
{\rm
Distributions on spaces as illustrated by these examples generalize the theory of nonsignaling distributions and adds an extra layer of flexibility in choice of an underlying space.
In the appendix, Theorem \ref{thm:ComparisonToSheaf} shows that the  sheaf-theoretic formulation of nonsignaling distributions embed into our simplicial framework. For example, the CHSH scenario discussed in Example \ref{ex:Bell222} when regarded as a discrete scenario can be realized as a distribution on a $1$-dimensional space; see Fig.~(\ref{fig:Bell-discrete}). However, by realizing it over a $2$-dimensional space such as the punctured torus as in  Fig.(\ref{fig:Bell-punctured-torus}), or on a square as in (left) Fig.~(\ref{fig:S-H}) reveals more intricate features of contextuality.
This topological freedom plays an important role in the topological proof of Fine's theorem (Theorem \ref{thm:Fines-theorem}) and the characterization of contextuality in terms of extensions (Corollary \ref{cor:extension-torus}). 
}
}

\section{Distributions on spaces and contextuality}
\label{sec:DistSpace}
   
In Section \ref{sec:MotIdea} we explained in an informal way how to interpret nonsignaling distributions as distributions on spaces.
Here we introduce our simplicial framework in a more rigorous fashion. 
We define the notion of a simplicial scenario which consists of a pair of spaces representing both measurements and outcomes.
Then we introduce simplicial distributions on these scenarios generalizing the nonsignaling distributions defined for {\it discrete scenarios} (see Appendix \ref{sec:discrete-scenarios}).
Contextuality defined at this level of generality subsumes the usual notion for the discrete case. Central to our framework is the theory of simplicial sets.
They provide combinatorial descriptions of topological spaces and are the main objects of study in modern homotopy theory \cite{goerss2009simplicial}.



\subsection{Space of measurements}
\label{sec:SpaceOfMeas}

In  the simplicial framework we will work with spaces of measurements represented by simplicial sets.
In this section we introduce simplicial sets as combinatorial models of spaces and describe some of the measurement spaces that appear in Section \ref{sec:MotIdea}.

The topological $n$-simplex consists of the points $(t_0,t_1,\cdots,t_n)$ in $\RR^{n+1}$ such that each $t_i\geq 0$ and $\sum_i t_i=1$. Such a simplex has $(n+1)$-faces. Intuitively a simplicial set is a collection of ``abstract" $n$-simplices representing a topological $n$-simplex together with face maps, telling us how to glue these simplices; and the degeneracies  allowing us the extra freedom to collapse some of the irrelevant simplices. More formally, a
 {\it simplicial set} $X$ consists of the following data:
\begin{itemize}
\item A sequence of sets  $X_0,X_1,\cdots,X_n,\cdots$ for $n\geq 0$ where
each $X_n$ represents the set of $n$-simplices.
\item Face maps
$$
d_i:X_n \to X_{n-1}\;\;\;\text{ for } 0\leq i\leq n
$$
representing the faces of a given simplex.
\item Degeneracy maps
$$
s_j:X_n \to X_{n+1}\;\;\;\text{ for } 0\leq j\leq n
$$
representing the degenerate simplices.
\end{itemize}
The face and the degeneracy maps are subject to the {\it simplicial identities} given in the appendix, Eq.~(\ref{eq:simplicial-identities}).
See Appendix \ref{sec:app-SimplicialIdentities} for more on these identities and \cite{friedman2008elementary} for a reader-friendly introduction to simplicial sets. 
An $n$-simplex is called {\it degenerate} if it  lies in the image of a degeneracy map, otherwise it is called {\it nondegenerate}. Geometrically only the nondegenerate simplices are relevant.
Among the nondegenerate simplices there are ones that are not a face of another nondegenerate simplex. 
Those simplices we will refer to as {\it generating simplices}. 
\co{The set of generating simplices of a simplicial set is nonempty unless the set of $n$-simplices is empty for all $n\geq 0$:  If a generating simplex of dimension $n>1$ does not exist then all the $0$-simplices are generating.   
}
To illustrate the idea let us consider the simplicial set $\Delta^d$ representing the topological simplex of dimension $d$:
\begin{itemize}
\item The set of $n$-simplices is given by
$$
\set{\sigma^{a_0a_1\cdots a_n}\,:\, 0\leq a_0\leq \cdots \leq a_n \leq d}.
$$
\item The face map $d_i$ acts by deleting the $i$-th element
$$
d_i \sigma^{a_0a_1\cdots a_n} = \sigma^{a_0a_1\cdots  a_{i-1}a_{i+1} \cdots a_n}.
$$ 
\item The degeneracy map $s_j$ acts by copying the $j$-th element
$$
s_j \sigma^{a_0a_1\cdots a_n} = \sigma^{a_0a_1\cdots  a_j a_j \cdots a_n}.
$$ 
\end{itemize}
Nondegenerate simplices are given by $\sigma^{a_0a_1\cdots a_n}$ with no repetition in $a_0a_1\cdots a_n$, and the only generating simplex is given by $\sigma^{01\cdots d}$. Note that any other simplex can be obtained from $\sigma^{01\cdots d}$ by applying a sequence of face and degeneracy maps, hence the name generating.

In our framework $X$ will represent a {\it space of measurements}.
The $n$-simplices of $X$ will represent {\it $n$-dimensional contexts}, or briefly   {\it $n$-contexts}. 

\Ex{\label{ex:SingleTriangle-MeasSpace}
{\rm
The triangle scenario in Example \ref{ex:SingleTriangle} consists of two measurements $\set{x,y}$ assembled into a triangle by adding the third measurement $x\oplus y$. The measurement space in this case is $\Delta^2$ whose generating simplex will be denoted by $\sigma_{xy}^{012}$, and 
the three faces are denoted by $\sigma^{12}_y$, $\sigma^{02}_{x\oplus y}$ and $\sigma_x^{01}$ to indicate the corresponding measurements.  
}}

To construct more complicated spaces of measurements one can start from more than one generating simplex, which can live in different dimensions, and specify which faces and degeneracies produced from these distinct generating simplices are identified. 

\Ex{\label{ex:TwoTriangles-MeasSpace}
{\rm
The diamond scenario in Example \ref{ex:TwoTriangles} with contexts $\set{x_0,y_0}$ and $\set{x_1,y_1}$  assembled into two triangles glued along a common $d_1$-face  can be represented by a measurement space $Z$ defined as follows:
\begin{itemize}
\item Generating $2$-simplices: $\sigma_{x_0y_0}$ and $\sigma_{x_1y_1}$.
\item Identifying relation:
$
d_1  \sigma_{x_0y_0} = d_1 \sigma_{x_1y_1}.
$
\end{itemize}
}}
 
\Ex{\label{ex:Bell-measurement-space}
{\rm
The CHSH scenario of Example \ref{ex:Bell222} consists of four triangles organized into a punctured torus $T^\circ$ defined as follows:
\begin{itemize}
\item Generating $2$-simplices: $\sigma_{y_0x_0}$, $\sigma_{y_0x_1}$, $\sigma_{x_0y_1}$ and $\sigma_{x_1y_1}$. 
\item Identifying relations:
\begin{equation}\label{eq:Bell-identifying-relations}
\begin{aligned}
d_0 \sigma_{y_0x_0} &=  d_2 \sigma_{x_0y_1} \\
d_0 \sigma_{y_0x_1} &= d_2 \sigma_{x_1y_1}   \\
d_2 \sigma_{y_0x_0} &=  d_2 \sigma_{y_0x_1}    \\
d_0 \sigma_{x_0y_1} &=  d_0 \sigma_{x_1y_1}. \\
\end{aligned}
\end{equation}
\end{itemize}
} }

\subsection{Space of outcomes and distributions}
\label{sec:SpaceOutcomesDist}

In our framework a {\it space of outcomes} will be represented by a simplicial set $Y$. An $n$-simplex represents an {\it $n$-dimensional outcome}, or briefly an {\it $n$-outcome}.
We will see that distributions on the set of $n$-outcomes can also be assembled into a simplicial set. 
This formalizes the intuitive picture given in  
Section \ref{sec:MotIdea}.
We then specialize to a particular outcome space known as the nerve space, which makes precise the notion of an XOR outcome described in Section \ref{sec:chsh-scenario}.

Let $R$ denote a commutative semiring, e.g. the nonnegative reals $\nnegR$ or the Boolean algebra $\BB=\set{0,1}$. 
An {\it $R$-distribution}, or simply a {\it distribution}, on a set $U$ is a function 
$
p:U\to \rev{R}
$
of finite support, i.e. $p(u)\neq 0$ for finitely many $u\in U$, such that $\sum_{u\in U} p(u)=1$ \cite{jacobs2010convexity}. Given a function $f:V\to U$ and  a distribution $p\in D_R(V)$ one can define a distribution $D_Rf(p)$ on $U$ by the assignment
$$
u \mapsto \sum_{v\in f^{-1}(u)} p(v).
$$

\Def{\label{def:simp-dist}
{\rm
Let $Y$ be a simplicial set representing an outcome space. The {\it space $D_RY$ of  distributions} on $Y$ is the simplicial set defined as follows: 
\begin{itemize}
\item The set of $n$-simplices  is given by the set $D_R(Y_n)$ of distributions on $Y_n$ for $n\geq 0$.
\item The simplicial structure maps are given by $D_Rd_i$ and $D_Rs_j$.
\end{itemize}
For simplicity of notation we will write $d_i$ and $s_j$ for these simplicial structure maps. 
}
}

Motivated by the topological approach of \cite{Coho}, an essential feature of our framework is that quantum observables, or abstract measurements more broadly, are assigned to edges of a topological space as in Fig.~(\ref{fig:Triangle-faces}) and Fig.(\ref{fig:Triangle-meas-faces-a}), respectively. Following the discussion of Section \ref{sec:MotIdea}, the triangle in Fig.~(\ref{fig:Triangle-faces}) is represented by a pair $(A,B)$ of observables; similar notions apply to abstract measurements $(x,y)$ in Bell scenarios. 
We can also represent the pair $(A,B)$ as a projective measurement given in Fig.~(\ref{fig:Triangle-proj-faces}) on the collection of triangles whose edges are labeled by the outcomes:  $\sigma^{01}\mapsto a$, $\sigma^{12}\mapsto b$ and $\sigma^{02}\mapsto a+b$. 
We will generalize this idea to a set of compatible measurements $x_1,x_2,\cdots,x_n$ labeling the edges of an $n$-simplex: $\sigma^{(i-1)i}\mapsto x_i$ for $1\leq i\leq n$, where the remaining edges are XOR measurements whose outcomes are inferable from those of $x_i$.  In this case the outcomes associated with an XOR measurement are inferred by summing the outcomes of the performed measurements. (The sum is modulo $d$ if the outcomes take values in $\ZZ_d$.
This will be made more precise in Eq.~(\ref{eq:outcome-map-nerve-edges}).)
 Suppose the measurements come from quantum observables that pairwise commute. Then  the associated projective measurement will be on the collection of $n$-simplices labeled by tuples $(a_1,a_2,\cdots,a_n)$ of outcomes generalizing the two-dimensional case. There is a nice way to assemble these simplices into a space, known as the nerve space\footnote{In algebraic topology nerve spaces are also known as {\it classifying spaces}. They play a prominent role in bundle theory and group cohomology \cite{adem2013cohomology}.}.

\Def{\label{def:NZd}
{\rm
The {\it nerve space} $N\ZZ_d$ is the simplicial set 
whose set of   $n$-simplices  consists of  $n$-tuples $(a_1,a_2,\dots,a_n)$ of outcomes in $\ZZ_d$ together with the face and the degeneracy maps\footnote{More generally, the nerve construction can be applied to a group $G$. Observe that a face map multiplies two successive elements in a tuple (except the first and the last one). A degeneracy map basically inserts the identity element.  
}
\begin{equation}\label{eq:nerve-Face-Deg}
\begin{aligned}
d_i(a_1,a_2,\cdots,a_n) &= \left\lbrace
\begin{array}{ll}
(a_2,a_3,\cdots,a_n) & i=0 \\
(a_1,a_2\dots,a_i+a_{i+1},\cdots,a_n) & 0<i< n \\
(a_1,a_2,\cdots, a_{n-1}) & i=n 
\end{array} 
\right. \\
s_j(a_1,a_2,\cdots,a_n) &= (a_1,a_2,\cdots,a_{j-1},0,a_j,\cdots,a_n)\;\;\;\;\; 0\leq j\leq n.
\end{aligned}
\end{equation}
Note that $(N\ZZ_d)_0$ consists of the empty tuple $(\,)$, the unique $0$-simplex.
}}

Next we describe the space $D_R N\ZZ_d$ of distributions on the nerve:  
An $n$-simplex of  $D_R N\ZZ_d$ is a distribution of the form $p:\ZZ_d^n \to R$ and the simplicial structure maps are given by 
$$
\begin{aligned}
d_ip (a_1,\cdots,a_{n-1}) &= \left\lbrace
\begin{array}{ll}
\sum_{a} p(a,a_1,\cdots,a_{n-1}) & i=0 \\
\sum_{a} p(a_1,\cdots, a,a_i-a,\cdots a_{n-1}) & \rev{0<i<n} \\
\sum_{a} p(a_1,\cdots,a_{n-1},a) & i=n 
\end{array}
\right. \\
s_jp(a_1,\cdots,a_{n+1}) &= \left\lbrace
\begin{array}{ll}
 p(a_1,\cdots,a_{j-1},a_{j+1},\cdots,a_{n+1})  & a_j = 0 \\
 0 & \text{otherwise.}
\end{array}
\right.
\end{aligned}
$$
\Nota{{\rm \label{nota:DNZ2}
To describe $D_RN\ZZ_2$   
in dimensions $\leq 2$ we will use the following notation:
For $n=2$ we identify $p\in D_R(\ZZ_2^2)$ with the tuple $(p^{00},p^{01},p^{10},p^{11})$.  
There are three face maps $d_i:D_R(\ZZ_2^2)\to D_R(\ZZ_2)$  given by
\begin{equation}\label{eq:DRNZ2-face}
d_i(p^{00},p^{01},p^{10},p^{11}) = \left\lbrace
\begin{array}{ll}
p^{10}+p^{00} & i=0 \\
p^{11}+p^{00} & i=1 \\
p^{01}+p^{00} & i=2.
\end{array}
\right.
\end{equation}
For $n=1$ the distribution $p$ is identified with $(p^0,p^1)$, and since $p^0+p^1=1$ it suffices to keep $p^0$. There is  one face map $d_0=d_1:D_R(\ZZ_2) \to \set{1}$ sending every distribution to $1$ and there are two degeneracy maps $s_j:D_R(\ZZ_2) \to D_R(\ZZ_2^2)$ given by
$$
s_j(p^0) = \left\lbrace
\begin{array}{ll}
(p^0,p^1,0,0) & j=0 \\
(p^0,0,p^1,0) & j=1.  
\end{array}
\right.
$$
For $n=0$ there is a unique distribution $D_R\ZZ_2^0 = \set{1}$. The degeneracy map $s_0:\set{1} \to \ZZ_2$ is defined by $s_0(1)=\delta^0$, where $\delta^0(0)=1$ and $\delta^0(1)=0$.
}}



\subsection{Contextuality for simplicial distributions}\label{sec:ContSimpSce}

We begin with the notion of a map between spaces in the simplicial set formalism.  
A {\it map $f:X\to Z$ of simplicial sets}, or a {\it map of spaces}, consists of a sequence of functions
\begin{itemize}
\item $f_n:X_n\to Z_n$ for $n\geq 0$,
\item compatible with the face and the degeneracy  maps in the sense that
\begin{equation}\label{eq:Simp-Map-Eq}
d_i f_n(\sigma) = f_{n-1}(d_i \sigma)\;\; \text{ and } \;\;s_j f_n(\sigma) = f_{n+1}(s_j \sigma)
\end{equation}
for all $0\leq i,j\leq n$ and $\sigma\in X_n$.
\end{itemize}
We will write $f_\sigma$ for the image $f_n(\sigma)$. With this notation Eq.~(\ref{eq:Simp-Map-Eq}) can be written as $d_if_\sigma = f_{d_i\sigma}$ and $s_j f_\sigma = f_{s_j\sigma}$. 

\Rem{\label{rem:desc-a-space-map}
{\rm
A map $f:X\to Y$ of spaces is determined by its values on the generating simplices. \co{(Here we assume that $X_n\neq \emptyset$ for some $n\geq 0$. Otherwise, the map is unique.)}
Let $\set{\sigma_i\,:\,\sigma_i\in X_{m_i}}$ denote the set of generating simplices of $X$.
Then $f$ consists of assignments $\sigma_i\mapsto f_{\sigma_i}$ for each generating simplex, where $f_{\sigma_i}$ is a $m_i$-simplex of $Y$, compatible with the simplicial structures.
The latter condition simply amounts to the requirement that if $\theta$ is a $m$-simplex of $X$ which can be obtained from two distinct generating simplices, say $\sigma_i$ and $\sigma_j$, then the assignment $\theta\mapsto f_\theta$ determined by the assignments for $\sigma_i$ and $\sigma_j$ should match. More precisely, if $\theta=\alpha_i(\sigma_i)=\alpha_j(\sigma_j)$ for some maps $\alpha_i$ and $\alpha_j$ given by a composition of faces and degeneracies then our assignment has to satisfy $\alpha_i(f_{\sigma_i})= \alpha_j(f_{\sigma_j})$.
In the special case  $X=\Delta^n$ space maps $f:\Delta^n\to Y$ are in bijective correspondence with the elements of $Y_n$.
}}

 Next we introduce distributions in the simplicial framework  generalizing the notion of nonsignaling distributions.

\Def{\label{def:SimpScenario}
{\rm
A {\it simplicial scenario} is a pair  $(X,Y)$ consisting of a space $X$ of measurements and a space $Y$ of outcomes. A {\it simplicial distribution} on this scenario  is a map $p:X\to D_RY$ of spaces. We write $\siDist(X,Y)$ for the set of simplicial distributions on $(X,Y)$. 
}
}

A map $r:X\to Y$ of spaces will be called an {\it outcome map}, or sometimes an {\it outcome assignment}. There is an associated simplicial distribution 
$\delta^r:X\to D_RY$  defined by sending an $n$-context $\sigma$ to the delta distribution $\delta^r_\sigma=\delta^{r_\sigma}$ on the set of $n$-outcomes. For $\theta\in Y_n$ we have
\begin{equation}\label{eq:delta-dist}
\delta^{r_\sigma}(\theta) = \left\lbrace
\begin{array}{ll}
1 & \theta=r_\sigma \\
0 & \text{otherwise.}
\end{array}
\right.
\end{equation}
A {\it deterministic distribution} on a  scenario $(X,Y)$ is a simplicial distribution of the from $\delta^r:X\to D_RY$. We write $\dDist(X,Y)$ for the set of deterministic distributions. 
A {\it classical distribution}  is a probabilistic mixture of deterministic distributions: $d=\sum_r \lambda(r)\,\delta^r $ where $\lambda(r)\geq 0$ and $\sum_r \lambda(r)=1$.
We denote the set of classical distributions by $\clDist(X,Y)$.

In the simplicial setting the definition of contextuality relies on the following map:
\begin{equation}\label{diag:Theta}
\Theta: \nonC(X,Y) \to \siDist(X,Y)
\end{equation}
that sends a classical distribution $d=\sum_r \lambda(r)\,\delta^r$ 
to the simplicial distribution $\Theta(d):X\to D_RY$  defined by
$$
\Theta(d)_\sigma: \theta \mapsto \sum_{r\,:\,r_\sigma=\theta} \lambda(r) 
$$
where $\sigma\in X_n$, $\theta\in Y_n$ and $r:X\to Y$ runs over outcome assignments such that $r_\sigma=\theta$. This makes precise the intuitive pictures that appear in the examples of Section \ref{sec:ContNewFramework}.

\Rem{{\rm \label{rem:notation-C-S}
By definition classical distributions are 
$R$-convex combinations 
of deterministic distributions, that is $\clDist(X,Y)=D_R(\dDist(X,Y))$. 
On the other hand, we can take 
convex combinations
in  $\siDist(X,Y)$: Given  simplicial distributions $q^i$ on $(X,Y)$ and $\lambda_i\in R$ with $\lambda_i\geq 0$ and $\sum_i \lambda_i=1$ we can construct $p=\sum_i \lambda_i\,q^i  \in \siDist(X,Y)$ 
by defining $p_\sigma=\sum_i \lambda_i\,q^i_\sigma$ for an $n$-context $\sigma\in X_n$.  
To avoid confusion we will indicate which convex combination we mean. 
We emphasize that in general $\clDist(X,Y)$ is not a subset of $\siDist(X,Y)$.
}}

\Def{\label{def:simp-contextuality}
A simplicial distribution $p \in \siDist(X,Y)$ is called {\it contextual} if it does not lie in the image of $\Theta$. Otherwise, it is called {\it noncontextual}.
}

This definition generalizes the usual notion of contextuality for nonsignaling distributions; see Definition \ref{def:contextual-discrete} and Theorem \ref{thm:ComparisonToSheaf}. 
In this sense our framework extends the sheaf-theoretic approach \cite{abramsky2011sheaf} that formalizes the theory of contextuality for nonsignaling distributions.

\Ex{\label{ex:SingleTriangle-SimpScenario}
{\rm 
Using Remark \ref{rem:desc-a-space-map} we can describe  distributions on the simplicial scenario $(\Delta^n,Y)$ for an arbitrary outcome space $Y$. We have $D(\Delta^n,Y)=Y_n$ for the deterministic distributions.
This implies that the set of classical distributions is given by
$$
\clDist(\Delta^n,Y) = D_R(Y_n).
$$
Similarly,  simplicial distributions $p:\Delta^n \to D_R Y$ are  given by  
$$
\siDist(\Delta^n,Y) =( D_RY)_n=D_R(Y_n).
$$ 
Consequently the $\Theta$-map in 
(\ref{diag:Theta}) is the identity map, and hence every distribution on the simplicial scenario $(\Delta^n,Y)$ is noncontextual. The triangle scenario of Example  \ref{ex:SingleTriangle} is the special case where $n=2$ and $Y=N\ZZ_2$.
}}

The notion of contextuality depends on the underlying semiring $R$. We will say $R$-(non)contextual when we want to emphasize this dependence. Similarly this dependence will be emphasized for the simplicial and classical distributions by including $R$ in the notation. We write $\siDist_R(X,Y)$ and $\clDist_R(X,Y)$ for the set of simplicial $R$-distributions and classical $R$-distributions; respectively. A semiring homomorphism $\phi:R\to S$ gives us a commutative diagram
\begin{equation}\label{eq:change-of-R}
\begin{tikzcd}
\clDist_R(X,Y) \arrow[r,"\Theta"] \arrow[d,"\phi_*"] & \siDist_R(X,Y) \arrow[d,"\phi_*"]\\ 
\clDist_{S}(X,Y) \arrow[r,"\Theta"] & \siDist_{S}(X,Y)
\end{tikzcd}
\end{equation}
The vertical maps are obtained simply by applying $\phi$ to the distributions on the simplices of $Y$ for a simplicial distribution $p:X\to D_RY$, or to the coefficients of a classical distribution $d=\sum_r \lambda(r)\,\delta^r$.

\Pro{\label{pro:change-of-semiring}
A simplicial $R$-distribution $p$ is $R$-contextual if the simplicial $S$-distribution $\phi_*(p)$ is $S$-contextual.
}
\Proof{
Let $p$ be a simplicial $R$-distribution. If it is noncontextual, say $\Theta(d)=p$ for some classical $R$-distribution, then $\bar d=\phi_*(d)$ satisfies $\Theta(\bar d)=\phi_*(p)$ showing that $\phi_*(p)$ is noncontextual as well.
}

Probability distributions will be our main application. A simplicial $\nnegR$-distribution will also be called a {\it simplicial probability distribution}. 
An important example of a semiring homomorphism is $\phi:\nnegR\to \BB$ that sends positive numbers to $1$ and zero to $0$. A simplicial $\nnegR$-distribution is said to be {\it logically (or possibilistically) contextual} if $\phi_*(p)$ is $\BB$-contextual. This extends the usual definition for nonsignaling distributions \cite{abramsky2012logical}. A famous example is Hardy's bipartite scenario \cite{hardy1993nonlocality}. 
In Section \ref{sec:StrongContCoho} we introduce strong contextuality for simplicial distributions and explore its connection with cohomology.
Now we turn to the examples of Section \ref{sec:ContNewFramework} and revisit them formally. 

\subsection{Simplicial scenarios with nerve as the outcome space}

Our canonical choice for the outcome space will be $N\ZZ_d$. We begin by a description of outcome assignments for a simplicial scenario $(X,N\ZZ_d)$ where $X$ is an arbitrary measurement space.

\Pro{\label{pro:outcome-map-nerve-edges}
Outcome maps $r:X\to N\ZZ_d$  are in bijective correspondence with functions $f:X_1\to \ZZ_d$ satisfying 
\begin{equation}\label{eq:s-boundary}
f(d_1\sigma) = f(d_2\sigma) +f(d_0\sigma)
\end{equation}
for all $\sigma\in X_2$.
Moreover, for a generating $n$-simplex $\sigma=\sigma^{01\cdots n}$ of  $X_n$ 
and $r_{\sigma} = (a_{1},a_{2},\cdots,a_{n})$ with $1\leq k<i\leq n$ we have
\begin{equation}\label{eq:outcome-map-nerve-edges}
r_{\sigma^{(i-k)i}} = a_{i-k+1}+a_{i-k+2}+\cdots+a_i.
\end{equation}
}
\Proof{ 
We will use the description of a space map given in Remark \ref{rem:desc-a-space-map}.
First we prove the statement for $X=\Delta^n$, in which case it suffices to prove Eq.~(\ref{eq:outcome-map-nerve-edges}).
In this case $r$ is determined by an $n$-tuple $r_\sigma=(a_1,a_2,\cdots,a_n)$ of outcomes in $\ZZ_d$. 
We will do induction on $k$. 
Let us define 
\begin{equation}\label{eq:varphi-k}
\varphi_i = \underbrace{d_0\circ d_0 \cdots \circ d_0}_{i-1}\circ \; d_{i+1}\circ d_{i+2}\circ \cdots \circ d_n.
\end{equation}
Using compatibility of $r$ with the face maps and Eq.~(\ref{eq:nerve-Face-Deg})  we obtain
$
r_{\sigma^{(i-1)i}} = r_{\varphi_i \sigma} = \varphi_i r_{\sigma} = a_i$ proving the statement for $k=1$. For $k>1$ assume that Eq.~(\ref{eq:outcome-map-nerve-edges}) holds for $k-1$. Consider the triangle $\sigma^{(i-k)(i-1)i}$ with vertices $(i-k,i-1,i)$. By compatibility with $d_1$ we have
$$
\begin{aligned}
r_{\sigma^{(i-k)i}} &= d_1 r_{\sigma^{(i-k)(i-1)i}} \\
& = d_1(r_{\sigma^{(i-k)(i-1)}},r_{\sigma^{(i-1)i}})\\
& = r_{\sigma^{(i-k)(i-1)}}+r_{\sigma^{(i-1)i}}\\
& = (a_{i-k+1}+a_{i-k+2}+\cdots+a_{i-1})+a_i
\end{aligned}
$$
where in the last step we used our induction hypothesis and the $k=1$ case; see Fig.~(\ref{fig:tetrahedron-ai}). 
\begin{figure}[h!]
\begin{center}
\includegraphics[width=.25\linewidth]{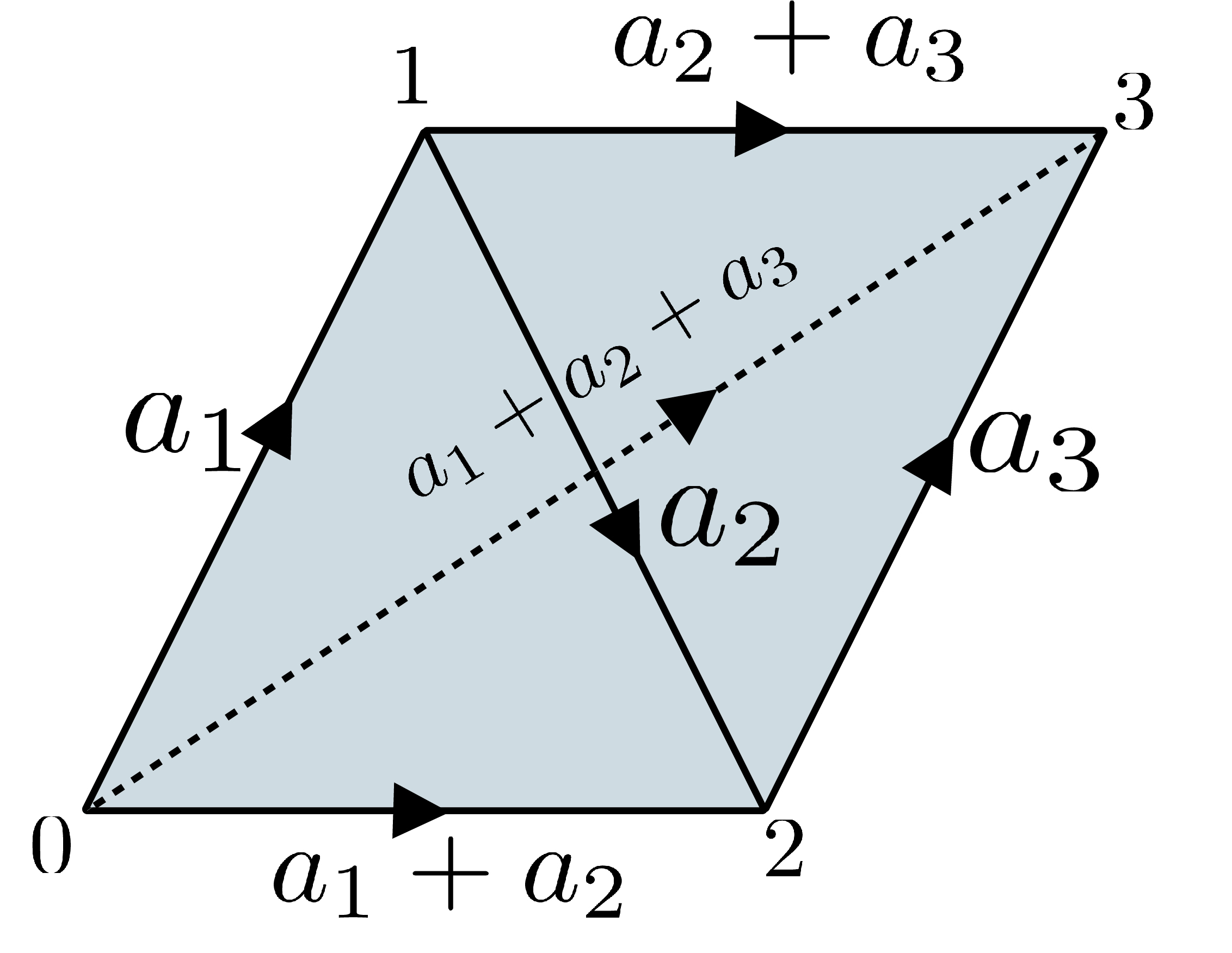} 
\end{center} 
\caption{To show that the edge $(0,3)$ is assigned $a_1+a_2+a_3$ we can use the triangle $\sigma^{023}$. For this we need the assignment for the edge $(0,2)$, which can be determined by using the triangle $\sigma^{012}$. 
}
\label{fig:tetrahedron-ai}
\end{figure}
For the general case an outcome map $r:X\to N\ZZ_d$ consists of assignments $\sigma_i\mapsto r_{\sigma_i}$ compatible in the sense that $\alpha_i(r_{\sigma_i})=\alpha_j(r_{\sigma_j})$ whenever $\alpha_i(\sigma_i)=\alpha_j(\sigma_j)$.  
Each such assignment specifies a map $r|_{\sigma_i}:\Delta^{d_i}\to N\ZZ_d$, a map determined by its restriction to $1$-contexts, i.e. the edges of the simplex $\sigma_i$. Compatibility follows if we require that $r|_{\sigma_i}$ and $r|_{\sigma_j}$ matches when restricted to the edges of $\theta$. This gives the bijective correspondence between outcome assignments and functions on the set of $1$-contexts satisfying Eq.~(\ref{eq:s-boundary}).
}

\Rem{\label{rem:f-vanishes-on-degenerate-simplices}
{\rm
A function $f:X_1\to \ZZ_d$ satisfying 
Eq.~(\ref{eq:s-boundary}) vanishes on degenerate edges. That is, if $e = s_0v$ for some $v\in X_0$ then  the simplicial identities  given in Eq.~(\ref{eq:simplicial-identities})  implies that $d_i \sigma = e$ for all $i=0,1,2$ when $\sigma = s_0e$. Thus  Eq.~(\ref{eq:s-boundary}) gives 
$f(e)=0$.
}}

\Ex{\label{ex:diamond-noncontextual}
{\rm 
Let $Z$ denote the measurement space of the diamond scenario
in Example \ref{ex:TwoTriangles-MeasSpace}.  
An outcome map $r:Z\to N\ZZ_2$ consists of a pair $(r_{x_0y_0},r_{x_1y_1})$ of $2$-outcomes, where $r_{x_iy_i}=(a_i,b_i)$ is the image of the generating simplex $\sigma_{x_iy_i}$ for $i=0,1$, such that $d_1 r_{x_0y_0} = d_1r_{x_1y_1}$. Denoting by $r_{x_i}$ and $r_{y_j}$ the $1$-outcomes associated to the $1$-contexts $\sigma_{x_i}$ and $\sigma_{y_j}$,
Proposition \ref{pro:outcome-map-nerve-edges} implies that $r_{x_i}=a_i$ and $r_{y_j}=b_j$.
Classical distributions on this scenario are given by
$$
\clDist(Z,N\ZZ_2) = D_R\set{(a_0,b_0,a_1,b_1)\in \ZZ_2^4\,:\, a_0+b_0=a_1+b_1}.
$$
A simplicial distribution $p:Z \to D_RN\ZZ_2$ is specified by a pair of distributions $(p_{x_0y_0},q_{x_1y_1})$ such that $d_1p_{x_0y_0} = d_1 q_{x_1y_1}$. 
Therefore using Notation \ref{nota:DNZ2} we have
$$
\siDist(Z,N\ZZ_2) =\set{(p^{00},p^{01},p^{10},p^{11}, q^{00},q^{01},q^{10},q^{11})\in [0,1]^8:\, \sum_{ab}p^{ab}=\sum_{cd}q^{cd}=1,\; p^{00}+p^{11} = q^{00}+q^{11} }. 
$$
The formula in Eq.~(\ref{eq:lambda}) for $\lambda$ in terms of the initial pair $(p,q)$ of distributions implies that $\Theta$ is surjective. Therefore every  distribution on this simplicial scenario is noncontextual. A generalization of this formula (given in Eq.(\ref{eq:classical-ansatz})) appears in the proof of   Lemma \ref{lem:gluing-lemma}, a result on gluing classical distributions.
}}
 
 

\Ex{\label{ex:Bell-distributions}
{\rm
Next we describe the set of deterministic and classical distributions on the CHSH scenario. The measurement space is the punctured torus $T^\circ$ described in  Example \ref{ex:Bell-measurement-space}.
An outcome assignment   $r:T^\circ\to N\ZZ_2$  is determined by the images of the four generating simplices: $r_{y_0x_0}$, $r_{y_0x_1}$, $r_{x_0y_1}$ and $r_{x_1y_1}$. Each of these $2$-outcomes are specified by a pair in $\ZZ_2$. 
They further satisfy a set of relations imposed by the ones among the generating simplices given in Eq.~(\ref{eq:Bell-identifying-relations}):
\begin{equation}\label{eq:Bell-identifying-relations-r}
\begin{aligned}
d_0 r_{y_0x_0} &=  d_2 r_{x_0y_1} \\
d_0 r_{y_0x_1} &= d_2 r_{x_1y_1}   \\
d_2 r_{y_0x_0} &=  d_2 r_{y_0x_1}    \\
d_0 r_{x_0y_1} &=  d_0 r_{x_1y_1}. \\
\end{aligned}
\end{equation}
Using these equations together with Proposition \ref{pro:outcome-map-nerve-edges} we obtain  $r_{y_0x_0}=(b_0,a_0)$, $r_{y_0x_1}=(b_0,a_1)$, $r_{x_0y_1}=(a_0,b_1)$ and $r_{x_1y_1}=(a_1,b_1)$ for some $a_i,b_j\in \ZZ_2$. Note that $r_{x_i}=a_i$ and $r_{y_j}=b_j$.
Therefore $\dDist(T^\circ,\ZZ_2)=\ZZ_2^4$ where under this identification a quadruple $(a_0,b_0,a_1,b_1)$ uniquely specifies the outcome map $r$, and the set of classical distributions is given by
$$
\clDist(T^\circ,N\ZZ_2) = D_R(\ZZ_2^4).
$$
A simplicial distribution  $p:T^\circ\to D_R N\ZZ_2$ is determined by the images of the generating simplices, given by the  distributions $p_{y_0x_0},p_{y_0x_1},p_{x_0y_1},p_{x_1y_1}$ on $\ZZ_2^2$, together with a set of relations imposed by Eq.~(\ref{eq:Bell-identifying-relations}). This set of relations is precisely the one given in Eq.~(\ref{eq:nonsignaling-top}), i.e. the  usual nonsignaling conditions for the CHSH scenario.
 For a deterministic distribution $r$ represented by a quadruple $(a_0,b_0,a_1,b_1)$ the distribution $\delta^r:T^\circ\to D_R N \ZZ_2$  sends each of the generating simplices to a delta distribution: $\delta^{b_0a_0},\delta^{b_0a_1},\delta^{a_0b_1},\delta^{a_1b_1}$. In general, a classical distribution $d=\sum_r \lambda(r)\,\delta^r$, where $r$ runs over the quadruples $(a_0,b_0,a_1,b_1)$, will be mapped to a simplicial distribution $p=\Theta(d)$ such that
$$
\begin{aligned}
p_{y_0x_0}^{b,a} = \sum_{r\,:\, r_{y_0x_0}=(b,a)} \lambda(r) = \sum_{a_1,b_1} \lambda(a,b,a_1,b_1), 
\end{aligned}
$$
 similarly for the remaining distributions: $p_{y_0x_1},p_{x_0y_1},p_{x_1y_1}$. This means that $p$ is precisely a classical distribution in the usual sense. Therefore for the CHSH scenario contextuality in the sense of Definition \ref{def:simp-contextuality} coincides with the usual notion of contextuality (see Definition \ref{def:contextual-discrete}).  
}}

\subsection{Distributions on the circle} \label{sec:DistOnCircle}

So far we have been considering the nerve space $N\ZZ_d$ as the space of outcomes. Now we will consider a subspace\footnote{A subspace (subsimplicial set) of a simplicial set $X$ is a simplicial set $Z$ such that $Z_n\subset X_n$ for all $n\geq 0$ together with the face and the degeneracy maps inherited from $X$.}, the circle $S^1$, as the outcome space and describe the corresponding distributions on a simplicial scenario $(X,S^1)$. Distributions on the circle will be important in the formulation of the Gleason's theorem (Theorem \ref{thm:Gleason}). Also in Section \ref{sec:state-indep-contex} the circle will play a role in the formulation of state-independent contextuality as a distinguished measurement space.

As a simplicial set the circle $S^1$ is defined as follows:
\begin{itemize}
\item Generating $1$-simplex: $\sigma^{01}$.
\item Identifying relation: $d_0\sigma^{01}=d_1\sigma^{01}$
\end{itemize}
\noindent
In other words, as in the topological case a circle is obtained from $\Delta^1$ by  identifying the two $0$-simplices: $\sigma^0= \sigma^1$. This identification also implies $\sigma^{00\cdots 0}=\sigma^{11\cdots 1}$ for the degenerate simplices in higher dimensions. Let us write $0_n=00\cdots 0$ for a string of zeros of length $n+1$ and similarly $1_n$ for a string of ones of the same length.
With this notation the $n$-simplices of $S^1$ are given by
\begin{equation}\label{eq:cicrcle-n-simplices}
\sigma^{0_{n}}= \sigma^{1_{n}},\;\; \sigma^{01_{n-1}}  ,\;\; \sigma^{001_{n-2}},\;\;\cdots,\;\;  \sigma^{0_{n-1}1}.
\end{equation}
For example, the $2$-simplices are given by $\sigma^{000}=\sigma^{111}$, $\sigma^{011}$ and $\sigma^{001}$; see Fig.~(\ref{fig:S1-dim2}).
\begin{figure}[h!]
\begin{center}
\includegraphics[width=.5\linewidth]{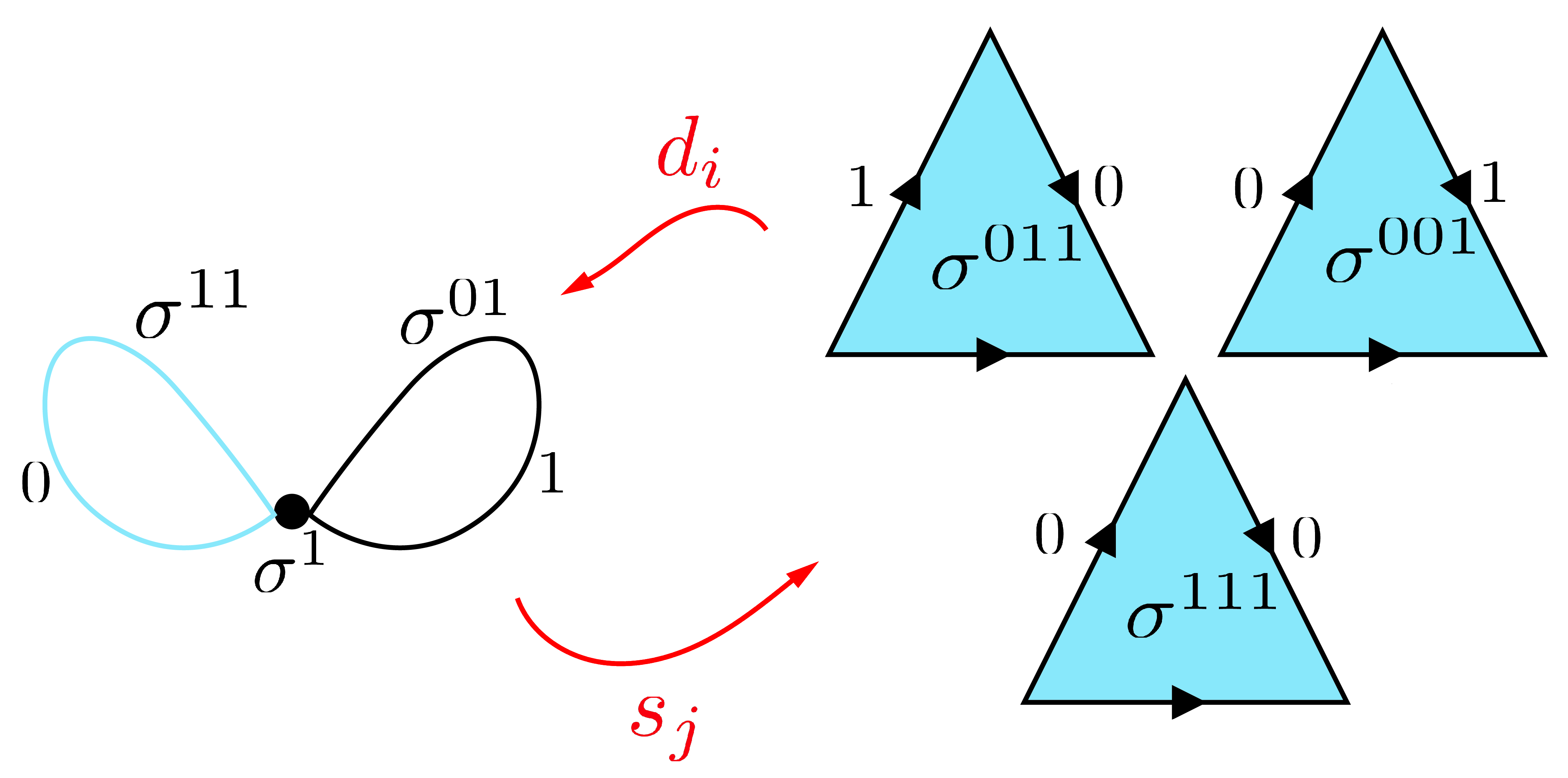} 
\end{center} 
\caption{The generating simplex of the circle is $\sigma^{01}$, an edge whose both faces are given by the unique vertex $\sigma^0=\sigma^1$.
All the other simplices can be obtained by acting by a degeneracy map. 
}
\label{fig:S1-dim2}
\end{figure}
Faces and degeneracies can be obtained by applying the deleting and copying operations to the string $01\cdots n$.
The simplicial structure maps in dimensions $\leq 2$ are illustrated in Fig.~(\ref{fig:S1-dim2}). Alternatively, the circle can be seen as a subspace of $N\ZZ_d$ if we map the simplices in Eq.~(\ref{eq:cicrcle-n-simplices}), respectively, to the following $n$-tuples:
\begin{equation}\label{eq:circle-alt-rep}
(0,0,\cdots,0),\;\; (1,0,\cdots,0),\;\;(0,1,\cdots,0),\cdots,\;\; (0,0,\cdots,0,1)
\end{equation}  
In particular, the $2$-simplices are mapped as follows in Fig.~(\ref{fig:S1-dim2}).
We will denote this map by
\begin{equation}\label{diag:h-map-S1-NZd}
h:S^1 \to N\ZZ_d.
\end{equation}

Next we describe the space of distributions on the circle. 
An $n$-simplex of $D_R S^1$ is a distribution $p:(S^1)_n\to R$ on the set of $n$-simplices listed in Eq.~(\ref{eq:cicrcle-n-simplices}). 
If we set $p^k=p(\sigma^{0_{k-1}1_{n-k}})$ then such distributions are in bijective correspondence with $n$-tuples $(p^1,p^2,\cdots,p^n)$ satisfying  $p^i\geq 0$ for all $1\leq i\leq n$ and  $\sum_{i=1}^np^i\leq 1$.
In this representation the faces and the degeneracies are given by 
$$
\begin{aligned}
d_i(p^1,p^2,\cdots,p^{n}) &= \left\lbrace
\begin{array}{ll}
(p^2,p^3,\cdots,p^{n}) & i=0\\ 
(p^1,p^2,\cdots,p^i+p^{i+1},\cdots, p^{n}) & 0<i<n \\
(p^1,p^2,\cdots,p^{n-1}) & i=n
\end{array}
\right.  \\
s_j(p^1,p^2,\cdots,p^{n}) &= (p^1,p^2,\cdots,p^j,0,p^{j+1},\cdots,p^n).
\end{aligned}
$$
We can see $D_RS^1$ as a subspace of $D_RN\ZZ_d$ via the map 
$$
D_Rh:D_RS^1 \to D_RN\ZZ_d
$$
where $h$ is as given in 
(\ref{diag:h-map-S1-NZd}).
Under this map a tuple $(p^1,p^2,\cdots,p^n)$ representing an $n$-simplex of $D_R S^1$ maps to the distribution $\tilde p:\ZZ_d^n \to R$ given by 
\begin{equation}\label{eq:embed-dist-on-S1-to-dist-on-nerve}
\tilde p^{a_1\cdots a_n} = \left\lbrace
\begin{array}{ll}
p^k & a_k=1 \text{ and } a_j=0 \text{ for all } j\neq k \\
1- \sum_k p^k & a_1=a_2=\cdots=a_n=0 \\
0 & \text{otherwise.}
\end{array}
\right.
\end{equation}
In particular, a $2$-simplex of $D_RS^1$ given by the tuple $(p^1,p^2)$ can be represented as $(p^{00},p^{01},p^{10},0)$ using Notation~(\ref{nota:DNZ2}) where $p^{10}=p^1$, $p^{01}=p^2$ and $p^{00}=1-p^1-p^2$.

\Ex{\label{ex:diamond-S1}
{\rm
For the diamond scenario an outcome map $r:Z\to S^1$ is determined by the $2$-outcomes $r_{x_iy_i}=(a_i,b_i)$, where $i=0,1$, as in the case where the outcome space is $N\ZZ_2$, however this time the pair can not be $(1,1)$. For a simplicial probability distribution $p:Z\to D_\nnegR S^1$ determined by the pair  $(p,q)$ of distributions
there is an additional constraint: $p^{11}=q^{11}=0$. 
Compatibility at the $d_1$-face then implies $p^{00}=q^{00}$. Therefore we have
$$
\siDist(Z,S^1) = \set{(p^{01},p^{10},q^{01},q^{10})\in [0,1]^4\,:\,  p^{01}+p^{10}=q^{01}+q^{10} \leq 1}.
$$
Any distribution on the scenario  $(Z,S^1)$ is still noncontextual. 
Given $p\in \siDist(Z,S^1)$ we can use  Eq.~(\ref{eq:lambda}) to construct a classical distribution $d=\sum_{abc} \lambda(abc)\, \delta^{abc}$ such that $\lambda(abc)=0$ whenever $(a,b)$ or $(c,a+b+c)$ is equal to $(1,1)$. This implies that $d\in \clDist(Z,S^1)$, which can be used to show that $p$ is noncontextual.
}
}

\section{Gluing and extending distributions}
\label{sec:Glue-Extend}

In the previous section a simplicial scenario was defined as a pair $(X,Y)$ of spaces. Using the flexibility afforded by the language of simplicial sets we can readily define maps between measurement spaces or outcome spaces. %
%
%
%
These maps allow us to compare simplicial distributions on two different scenarios and study their contextual behavior by comparison.
Here we formalize this approach and deduce new results on contextual properties of distributions using topological reasoning. 
Our main tool is the ``gluing lemma" formulated in Lemma \ref{lem:gluing-lemma} that significantly generalizes Fine's approach to gluing classical distributions.  
As an application of our formalism we provide a new topological proof of Fine's theorem \cite{fine1982hidden,fine1982joint} for the CHSH scenario (Theorem \ref{thm:Fines-theorem}).
Our approach is to study distributions on simpler measurement spaces then using the gluing lemma to draw conclusions about more complicated measurement spaces.
We also give a new characterization of contextuality in terms of extensions to larger measurement spaces as formulated in Corollary \ref{cor:extension-torus}.

\subsection{Changing measurement and outcome spaces} 

Let $X$ and $Z$ be spaces of measurements and $Y$ be a space of outcomes. A map
$f:Z\to X$ of spaces
induces the following functions distributions:
\begin{itemize}
\item $f^*:\dDist(X,Y)\to \dDist(Z,Y)$ defined by sending $\delta^r$ to the distribution $\delta^{r\circ f}$.  

\item $D_Rf^*: \clDist(X,Y) \to \clDist(Z,Y)
$ that sends  $d=\sum_r \lambda(r) \, \delta^r$ to the classical distribution $\sum_r \lambda(r)\, \delta^{r\circ f}$.

\item $f^*: \siDist(X,Y) \to \siDist(Z,Y)
$ defined by sending $p:X\to D_RY$ to the composition $Z\xrightarrow{f} X\xrightarrow{p} D_RY$.
\end{itemize}
When $f:Z\to X$ is the inclusion map of a subspace   we simply write $\delta^r|_Z$ (or $r|_Z$), $d|_Z$ and $p|_Z$ for the corresponding distributions on $Z$ obtained by restricting an outcome map, a classical distribution and a simplicial distribution; respectively. The $\Theta$-map is compatible with changing the measurement space in the sense that there is a commutative diagram
\begin{equation}\label{diag:comm-diag-meas-space-change-Theta}
\begin{tikzcd}
\clDist(X,Y) \arrow[d,"\co{D_Rf^*}"'] \arrow[r,"\Theta_{X,Y}"] & \siDist(X,Y) \arrow[d,"f^*"] \\
\clDist(Z,Y) \arrow[r,"\Theta_{Z,Y}"] & \siDist(Z,Y)
\end{tikzcd}
\end{equation}
In the notation of the $\Theta$-map we emphasize dependence on the measurement and the outcome spaces. Usually we omit the outcome space when only the measurement space is changed.
When $Z$ is a subspace of $X$ the commutativity of the diagram is expressed as
$
\Theta_X(d)|_Z = \Theta_Z(d|_Z).
$

\Pro{\label{pro:ext-noncontextual}
If $\tilde p\in \siDist(X,Y)$ is noncontextual then $p=f^*(\tilde p)$ is also noncontextual.
}
\Proof{
Since $\tilde p$ is noncontextual there exists a classical distribution $\tilde d$ on $(X,Y)$ such that $\Theta_X(\tilde d)=\tilde p$. By commutativity of  Diag.~(\ref{diag:comm-diag-meas-space-change-Theta}) the classical distribution $d=f^*(\tilde d)$ maps to $p$ under the map $\Theta_Z$ showing that $p$ is noncontextual.
}

Changing the outcome space also induces maps between the sets of distributions and a commutative diagram expressing the compatibility with the $\Theta$-map. A map $g:Y\to W$ of spaces induces:
\begin{itemize}
\item $g_*:\dDist(X,Y) \to \dDist(X,W)$ defined by sending $\delta^r$ to the deterministic distribution $\delta^{g\circ r}$.

\item $D_Rg_*: \clDist(X,Y) \to \clDist(X,W)$ that sends $d=\sum_r d(r)\, \delta^r$ to the classical distribution $\sum_r d(r)\, \delta^{g\circ r}$.

\item $g_*: \siDist(X,Y)\to \siDist(X,W)$ defined by sending $p:X\to D_RY$ to the simplicial distribution $X\xrightarrow{p} D_R Y \xrightarrow{D_Rg_*} D_RW$.
\end{itemize}

\noindent The corresponding commutative diagram in this case is given by
\begin{equation}\label{diag:comm-diag-out-space-change-Theta}
\begin{tikzcd}
\clDist(X,Y) \arrow[d,"\co{D_Rg_*}"'] \arrow[r,"\Theta_{X,Y}"] & \siDist(X,Y) \arrow[d,"g_*"] \\
\clDist(X,W) \arrow[r,"\Theta_{X,W}"] & \siDist(X,W)
\end{tikzcd}
\end{equation}
A result analogous to Proposition \ref{pro:ext-noncontextual} can be formulated: If a distribution $p$ on $(X,Y)$ is noncontextual then $g_*(p)$ on $(X,W)$ is noncontextual as well.  

\co{

\Pro{\label{pro:outcome-injective}
Assume that $g:Y\to W$ is an injective simplicial set map, i.e., each $g_n:Y_n\to W_n$ is injective for $n\geq 0$.
Then $p\in \siDist(X,Y)$ is noncontextual if and only if $g_*(p)\in \siDist(X,W)$ is noncontextual.
}
\Proof{
As observed above, the commutativity of Diag.~(\ref{diag:comm-diag-out-space-change-Theta}) implies that if $p$ is noncontextual then $g_*(p)$ is noncontextual. 
Conversely, assume that $g_*(p)$ is noncontextual. That is, there exists $d\in \clDist(X,W)$ such that $\Theta_{X,W}(d)=g_*(p)$. 
Note that both of the vertical maps in Diag.~(\ref{diag:comm-diag-out-space-change-Theta}), $g_*$ and $D_Rg_*$, are injective since $g$ is injective.
Let $S$ denote the set of outcome assignments $s:X\to W$ such that $d(s)\neq 0$. An outcome assignment $s:X\to W$ consists of a family of simplices $\set{s_\sigma}_\sigma$, where $\sigma$ runs over the generating simplices of $X$, compatible under the simplicial structure maps (Remark \ref{rem:desc-a-space-map}). Since $\Theta_{X,W}(d)=g_*(p)$ every $s\in S$ belongs to the support of $g_*(p)$.
Injectivity of $g_*$ implies that $\supp(g_*p)$ is contained in $\supp(p)$. Therefore $s_\sigma$ for every generating simplex $\sigma$ of $X$ belongs to $Y$. The simplices $\set{s_\sigma}_\sigma$ define a simplicial set map $\tilde s: X\to Y$ such that $s=g\circ \tilde s $.
By commutativity of Diag.~(\ref{diag:comm-diag-out-space-change-Theta}) the deterministic distribution   $\tilde d = \sum_{s\in S} d(s) \delta^{\tilde s}$ satisfies $\Theta_{X,Y}(\tilde d)=p$. Therefore $p$ is noncontextual.
}

}

\Ex{
{\rm
Consider the CHSH scenario $(T^\circ,N\ZZ_2)$ and the  inclusion $h:S^1\to N\ZZ_2$ of the outcome spaces given in 
(\ref{diag:h-map-S1-NZd}). 
Using Eq.~(\ref{eq:embed-dist-on-S1-to-dist-on-nerve}) we can represent simplicial probability distributions on $(T^\circ,S^1)$ as follows:
\begin{equation}\label{eq:S(T0,S1)}
\siDist(T^\circ,S^1) = \set{(p_{x_0}^1,p_{x_1}^1,p_{y_0}^1,p_{y_1}^1)\in [0,1]^4\,:\,  p_{x_i}^1+p_{y_j}^1\leq 1,\; i,j\in \ZZ_2}
\end{equation}
where $p_{x_i}=p_{\sigma_{x_i}}$ and $p_{y_j}=p_{\sigma_{y_j}}$. 
In this case the distribution $q=h_*(p)$ on $(T^\circ,N\ZZ_2)$ is given by
$$
\begin{aligned}
q_{y_0x_0} &=  (1-p_{x_0}^1-p_{y_0}^1,p_{x_0}^1,p_{y_0}^1,0)  \\
q_{y_0x_1} &= (1-p_{x_1}^1-p_{y_0}^1,p_{x_1}^1,p_{y_0}^1,0)   \\
q_{x_0y_1} &= (1-p_{x_0}^1-p_{y_1}^1,p_{y_1}^1,p_{x_0}^1,0)   \\
q_{x_1y_1} &= (1-p_{x_1}^1-p_{y_1}^1,p_{y_1}^1,p_{x_1}^1,0)   
\end{aligned}
$$
where we use Notation \ref{nota:DNZ2} and write $q_\sigma=(q^{00}_\sigma,q^{01}_\sigma,q^{10}_\sigma,q^{11}_\sigma)$ for a $2$-context $\sigma$. 
\co{By Proposition \ref{pro:outcome-injective}, $p$ is noncontextual if and only if $q$ is noncontextual.
} 
Now, using CHSH inequalities for $q$ we obtain the following set of inequalities
$$
p_{x_i}^1 + p_{y_j}^1 \leq 1 \quad i,j\in \ZZ_2,
$$
which are always satisfied by $p$ as a consequence of the description given in Eq.~(\ref{eq:S(T0,S1)}). Therefore every $p\in \siDist(T^\circ,S^1)$ is noncontextual and can be written as a probabilistic mixture of the following $7$ deterministic distributions: $\delta^{0000},\delta^{1000},\delta^{0100},\delta^{0010},\delta^{0001},\delta^{1010},\delta^{0101}$. 
}
}

\subsection{Gluing classical distributions}

When the space of measurements can be written as a union of smaller spaces deterministic distributions and simplicial distributions can be described as compatible distributions on the subspaces.  

\Pro{\label{pro:union-det-simp-distributions}
Suppose that the space of measurements is a union $X= A\cup B$ of two subspaces.
\begin{enumerate}
\item Distributions in  $\dDist(X,Y)$ can be identified with pairs $(r_A,r_B)$ of deterministic distributions on $(A,Y)$ and $(B,Y)$ satisfying $r_A|_{A\cap B}=r_B|_{A\cap B}$.

\item Distributions in $\siDist(X,Y)$ can be identified with pairs $(p_A,p_B)$ of simplicial distributions on $(A,Y)$ and $(B,Y)$ satisfying $p_A|_{A\cap B}=p_B|_{A\cap B}$.
\end{enumerate}
}
\Proof{
This is a consequence of the fact that a map $r:X\to Y$ (or $p:X\to D_RY$) of spaces is uniquely determined by the restrictions $r|_A$ and $r|_B$. Conversely, a pair $(r_A,r_B)$ of maps compatible on the intersection $A\cap B$ can be glued to give a map $r:X\to Y$. 
}

Note that an analogous observation does not apply to classical distributions. In general $\clDist(X,Y)$ is different from compatible pairs $(d_A,d_B)$ of classical distributions. 
However, we can associate a classical distribution to any such pair $(d_A,d_B)$ generalizing the constructions used in the diamond scenario 
(Example \ref{ex:diamond-noncontextual}), or
Fine's theorem \cite{fine1982joint}  \rev{including} its versions \cite{halliwell2019fine} (Example \ref{ex:two-copies-Delta-n}) \rev{and joint probability distributions constructed in \cite{flori2013compositories}}.

\Lem{
\label{lem:gluing-lemma}
\co{Let $R$ be a semifield.}
Suppose that the space of measurements is a union $X= A\cup B$. For $p\in \siDist(X,Y)$
 the following are equivalent:
\begin{enumerate}
\item The distribution $p$ is noncontextual, i.e. there exists $d\in \clDist(X,Y)$ such that $\Theta_X(d)=p$.
\item There exists distributions $d_A \in \clDist(A,Y)$,  $d_B\in \clDist(B,Y)$ such that $d_A|_{A\cap B} = d_B|_{A\cap B}$ in $\clDist(A\cap B,Y)$ and $\Theta_A(d_A)=p|_A$, $\Theta_B(d_B)=p|_B$.  
\end{enumerate}
}
\Proof{
Assuming (1) holds we can take $d_A=d|_A$ and $d_B=d|_B$. Then $d_A|_{A\cap B} = d|_{A\cap B} = d_B|_{A\cap B}$ in $\clDist(A\cap B,Y)$. Conversely, assume that (2) holds. Using the pair $(d_A,d_B)$ we can construct a distribution $d\in \clDist(X,Y)$. For an outcome map $r:X\to Y$ the distribution is defined as follows
\begin{equation}\label{eq:classical-ansatz}
d(r) = \frac{d_A(r|_A) d_B(r|_B)}{d_A|_{A\cap B}(r|_{A\cap B}) }.
\end{equation}
Note that the denominator is also equal to $d_B|_{A\cap B}(r|_{A\cap B})$ by the compatibility of the classical distributions $d_A$ and $d_B$.
First we verify that $d$ is a probability distribution on $\dDist(X,Y)$: Clearly $d(r)\geq 0$ and we have
$$
\begin{aligned}
\sum_r d(r) &= \sum_r \frac{d_A(r|_A) d_B(r|_B)}{d_A|_{A\cap B}(r|_{A\cap B})} \\
&= \sum_{r':A\to Y} \frac{d_A(r')}{d_A|_{A\cap B}(r'|_{A\cap B})} \sum_{t:t|_{A\cap B} = r'|_{A\cap B}} d_B(t) \\
&= \sum_{r':A\to Y} \frac{d_A(r')}{d_A|_{A\cap B}(r'|_{A\cap B})} d_B|_{A\cap B}(r'|_{A\cap B})\\
&= \sum_{r':A\to Y}  d_A(r') =1.
\end{aligned}
$$
In the second line we used Proposition \ref{pro:union-det-simp-distributions} to identify $r$ as a pair $(r',t)$, where $t:B\to Y$, of compatible outcome maps.
Next we verify that $\Theta_X(d)=p$. For this it suffices to show that $d|_A=d_A$ and $d|_B=d_B$. Note that in this case we have $\Theta_X(d)|_A = \Theta_A(d|_A)=\Theta_A(d_A)=p|_A$ and similarly for $B$. Since $p$ is uniquely specified by its restrictions $p|_A$ and $p|_B$ we conclude that $\Theta_X(d)=p$. To verify that $d|_A=d_A$ we compute
$$
\begin{aligned}
d|_A(r') &= \sum_{r:r|_A=r'} d(r) \\
&= \sum_{r:r|_A=r'} \frac{d_A(r|_A) d_B(r|_B)}{d_B|_{A\cap B}(r|_{A\cap B})} \\
&= \frac{d_A(r')}{d_B|_{A\cap B}(r'|_{A\cap B})}\sum_{r:r|_A=r'} d_B(r|_B) \\
&= \frac{d_A(r')}{d_B|_{A\cap B}(r'|_{A\cap B})}\sum_{t:t|_{A\cap B}=r'|_{A\cap B}} d_B(t) \\
& = d_A(r').
\end{aligned}
$$
In the fourth line we used Proposition \ref{pro:union-det-simp-distributions} to identify $\set{r\in D(X,Y)\,:\,r|_A=r'}$ with $\set{t\in D(B,Y)\,:\, t|_{A\cap B}=r'|_{A\cap B}}$.
Similarly one can show that $d|_B=d_B$.
}

\Cor{\label{cor:gluing-along-simplex}
\co{Let $R$ be a semifield.}
Suppose that $X=A\cup B$ with $A\cap B=\Delta^n$ for some $n\geq 0$. Then $p\in \siDist(X,Y)$ is noncontextual if and only if both $p|_A\in \siDist(A,Y)$ and $p|_B\in \siDist(B,Y)$ are noncontextual. 
}
\Proof{
When $A\cap B=\Delta^n$ we have $\clDist(\Delta^n,Y)=D_R(Y_n)=\siDist(\Delta^n,Y)$. Hence as a consequence $d_A|_{A\cap B}=d_B|_{A\cap B}$ whenever $\Theta_A(d_A)=p|_A$ and $\Theta_B(d_B)=p|_B$ by commutativity of  Diag.~(\ref{diag:comm-diag-meas-space-change-Theta}) for the measurement spaces $A$ and $B$. Therefore the result follows from Lemma \ref{lem:gluing-lemma}.
}

\Ex{[Generalized Fine ansatz]\label{ex:two-copies-Delta-n}
{\rm
Consider a measurement space $Z_{d,n}$ obtained by gluing 
two copies of a $d$-simplex along an $n$-dimensional face where $n\leq d$. 
Example \ref{ex:SingleTriangle-SimpScenario} and Corollary \ref{cor:gluing-along-simplex} imply that any simplicial distribution on $(Z_{d,n},Y)$  is noncontextual. This observation was proved in the special case of the diamond scenario in 
Example \ref{ex:diamond-noncontextual}.
Fig.~(\ref{fig:Two-delta3}) illustrates $Z_{3,2}$, in which case Eq.~(\ref{eq:classical-ansatz}) specializes to the ansatz used by Fine \cite{fine1982joint}. The case of $Z_{d,2}$ is also studied in \cite{halliwell2019fine}.
}}

\begin{figure}[h!]
\begin{center}
\includegraphics[width=.25\linewidth]{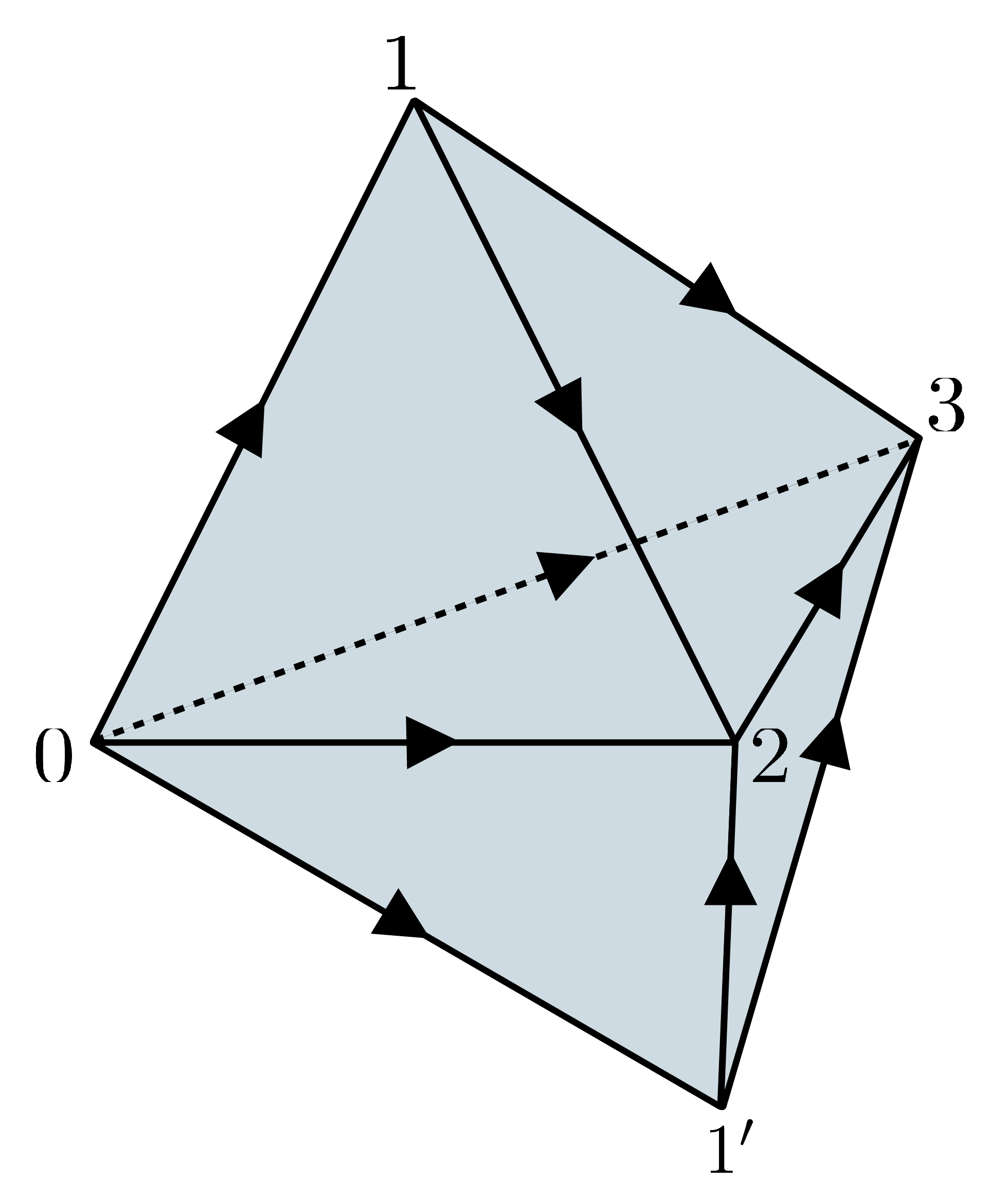} 
\end{center} 
\caption{Two copies of $\Delta^3$ are glued along their $d_1$ faces.  
}
\label{fig:Two-delta3}
\end{figure}

\subsection{Extensions of distributions}

In certain situations noncontextuality can be characterized as the possibility of extending the distribution to a larger measurement space. This technique will be very useful to extract the essence of Fine's argument for the sufficiency of the CHSH inequalities as a criteria for noncontextuality.

\Pro{
\label{pro:extending-dist}
Consider two simplicial scenarios $(Z,Y)$ and $(X,Y)$ with a map $f:Z\to X$ between the spaces of measurements. Assume that both $f^*:\clDist(X,Y)\to \clDist(Z,Y)$ and $\Theta_X: \clDist(X,Y)\to \siDist(X,Y)$ are surjective. Then $p\in \siDist(Z,Y)$ is noncontextual if and only if $p$ extends to a simplicial distribution on $(X,Y)$, i.e. there exists $\tilde p \in \siDist(X,Y)$ such that $f^*(\tilde p)=p$.
}
\begin{proof}
This follows from the   commutative Diag.~(\ref{diag:comm-diag-meas-space-change-Theta}).
Assume $p$ is a noncontextual distribution on $(Z,Y)$. Let $d$ be a classical distribution such that $\Theta_Z(d)=p$. By surjectivity of $f^*$ between the classical distributions  there exists $\tilde d$, a classical distribution on $(X,Y)$, such that $f^*(\tilde d)=d$. Commutativity of the diagram implies that $\tilde p =\Theta_X(\tilde d)$ is an extension of $p$. Converse implication follows from Proposition \ref{pro:ext-noncontextual}.
\end{proof}
 
Note that the surjectivity of the map between the classical distributions is satisfied  if $f^*:\dDist(X,Y)\to \dDist(Z,Y)$ is surjective.  
A typical application of Proposition \ref{pro:extending-dist} concerns the case where $f$ is a subspace inclusion.  

\Ex{[Extending from the boundary]\label{ex:boundary-delta-n}
{\rm 
Consider the inclusion of the boundary $f:\partial \Delta^n \to \Delta^n$ and take $Y=N\ZZ_d$ as the space of outcomes. 
By Proposition \ref{pro:outcome-map-nerve-edges}
we know that $r\in \dDist(\Delta^n,N\ZZ_d)$ is determined by the edges $r_{\sigma^{(i-1)i}}$. 
The $1$-outcomes  for the remaining $1$-contexts   can be obtained (inferred) from $a_i$'s as a consequence of the formula in Eq.~(\ref{eq:outcome-map-nerve-edges}). 
A similar analysis works for $\partial \Delta^n$  when $n\geq 3$. Though for $n=2$, note that $\partial \Delta^2$ consists of three edges and an outcome map $t:\partial \Delta^2\to N\ZZ_d$ is specified by a triple $(t_{\sigma^{01}}, t_{\sigma^{12}}, t_{\sigma^{02}})\in \ZZ_d^3$. Unlike $\Delta^2$ there is no compatibility relation imposed among the $1$-outcomes, i.e. we do not require $t_{\sigma^{02}} = t_{\sigma^{01}}+t_{\sigma^{12}}$.
But for $n\geq 3$ the subspace $\partial\Delta^n$ has the same set of $2$-contexts as  $\Delta^n$ and an outcome map on the boundary is still determined by the $1$-outcomes assigned to $\sigma^{(i-1)i}$.
Therefore in this case $f^*:\dDist(\Delta^n,N\ZZ_d)\to \dDist(\partial \Delta^n,N\ZZ_d)$ is a bijection, and Proposition \ref{pro:extending-dist} implies that a distribution on $(\partial \Delta^n,N\ZZ_d)$ is noncontextual if and only if it extends to $(\Delta^n,N\ZZ_d)$. In Proposition \ref{pro:boundary-delta-3} we show that for $n=3$ and $d=2$ extension always exists, hence any distribution on $(\partial \Delta^3,N\ZZ_2)$ is noncontextual. 
}} 



\subsection{Diamond scenarios and CHSH inequalities}

For the rest of this section we restrict to simplicial probability distributions, that is we take $R=\nnegR$.
Let $Z$ denote the diamond scenario obtained by gluing two copies of $\Delta^2$ along a common edge. Here we don't make any restrictions on the choice of the face on each triangle. Our goal is to study distribution on the boundary $\partial Z$ and determine when an extension to the whole space exists. For this we will decompose the boundary into simpler spaces.
Let $\Lambda^n_k$ for $0\leq k\leq n$ denote the subspace of $\Delta^n$ generated by all the faces of the form $d_i\sigma^{01\cdots n}$ where $i\neq k$. This space is called the {\it $n$-dimensional $k$-th horn}.
For a distribution $p$ on $(\Delta^n,Y)$ we will write $p_{a_0a_1\cdots a_k}=p_{\sigma^{a_0a_1\cdots a_k}}$ for the distribution assigned to the simplex $\sigma^{a_0a_1\cdots a_k}$.
In dimension $n=2$ a horn is obtained by omitting one of the edges in the boundary of a $2$-simplex. We will first describe  when a simplicial  distribution on $(\Lambda^2_1,N\ZZ_2)$ extends to $(\Delta^2,N\ZZ_2)$ along the  inclusion map $\Lambda^2_1 \to \Delta^2$. This is equivalent to understanding when such a distribution is noncontextual as a consequence of Proposition \ref{pro:extending-dist}. Let $p$ be a distribution on $(\Lambda^2_1,N\ZZ_2)$  specified by the tuple $(p_{01}^0,p_{12}^0)\in [0,1]^2$.   If $p$ extends to a distribution $\tilde p$ on $\Delta^2$ then we require  
$$
\tilde p^{ab} = \frac{1}{2}(p_{01}^a+p_{12}^b-p_{02}^{a+b+1}) \geq 0\;\;\text{ for all }a,b\in \ZZ_2
$$
for some distribution $p_{02}$ on the $d_1$-face. Note that $p_{ij}^1=1-p_{ij}^0$ for $0\leq i<j\leq 2$, so that we can express the inequalities only in terms of $p_{ij}^0$. A suitable $\tilde{p}^{ab}$ can be found if and only if $p_{02}^{0}$ satisfies
\begin{eqnarray}
|p_{01}^0+p_{12}^0-1| \leq  p_{02}^{0}\leq 1-|p_{01}^0-p_{12}^0|.
\end{eqnarray}
Applying Fourier--Motzkin elimination to $p_{02}^0$ we see that $p$ extends to $\Delta^2$ if and only if 
\begin{equation}\label{eq:horn-ext-triangle}
|p_{01}^0+p_{12}^0-1| \leq 1-|p_{01}^0-p_{12}^0|.
\end{equation}
Notice that such a $p_{02}^{0}$ can always be found for any $p_{01}^{0}$ and $p_{12}^{0}$ since expanding the absolute value in Eq.~(\ref{eq:horn-ext-triangle}) and canceling like terms yields the ``trivial" inequalities $0\leq p_{01}^{0}, p_{12}^{0}\leq1$. Similar inequalities exist for the other horn inclusions $\Lambda^2_{k}\to \Delta^2$ for $k=0,2$. We will use this analysis to study extensions from the boundary of the diamond scenario. 

\Pro{
\label{pro:diamond-CHSH}
A distribution $p$ on the boundary of the diamond scenario extends to the diamond if and only if $p$ satisfies the CHSH inequalities in  Eq.~(\ref{eq:CHSH234}).
}
\Proof{For $Z$ we will assume that it is obtained by gluing along the $d_1$ faces. The argument for the other choices is similar.
The distribution $p$ on the boundary of the diamond is specified by $(p_{01}^0,p_{12}^0,q_{01}^0,q_{12}^0)\in [0,1]^4$. The distribution $p$ extends to a distribution on the diamond if and only if two copies of the inequalities, one for $p_{ij}$'s and one for $q_{i'j'}$, have a common solution for $p_{02}^{0} = q_{02}^{0}$. Generalizing (\ref{eq:horn-ext-triangle}), this occurs precisely when 
\begin{eqnarray}
\max\set{|p_{01}^0+p_{12}^0-1|,|q_{01}^0+q_{12}^0-1|} \leq  p_{02}^{0} \leq \min\set{1-|p_{01}^0-p_{12}^0|,1-|q_{01}^0-q_{12}^0|}.\notag
\end{eqnarray}
By Fourier-Motzkin elimination this single inequality is equivalent to the following four 
$$
\begin{aligned}
|p^0_{01}+p^0_{12}-1| &\leq 1-|p^0_{01}-p^0_{12}|\\
|p^0_{01}+p^0_{12}-1| &\leq 1-|q^0_{01}-q^0_{12}|\\
|q^0_{01}+q^0_{12}-1| &\leq 1-|p^0_{01}-p^0_{12}|\\
|q^0_{01}+q^0_{12}-1| &\leq 1-|q^0_{01}-q^0_{12}|,
\end{aligned}
$$
in addition to the trivial inequalities corresponding to $0\leq p_{ij}^{0}, q_{ij}^{0}\leq 1$. Expanding the absolute values gives the inequalities
\begin{equation}\label{eq:CHSH-prob}
\begin{aligned}
0&\leq  p_{01}^0 + p_{12}^0 + q_{01}^0 - q_{12}^0 \leq 2 \\
0&\leq  p_{01}^0 + p_{12}^0 - q_{01}^0 + q_{12}^0 \leq 2 \\
0&\leq  p_{01}^0 - p_{12}^0 + q_{01}^0 + q_{12}^0 \leq 2 \\
0&\leq  -p_{01}^0 + p_{12}^0 + q_{01}^0 + q_{12}^0 \leq 2. 
\end{aligned}
\end{equation}
These equations are formally identical to the CHSH inequalities appearing in Eq.(\ref{eq:CHSH234}).
}%


Next we analyze the situation for two copies of diamonds glued along a triangle. This produces a space  as in Fig.~(\ref{fig:twodiamonds}). Again orientation of the edges can be different, our choice is the one that will be used later in the topological proof of Fine's theorem.

\begin{figure}[h!]
\begin{center}
\includegraphics[width=.2\linewidth]{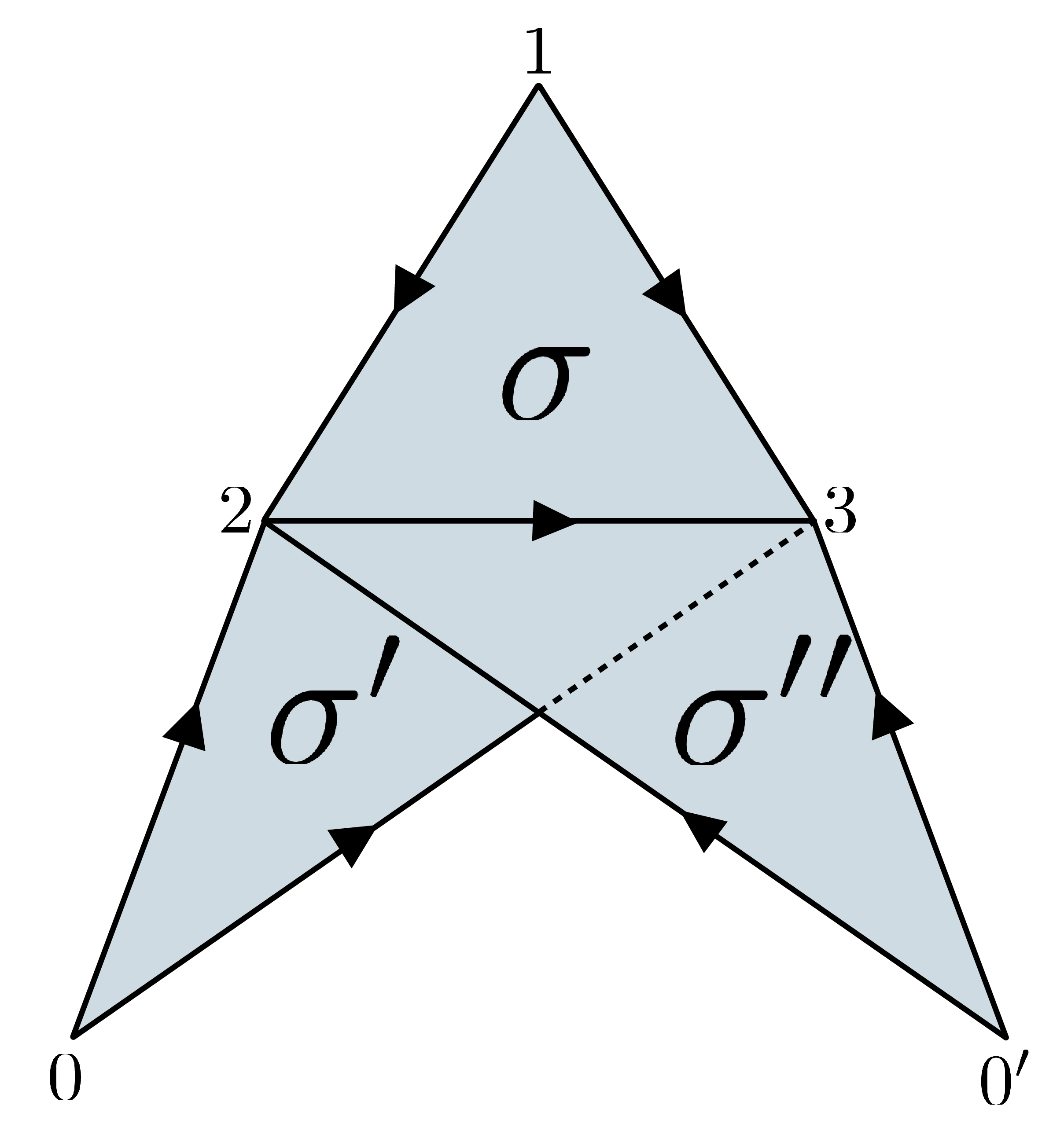}
\end{center} 
\caption{This space is obtained by gluing two diamonds, denoted by $Z(\sigma\sigma')$ and $Z(\sigma\sigma'')$, along a copy of $\Delta^2$ with vertices $(1,2,3)$. The third diamond $Z(\sigma'\sigma'')$ consists of the other two triangles.   
}
\label{fig:twodiamonds}
\end{figure}  

\Pro{\label{pro:2diamonds-3CHSH}
Let $A$ denote the space (as in Fig.~(\ref{fig:twodiamonds})) obtained by gluing two copies of $Z$ along $\Delta^2$. 
A simplicial probability distribution on $(\partial A,N\ZZ_2)$ extends to $(A,N\ZZ_2)$ if and only if the distributions on the boundary of the three diamonds $Z(\sigma\sigma')$, $Z(\sigma\sigma'')$ and $Z(\sigma'\sigma'')$ satisfy the CHSH inequalities.
}
\Proof{
Let $p$ denote a simplicial probability distribution on  $(\partial A,N\ZZ_2)$.
Let us write $p_{ij}$ for a distribution on an edge of $A$ between the vertices $(i,j)$ where $i<j$. The distribution on the boundary of $A$ specifies the values of $p_{ij}^0$ for $(i,j)\neq (2,3)$. Then $p$ extends to $A$ if and only if  
$$
\max\set{|p_{12}^0+p_{13}^0-1|,|p_{02}^0+p_{03}^0-1|,|p_{0'2}^0+p_{0'3}^0-1|} \leq 
\min\set{1-|p_{12}^0-p_{13}^0|,1-|p_{02}^0-p_{03}^0|,1-|p_{0'2}^0-p_{0'3}^0|}. 
$$
This inequality is obtained by applying the extension criterion in the proof of Proposition \ref{pro:diamond-CHSH} to the two diamonds $Z(\sigma\sigma')$ and $Z(\sigma\sigma'')$, and then eliminating $p_{23}^0$ by Fourier--Motzkin elimination.
This inequality is equivalent to the set of CHSH inequalities for the three diamonds. 
}

\subsection{Topological proof of Fine's theorem}
\label{sec:TopProofFine}

Let $H$ denote  the space of measurements obtained by gluing two copies of $\partial\Delta^3$ along a common face given by a triangle. For concreteness we will assume they are glued along their $d_0$-faces as in Fig~(\ref{fig:S-H}).
The subspace generated by the faces $\sigma^{013}$ and $\sigma^{012}$ of the first $\partial \Delta^{3}$ is a copy of the diamond scenario $Z$. Similarly the subspace generated by $\sigma^{0'13}$ and $ \sigma^{0'12}$ is another copy of the diamond scenario $Z$. The intersection of these two copies is a horn $\Lambda^2_0$ generated by the edges $d_0 \sigma^{013}$ and $d_0\sigma^{012}$. Let $Q$ denote the subspace given by the union of the two diamonds; see Fig~(\ref{fig:S-H}). We will refer to $(Q,N\ZZ_2)$ as the {\it square scenario}. We begin by showing that any distribution on $\partial \Delta^3$ is noncontextual.

\Pro{\label{pro:boundary-delta-3}
Any simplicial probability  distribution on $(\partial \Delta^3,N\ZZ_2)$ is noncontextual. 
}
\Proof{
By Example \ref{ex:boundary-delta-n} it suffices to show that any distribution $p$ on $(\partial \Delta^3,N\ZZ_2)$ extends to a distribution on $(\Delta^3,N\ZZ_2)$. The distribution $p$ is determined by the marginals $p_{ijk}=p_{\sigma^{ijk}}$ on the generating simplices $\sigma^{ijk}$ where $0\leq i<j<k\leq 3$. Extending $p$ to $\Delta^3$ amounts to specifying a distribution $p_{0123}=p_{\sigma^{0123}}$ on the generating simplex $\sigma^{0123}$. This distribution is defined as follows: Set $p_{0123}^{000}$ to be the minimum of $\set{ p_{ijk}^{00}\,:\, 0\leq i<j<k\leq 3 }$ and determine the remaining values $p_{0123}^{abc}$ from the marginals. To see how this works assume that the minimum is $p_{123}^{00}$. The cases where the minimum is $p_{023}^{00}$, $p_{013}^{00}$ and $p_{012}^{00}$ are dealt with similarly. The marginals for the $d_i$-faces give us
\begin{equation}\label{eq:mar-faces-delta3}
\begin{aligned}
p_{0123}^{100} &= p_{123}^{00}- p_{0123}^{000} \\ 
p_{0123}^{110} &= p_{023}^{00}-p_{0123}^{000} \\
p_{0123}^{011} &= p_{013}^{00}-p_{0123}^{000} \\
p_{0123}^{001} &= p_{012}^{00}-p_{0123}^{000} 
\end{aligned}
\end{equation}
which are  all nonnegative because of our choice of $p_{0123}^{000}$.
The remaining components of $p_{0123}$ can be obtained from first marginalizing to the $d_0$-th face and then further marginalizing to the edges on the boundary of this face, which also turn out to be nonnegative:
$$
\begin{aligned}
p_{0123}^{101} &= p_{123}^{01}- p_{0123}^{001}= p_{12}^0- p_{123}^{00} - p_{012}^{00}+p_{0123}^{000} = p_{12}^0 - p_{012}^{00} \geq 0\\
p_{0123}^{010} &=   p_{123}^{10}- p_{0123}^{110} = p_{23}^0- p_{123}^{00} -p_{023}^{00}+p_{0123}^{000} =p_{23}^0-p_{023}^{00} \geq 0 \\
p_{0123}^{111} &= p_{123}^{11}- p_{0123}^{011} = p_{13}^0- p_{123}^{00}-p_{013}^{00}+p_{0123}^{000} = p_{13}^0-p_{013}^{00}\geq 0.
\end{aligned}
$$ 
It remains to show that marginalizing to the $d_i$-th face produces the original distribution. This is clear for $i=0$ by construction, and can be shown for the remaining faces by direct verification. The distribution is also normalized as can be shown by summing up the elements in $p_{0123}$ given above.
}

\begin{figure}[h!]
\begin{center}
\includegraphics[width=.6\linewidth]{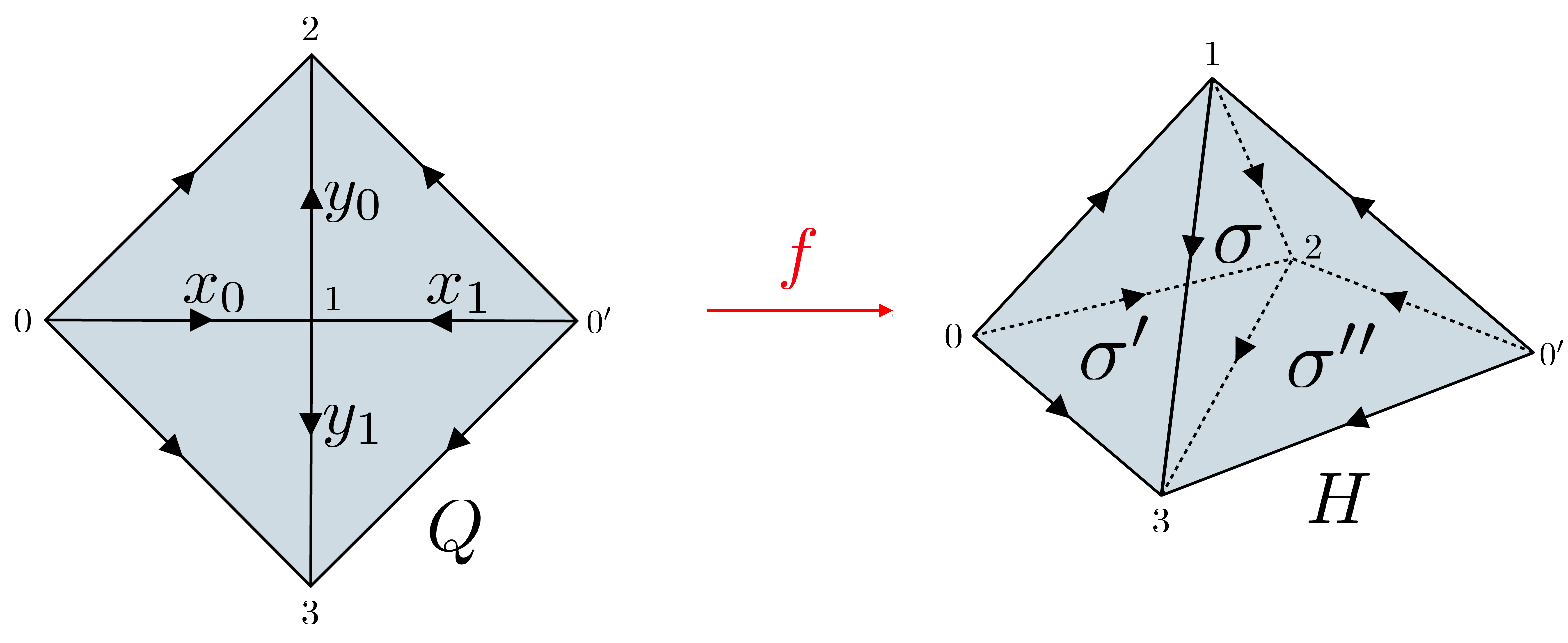} 
\end{center} 
\caption{$H$ is obtained from $Q$ by adding the triangles with vertices $\sigma=(1,2,3)$, $\sigma'=(0,2,3)$ and $\sigma''=(0',2,3)$.  
}
\label{fig:S-H}
\end{figure}

\Thm{[Fine's theorem]
\label{thm:Fines-theorem}
Let $Q$ denote the measurement space of the square scenario and $p$ be a simplicial probability distribution on $(Q,N\ZZ_2)$. Then  $p|_{\partial Q}$ satisfies the CHSH inequalities if and only if $p$ is noncontextual.
}
\Proof{ 
By  Corollary \ref{cor:gluing-along-simplex} and Proposition \ref{pro:boundary-delta-3}  any simplicial probability distribution on $(H,N\ZZ_2)$ is noncontextual. The map $f^*:\clDist(H,N\ZZ_2)\to \clDist(Q,N\ZZ_2)$ between the classical distributions is a bijection since both sets are given by $\ZZ_2^4$ and an outcome assignment is determined by the outcomes assigned to the measurements $x_i$'s and $y_j$'s (Proposition \ref{pro:outcome-map-nerve-edges}).
By Proposition \ref{pro:extending-dist} $p$ is noncontextual if and only if it extents to $H$.
Observe that $H$ is the union of $Q$ and $A$ (the space in Fig.~(\ref{fig:twodiamonds})) along the boundary $\partial A$. Now by Proposition \ref{pro:2diamonds-3CHSH} $p|_{\partial A}$ extends to $A$ if and only if the three CHSH inequalities are satisfied. Two of those corresponding to $Z(\sigma\sigma')$ and $Z(\sigma\sigma'')$ are satisfied because these diamonds   also lie on the boundary of the diamonds with vertices $(0,1,2,3)$ and $(0',1,2,3)$ in $Q$, and Proposition \ref{pro:diamond-CHSH} implies that at their boundary the CHSH inequalities are satisfied. The third one corresponds precisely to the CHSH inequalities for $p|_{\partial Q}$. 
}

We turn back to  the CHSH scenario where the measurement space is represented as a punctured torus $T^\circ$ as in Example \ref{ex:Bell222}; see Fig.~(\ref{fig:Bell-punctured-torus}).

\Cor{\label{cor:extension-torus}
A simplicial probability distribution  $p$ on $(T^\circ,N\ZZ_2)$ extends to $(T,N\ZZ_2)$ if and only if $p$ is noncontextual.
}
\Proof{
By Proposition \ref{pro:diamond-CHSH} the distribution $p$ extends to the torus, i.e. to the diamond in the middle, if and only if   $p|_{\partial T^\circ}$ satisfies the CHSH inequalities. Fine's theorem implies that this is equivalent to $p$ being noncontextual. 
} 

For example, Corollary \ref{cor:extension-torus} implies that the PR box in  Fig.~(\ref{fig:Bell-punctured-torus-ext}) is contextual since it fails to extend to the torus.   
\co{
An alternative proof of Fine's theorem that uses the punctured torus realization (Fig.~\ref{fig:Bell-punctured-torus}) is given in \cite{okay2022mermin}.
}

\section{Strong contextuality and cohomology}
\label{sec:StrongContCoho}

In this section we show that strong contextuality can be detected by cohomology. This is the first step towards verifying our claim that the simplicial formalism generalizes the earlier topological approach. In fact, significant control is achieved by constructing our cohomological classes for general simplicial distributions, as opposed to cohomology classes constructed in \cite{Coho} that come essentially from algebraic relations among quantum observables. 
In the next section we apply our constructions to state-independent contextuality making the link to the topological approach.





\subsection{Strong contextuality}

There is a stronger version of contextuality whose definition relies on the notion of support.
The {\it support of a simplicial distribution} $p:X\to D_RY$  is defined by
\begin{equation}\label{eq:support-simp-dist}
\supp(p)=\set{s:X\to Y\,:\, p_{\rev{\sigma}}(s_{\rev{\sigma}})>0\;\forall {\rev{\sigma}}\in X_n,\;n\geq 0}.
\end{equation}

\Pro{\label{pro:StrongImpliesContextual}
If the support of $p:X\to D_RY$ is empty then $p$ is contextual 
}
\Proof{
Assume that $p$ is noncontextual, that is, it can be written as a probabilistic mixture $p=\sum_r \lambda(r) \, \delta^r$. Then for $s$ such that $\lambda(s)\neq 0$ we have
$$
p_{\rev{\sigma}}(s_{\rev{\sigma}}) = \sum_r \lambda(r) \, \delta^{r_{\rev{\sigma}}}(s_{\rev{\sigma}}) = \lambda(s) >0.  
$$
Therefore $s$ belongs to the support of $p$.
}

\Def{\label{def:StronglyContextual}
{\rm
A simplicial distribution $p$ on   $(X,Y)$ is called {\it strongly contextual} if its  support $\supp(p)$ is empty.
}}

This notion generalizes strong contextuality for nonsignaling distributions; see Corollary \ref{cor:strong-comparison-sheaf}. For example, the PR box described in  Fig.~(\ref{fig:Bell-punctured-torus-ext}) is strongly contextual.

\subsection{Deterministic distributions and cohomology}\label{sec:det-dist-nerve}

In this section we relate the set $\dDist(X,N\ZZ_d)$ of deterministic distributions to the first cohomology group of $X$.
Let us introduce the cohomology group of a simplicial set. We omit the discussion of homology groups as they are not used in this paper. We will  restrict to $\ZZ_d$ as the coefficients of our cohomology groups.
Given a space $X$ we can construct a {\it cochain complex} 
$$
C^*(X):\;\;  C^0(X) \xrightarrow{\delta_0} C^1(X) \xrightarrow{\delta_1} C^2(X) \xrightarrow{\delta_2}  \cdots \to C^n(X) \xrightarrow{\delta_n} C^{n+1}(X) \to \cdots
$$ 
where
\begin{itemize}
\item $C^n(X)$ consists of functions $f:X_n\to \ZZ_d$ such that $f(\sigma)=0$ whenever $\sigma$ is a degenerate $n$-simplex\footnote{The conventional way to introduce the cochain complex of a simplicial set is to consider all functions $X_n\to \ZZ_d$, not just those that vanish on degenerate simplices, i.e. those simplices that lie in the image of a degeneracy map. However, it is well-known that ignoring degenerate simplices does not affect homology \cite[Chapter III]{goerss2009simplicial}.  In the dual fashion considering functions that vanish on degenerate simplices suffices for the computation of cohomology. Our choice simplifies the exposition by avoiding relative complexes in the cohomology long exact sequence.
},
\item $\delta_n: C^n(X)\to C^{n+1}(X)$ is given by $\delta_nf(\sigma)=\sum_{i=0}^{n+1} (-1)^i f(d_i\sigma)$ for all $\sigma\in X_{n+1}$.
\end{itemize}  
 The {\it $n$-th cohomology group} of $X$ is defined to be the quotient group given by the kernel of $\delta_n$ modulo the image of $\delta_{n-1}$:
$$
H^n(X) = \frac{\ker(\delta_n)}{\im(\delta_{n-1})}.
$$ 
A map $g:Z\to X$ of spaces induces a homomorphism $g^*:H^n(X)\to H^n(Z)$ between the cohomology groups. Under this map a cohomology class $[f]$ is sent to   $g^*([f])=[f\circ g_n]$. 

By Proposition \ref{pro:outcome-map-nerve-edges} (and Remark \ref{rem:f-vanishes-on-degenerate-simplices}) sending an outcome assignment $r:X\to N\ZZ_d$ to the $1$-cochain $r_1:X_1\to \ZZ_d$ induces a function 
\begin{equation}\label{diag:alphaX}
\alpha_X:\dDist(X,N\ZZ_d) \to  H^1(X).
\end{equation}

\Cor{\label{cor:reduced-deterministic}
Let $X$ be a space of measurements with a single $0$-context. Then the function $\alpha_X$ given in 
(\ref{diag:alphaX}) is a bijection.
}
\Proof{
When $X_0=\set{\star}$ the first cohomology group is given by the kernel of $\delta_1:C^1(X)\to C^2(X)$ since  $\delta_0$ is the zero map. In this case $H^1(X)$ consists of those functions $f$ satisfying the condition in Proposition \ref{pro:outcome-map-nerve-edges}. Therefore deterministic distributions are given by the elements of $H^1(X)$.  
}

\Ex{\label{ex:H1-circle}
{\rm
An example to a measurement space with a single $0$-context is the circle $S^1$.
The first cohomology group of $S^1$ is computed using the cochain complex $C^0\to C^1\to C^2$, which is given by
$$
\ZZ_d \xrightarrow{0} \ZZ_d^{\set{0,1}} \xrightarrow{\delta_1} \ZZ_d^{\set{00,01,10}}.
$$
We use the simplices given in Eq.~(\ref{eq:circle-alt-rep}) and write  $ab$ instead of $(a,b)$ for simplicity. Given $f:\set{0,1}\to \ZZ_d$ we have $\delta_1f(ab)=f(b)-f(a+b)+f(a)$. Thus $\delta_1f(ab)=f(0)$ for $ab=00,01,10$ and $f$ is in the kernel if and only if $f(0)=0$. This implies that 
$$
H^1(S^1) = \set{f\in \ZZ_d^{\set{0,1}}\,:\, f(0)=0} =\ZZ_d
$$ 
since such a function is determined by $f(1)\in \ZZ_d$. Therefore 
$
\dDist(S^1,N\ZZ_d)  \cong H^1(S)=\ZZ_d
$.
}}


\subsection{Cohomological witness for contextuality} \label{sec:CohoWit}

Let $X$ be a space of measurements and $Z$ be a subspace. One can construct the quotient space $\bar X=X/Z$ where the subspace $Z$ is identified to a point. More formally, the  $n$-simplices of $\bar X$ are given by the difference set $X_n-Z_n$ union with $\set{\star_n}$ representing the collapsed simplices coming from $Z_n$. The simplicial structure is given as follows:  
\begin{itemize}
\item For $\sigma \in X_n-Z_n$ the face and the degeneracy maps act as those of $X_n$, except that $d_i\sigma = \star_{n-1}$ if $d_i\sigma\in Z_{n-1}$. 
\item $d_i\star_n=\star_{n-1}$ and $s_j\star_n = \star_{n+1}$.
\end{itemize}
We will construct a cohomological witness on the quotient space $\bar X$ that detects contextuality. 
Let us write $i:Z\to X$ for the inclusion and $q:X\to \bar X$ for the quotient map.
The main tool is the following exact sequence\footnote{A sequence of homomorphisms $A\xrightarrow{f} B\xrightarrow{g} C$ is said to be {\it exact} if the image of $f$ is equal to the kernel of $g$.} in cohomology
\begin{equation}\label{diag:connecting-exact}
H^1(\bar X) \to H^1(X) \to H^1(Z) \xrightarrow{\zeta} H^2(\bar X).
\end{equation}
Given $[f]\in H^1(Z)$ the cohomology class $\zeta[f]$ is defined as follows:
\begin{itemize}
\item Lift $f$ to a cochain $\tilde f\in C^1(X)$ by setting 
$$
\tilde f(\sigma) = \left\lbrace
\begin{array}{ll}
f(\sigma) &  \sigma\in Z_1 \\
0 & \text{otherwise.}
\end{array}
\right.
$$  
\item Compute $\delta_1\tilde f: X_2 \to \ZZ_d$, a $2$-cochain in $C^2(X)$: 
$$
\delta_1\tilde f(\sigma)  = \tilde f(d_0 \sigma) - \tilde f(d_1\sigma) + \tilde  f(d_2\sigma).
$$ 
\item Using $\delta_1 \tilde f$ construct a cochain $\bar f$ in $C^2(\bar X)$ by setting
\begin{equation}\label{eq:bar-f}
\bar f(\sigma)  = \left\lbrace
\begin{array}{ll}
\delta_1\tilde f(\sigma) &  \sigma\in X_2-Z_2 \\
0 & \sigma = \star_2.
\end{array}
\right.
\end{equation}
\end{itemize}
The connecting homomorphism $\zeta$ sends $[f]$ to the cohomology class $[\bar f]$.
Example \ref{ex:coho-wit-Mermin-state-dep} below  explains how this is done in practice. Exactness of 
(\ref{diag:connecting-exact}) is a standard fact in algebraic topology; see \cite[\S 1.3]{weibel1995introduction}. Next result will be useful in Section \ref{sec:state-indep-contex}.

\Lem{\label{lem:split-connecting-zero}
If the inclusion map $i:Z\to X$ splits, i.e. there exists a map $j:X\to Z$ of spaces such that $j\circ i:Z\to Z$ is the identity map, then the connecting homomorphism $\zeta$ is the zero map.
}
\Proof{The map $j$ can be used to show that  $i^*:H^1(X)\to H^1(Z)$ is surjective so that by exactness of 
(\ref{diag:connecting-exact}) the connecting homomorphism is zero. To see the surjectivity observe that the composite $H^1(Z) \xrightarrow{j^*} H^1(X) \xrightarrow{i^*} H^1(Z)$ is the identity map.
}

Using the map in 
(\ref{diag:alphaX}) we  
will compare the sequence in 
(\ref{diag:connecting-exact}) to the sequence of deterministic distributions to obtain the following commutative diagram
\begin{equation}\label{diag:diag-dist-coho}
\begin{tikzcd}
 D(\bar X,N\ZZ_d) \arrow{r}{q^*} \arrow{d}{\alpha_{\bar X}} &  D(X,N\ZZ_d) \arrow{r}{i^*} \arrow{d}{\alpha_{ X}} &  D(Z,N\ZZ_d) \arrow{d}{\alpha_{Z}} &  \\
H^1(\bar X)  \arrow{r}{q^*} & H^1(X)   \arrow{r}{i^*} & H^1(Z)   \arrow{r}{\zeta} & H^2(\bar X)
\end{tikzcd}
\end{equation}

\Pro{\label{pro:det-ext-iff-alpha-delta-zero}
A deterministic distribution $r\in \dDist(Z,N\ZZ_d)$ extends to $X$, i.e. there exists $s \in \dDist(X,N\ZZ_d)$ such that $i^*(s)=r$, if and only if $\zeta\circ \alpha_Z(r)=0$ in $H^2(\bar X)$.
}
\Proof{
If an extension  exists then by the commutativity of Diag.~(\ref{diag:diag-dist-coho}) and the exactness of the bottom part of that diagram we have
$$
\zeta \circ \alpha_Z (r) = \zeta \circ\alpha_Z \circ i^*(s) = \zeta \circ i^* \circ\alpha_X (s) =0.
$$
Conversely, assume that $\zeta \circ \alpha_Z (r)=0$. 
Recall from 
(\ref{diag:alphaX}) that $\alpha_Z$   sends a deterministic distribution represented by $r:Z\to N\ZZ_d$ to the function $r_1:Z\to \ZZ_d$, the restriction onto the set of $1$-contexts
(similarly $\alpha_X$ and $\alpha_{\bar X}$).
Now we apply the construction of the connecting homomorphism described above to $f=r_1$ to obtain the cochains $ \tilde r_1 \in C^1(X)$ and $\bar r_1 \in C^2(\bar X)$; see Eq.~(\ref{eq:bar-f}), so that $[\bar r_1]=\zeta[r_1]=0$. 
Therefore there exists $t\in C^1(\bar X)$ such that $\delta_1(t)=\bar r_1$.
We claim that the function $s_1=\tilde r_1-q^*(t)$, which corresponds to a deterministic distribution $s$ by Proposition \ref{pro:outcome-map-nerve-edges}, gives the desired extension of $r$. For this we need to verify that $s_1$ satisfies Eq.~(\ref{eq:s-boundary}) and $i^*(s_1)=r_1$. First one follows from  $\delta_1(s_1)=\delta_1(\tilde r_1 - q^*(t))=
\delta_1\tilde r_1 - q^*(\delta_1t)=\delta_1\tilde r_1 - q^*(\bar r_1)=0$. For the second one we have $i^*(s_1)=i^*(\tilde r_1)-i^*(q^*(t)) = i^*(\bar r_1) = r_1$ since $i^*(q^*(t)=0$. 
}

Given $p\in \siDist(X,N\ZZ_d)$ 
and the subspace $Z$ we define a set of cohomology classes:
\begin{equation}\label{eq:coho-classes}
\cl_Z(p) =  \zeta\circ \alpha_Z (\supp(p|_Z)) \subset H^2(\bar X).
\end{equation}

\Cor{\label{cor:Coho-StrongContex}
Let $p\in \siDist(X,N\ZZ_d)$ and $Z\subset X$ be a subspace. If $\cl_Z(p)$ does not contain the zero class, i.e. $0\notin \cl_Z(p)$, then $p$ is strongly contextual.
}
\Proof{
Proposition \ref{pro:det-ext-iff-alpha-delta-zero} implies that none of the elements in $\supp(p|_Z)$ extends to $X$
since 
$0\notin \cl_Z(p)$. Therefore the support $\supp(p)$ is empty. 
}

A typical application of this result is the case when $p|_Z$ is a deterministic distribution so that $\cl_Z(p)$ consists of a single cohomology class $[\beta_p] \in H^2(\bar X)$. Then $[\beta_p]$ serves as a {\it witness
for strong  contextuality}, in the sense that $[\beta_p]\neq 0$ implies strong contextuality for $p$.  

\rev{ 
Our construction of the cohomology witnesses can be compared to the \v Cech cohomological construction of \cite{abramsky2015contextuality}. 
Therein the construction is with respect to one of the contexts in the measurement cover, whereas in our construction $Z$ can be any subspace; see  Example \ref{ex:coho-wit-Mermin-state-dep}.
Another difference is that 
\co{our construction relies on the abelian group structure of the set of outcomes.}
}

\begin{figure}[h!]
\begin{center}
\includegraphics[width=.8\linewidth]{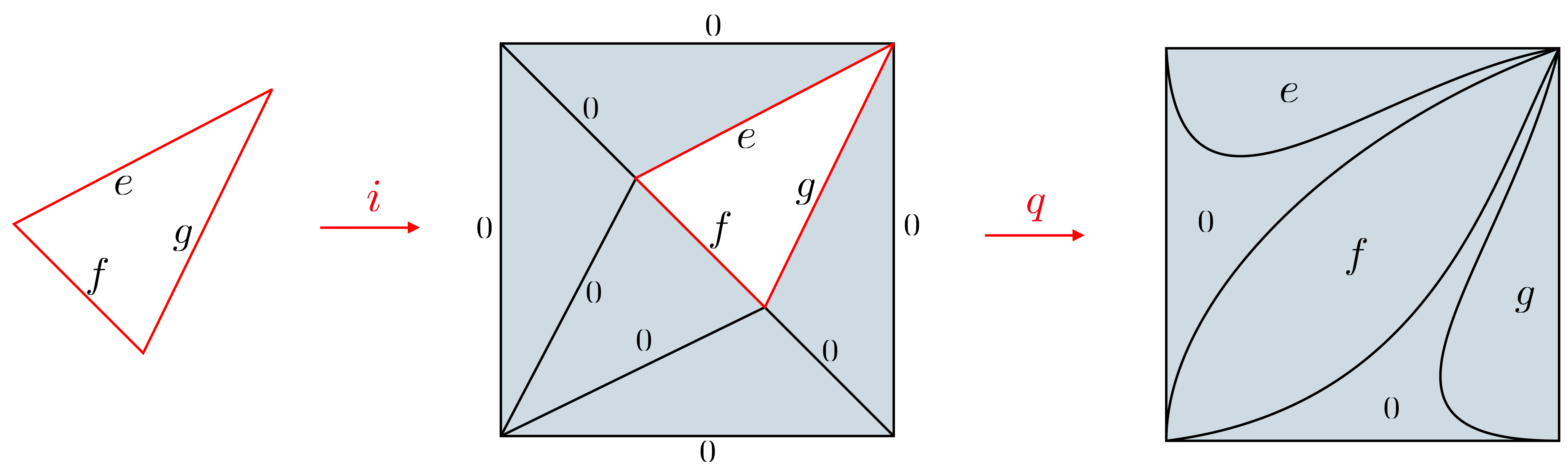} 
\end{center} 
\caption{The $1$-cochain specified by $(e,f,g)$ on the boundary is first extended to a $1$-cochain on the whole space by assigning zeros to the remaining edges. The coboundary $\delta_1$ turns this $1$-cochain to a $2$-cochain that assigns to each triangle the sum of the numbers on the edges of its boundary. This produces a $2$-cochain on the torus where the values assigned to each $2$-simplex are as indicated in the figure.  
}
\label{fig:coho-witness}
\end{figure} 

\Ex{\label{ex:coho-wit-Mermin-state-dep}
{\rm
Consider the state-dependent version of Mermin square discussed in Example \ref{ex:MerminStateDep}. The torus minus a single triangle is our space $X$ of measurements. The subspace $Z$ will be taken as the boundary of $X$ consisting of the three edges. Let  $p$ be a simplicial distribution on $(X,N\ZZ_2)$ such that 
$p|_Z$ is a deterministic distribution specified by $(\delta^e,\delta^f,\delta^g)$ where $e,f,g\in\ZZ_2$. The  cocycle $\beta_p$ that lives on the quotient space $\bar X$  is calculated as in Fig.~(\ref{fig:coho-witness}).
The quotient $\bar X$ is a torus, in particular, it is a closed surface.
Therefore $[\beta_p]\neq 0$ if and only if the cochain $\beta_{p}\neq 0$, i.e. when $e+f+g=1 \mod 2$. 
}
}

\section{Quantum measurements on spaces}
\label{sec:QuantumMeasSpace} 

In this section we generalize quantum measurements with a discrete, usually finite, set of outcomes to quantum measurements associated to a space of outcomes.
This construction gives us 
the notion of simplicial quantum measurements. In addition, if we are given a quantum state, then the Born rule, or a simplicial version thereof, can be used to obtain a simplicial probability distribution. This affords us the ability to examine whether a quantum state is contextual by examining the resulting simplicial distribution. Similar constructions also work for projective measurements. Applying the cohomological witness introduced in Section \ref{sec:CohoWit} to state-independent contextuality results in a precise connection to the earlier topological approach. We then use these tools to rigorously analyze the state-independent Mermin square scenario introduced in Section \ref{sec:MotIdea}. 



\subsection{Simplicial quantum measurements} 

Let $\hH$ denote a finite-dimensional complex Hilbert space and $L(\hH)$ denote the complex vector space of linear maps on $\hH$. 
We will write $\Pos(\hH)\subset L(\hH)$ for the set of positive semidefinite operators.
A  {\it quantum measurement} \cite{watrous2018theory}  on a set $U$ is a function 
$$
P:U\to \Pos(\hH)
$$ 
of finite support, i.e. $P(u)\neq \zero$ for only finitely many $u\in U$, such that $\sum_{u\in U} P(u)=\one$. 
When  $\hH=\CC$ a quantum measurement is  the same as a probability distribution on $U$.
We will write $Q_\hH(U)$ for the set of quantum measurements on $U$. 
Given a function $s: V\to U$ and a quantum measurement $P\in Q_\hH(V)$ we define the quantum measurement $Q_\hH s(P)$ on $U$  by the assignment
$$
 u\mapsto \sum_{v\in s^{-1}(u)} P(v).
$$

\Def{
{\rm
Let $Y$ be a simplicial set representing a space of outcomes. 
The simplicial set $Q_\hH(Y)$, called the {\it space of quantum measurements} on $Y$, consists of the $n$-simplices $Q_\hH(Y_n)$ for $n\geq 0$. 
The simplicial structure maps are given by $Q_\hH d_i$ and $Q_\hH s_j$, which for simplicity of notation will be denoted by $d_i$ and $s_j$. 
A {\it simplicial quantum measurement} on a simplicial scenario $(X,Y)$ is a map $X\to Q_\hH Y$ of spaces. We will write $\SQM(X,Y)$ for the set of simplicial quantum measurements on $(X,Y)$. 
}
}

Our construction is natural with respect to the underlying Hilbert space in the sense that a linear map $\Phi:L(\hH)\to L(\kK)$ that is positive and unital \cite{watrous2018theory} induces a map $\Phi_*:Q_\hH Y \to Q_\kK Y$ of spaces.
In particular, given a quantum state $\rho$ the Born rule can be used to define  a function $\rho_*:Q_\hH U\to D_\nnegR U$ by sending a quantum measurement  $P$ to the probability distribution $\rho_*(P)(u)=\Tr(\rho P(u))$.  
We will extend this  to the level of spaces:
A quantum state $\rho$ can be used to define a map of spaces
\begin{equation}\label{diag:simp-Born}
\rho_*:Q_\hH Y\to D_{\nnegR} Y
\end{equation} 
by sending a quantum measurement $P\in Q_\hH Y_n$ to the distribution  on $Y_n$ defined by
$
 \theta \mapsto \Tr(\rho P(\theta) )
$. 
Compatibility of $\rho_*$ with the simplicial identities follows from the linearity of the trace. 
 
 \Def{\label{def:quantum-contextuality}
{\rm
A quantum state $\rho$ is called {\it (non)contextual with respect to a simplicial quantum measurement} $P:X\to Q_\hH Y$ on  $(X,Y)$ if the composite map $\rho_*P: X \xrightarrow{P} Q_\hH Y \xrightarrow{\rho_*} D_\nnegR Y$ is (non)contextual. A quantum state $\rho$ is called {\it strongly contextual} with respect to $P$ if the distribution $\rho_*P$ is strongly contextual.
}
}


\subsection{Projective measurements on the nerve}
 
Let $\Proj(\hH)\subset \Pos(\hH)$ denote the subset of projection operators. 
A {\it projective measurement} on a set $V$ is a quantum measurement of the form $\Pi:V\to \Proj(\hH)$ \cite{watrous2018theory}. For such measurements it holds that for distinct elements of $V$ the projectors are orthogonal: $\Pi(v)\Pi(v')=0$ for $v\neq v'$ in $V$ \cite[Proposition 2.40]{watrous2018theory}. 
We will write $P_\hH(V)$ for the set of projective measurements  on $V$. 
The simplicial constructions  for quantum measurements also work for projective measurements: The simplicial set $P_\hH Y$, called the {\it space of simplicial projective measurements} on $Y$, consists of the $n$-simplices $P_\hH(Y_n)$ for $n\geq 0$ with the simplicial structure maps $P_\hH d_i$ and $P_\hH s_j$, which for simplicity denoted by $d_i$ and $s_j$. 
We will write $\SPM(X,Y)$ for the set of projective measurements on $(X,Y)$.
Note that $P_\hH Y$ is a subspace of $Q_\hH Y$. 
The simplicial version of the Born rule given in 
(\ref{diag:simp-Born}) induces a commutative diagram of spaces
\begin{equation}\label{diag:delta-simp-Born}
\begin{tikzcd}
P_\hH Y  \arrow{r}{\rho_*} &  D_\nnegR Y \\
Y \arrow{u}{\delta_{\hH,Y}} \arrow[ur,"\delta_{Y}"'] &
\end{tikzcd}
\end{equation}
where 
\begin{itemize}
\item $\delta_Y$ sends $\theta \in Y_n$ to the delta distribution $\delta^\theta\in D_R Y_n$ defined in Eq.~(\ref{eq:delta-dist}),
\item $\delta_{\hH,Y}$ sends $\theta \in Y_n$ to the projective measurement 
\begin{equation}\label{diag:delta-proj-meas}
\delta^\theta_\hH:Y_n\to  \Proj(\hH)
\end{equation}
defined by $\delta_\hH^\theta(\theta')=\one$ for $\theta'=\theta$ and $\zero$ otherwise. 
\end{itemize}
Commutativity of the diagram is a consequence of  the trace $1$ condition satisfied by quantum states.
There is also a similar diagram for $Q_\hH Y$.

Projective measurements on the nerve space $N\ZZ_d$ have an alternative description.  Let $U(\hH)$ denote the group of unitary operators acting on $\hH$. 
Let $N(\ZZ_d,U(\hH))$ denote the simplicial set\footnote{This space is a version of the {\it classifying space for commutativity} introduced in \cite{adem2012commuting}.} whose $n$-simplices are given by
$$
N(\ZZ_d,U(\hH))_n = \set{(A_1,A_2,\cdots,A_n)\,:\, A_i \in U(\hH),\;A_i^d=\one,\; A_iA_j=A_jA_i} 
$$
together with the face maps
\begin{equation}\label{eq:faces-nerve}
d_i(A_1,A_2,\cdots,A_n) = \left\lbrace
\begin{array}{ll}
(A_2,A_3,\cdots,A_n) & i=0 \\
(A_1,A_2,\cdots,A_i A_{i+1},\cdots,A_n) &  0<i<n \\
(A_1,A_2,\cdots,A_{n-1}) & i=n
\end{array}
\right.
\end{equation}
and the degeneracy maps 
$$
s_j(A_1,A_2,\cdots,A_n) = (A_1,A_2,\cdots,A_j,\one,A_{j+1},\cdots,A_n)\;\text{ for } 0\leq j\leq n.
$$

\Pro{\label{pro:iso-PNZd}
The spectral decomposition of unitary operators induces an isomorphism of simplicial sets 
$$\spec:N(\ZZ_d,U(\hH)) \to P_\hH (N\ZZ_d) .$$ 
}
\Proof{
Let $\omega$ denote the $d$-th root of unity $e^{2\pi i/d}$.
In dimension $n$ the simplicial set map $\spec$ sends a tuple $(A_1,\cdots,A_n)$ of unitaries to the projective measurement $\Pi: \ZZ_d^n \to \Proj(\hH)$ where  $\Pi(a_1,\cdots,a_n)$ projects onto the simultaneous eigenspace with eigenvalues $(\omega^{a_1},\cdots,\omega^{a_n})$. This assignment respects the simplicial structure.
The inverse map $\spec^{-1}$ sends a projective measurement $\Pi$ to the tuple of unitaries $(A_1,\cdots,A_n)$ where 
$$ 
A_i = \sum_{a_1\cdots a_n} \omega^{a_i} \Pi(a_1,\cdots,a_n).  
$$
}

\Ex{\label{ex:state-dep-Mer-sq}
{\rm
The state-dependent Mermin square scenario together with the Pauli observables in Fig.~(\ref{fig:Mermin-sq-relative}) can be regarded as a morphism $A:X\to N(\ZZ_2,U(\hH))$ where $X$ is the underlying space and $\hH=(\CC^2)^{\otimes 2}$. This morphism assigns a pair of observables for each $2$-context of $X$. For example, $f_{y_0x_0}$ is sent to the pair $(\one\otimes X,X\otimes \one)$. If we compose with the isomorphism in  Proposition \ref{pro:iso-PNZd} we obtain a simplicial projective measurement $P:X\xrightarrow{A} N(\ZZ_2,U(\hH)) \xrightarrow{\spec} P_\hH N\ZZ_2$. Under this morphism a $2$-context is sent to the projective measurement given by the simultaneous diagonalization of the pair of observables. For example, $f_{y_0x_0}$ will be sent to the  measurement determined by the projectors
$$
\Pi_{y_0x_0}^{ab} = \frac{\one+(-1)^a X}{2} \otimes \frac{\one+(-1)^b X}{2}.
$$

}
}

\subsection{State-independent contextuality}
 \label{sec:state-indep-contex}
   
State-independent contextuality arises from the impossibility of being able to  assign eigenvalues to a set of observables in such a way that all product relations\footnote{Normally all functional relations among mutually commuting observables are expected to be satisfied by the assigned eigenvalues. However, in this note we restrict to product relations only, similar to those that appear in the contextuality proof of the Mermin square scenario.} among mutually commuting observables are also satisfied by the assigned eigenvalues \cite{mermin1993hidden}. 
As mentioned in the introduction (Section \ref{sec:TopProofCont}) this type of contextuality is detected by a cohomology class constructed in \cite{Coho}.
This cohomology class can also be described in the simplicial framework by using the cohomological witness introduced in Section \ref{sec:CohoWit}.

We begin by introducing a version of contextuality that does not depend on a quantum state.  

\Def{\label{def:ContextualMeas}
{\rm
A simplicial quantum measurement $P:X\to Q_\hH Y$  is called {\it contextual} if $\rho_*(P)$ (defined by 
(\ref{diag:simp-Born})) is strongly contextual for every quantum state $\rho$.
}
}

To capture state-independent contextuality proofs studied in \cite{Coho} we consider
projective measurements on the outcome space $Y=N\ZZ_d$ and specialize the setting of Section \ref{sec:CohoWit} to the case where the subspace is a circle. 
More precisely, the setting is as follows:
\begin{itemize}
\item $X$ is a space of measurements with a subspace $Z=S^1$,
\item $\Pi:X\to P_\hH N\ZZ_d$ is a simplicial projective measurement such that 
the generating simplex $\sigma=\sigma^{01}$ of the circle is mapped to 
\begin{equation}\label{eq:proj-meas-sigma}
\Pi_\sigma =\delta^1_\hH,
\end{equation}
the projective measurement on $\ZZ_d$ defined by
 $\delta^1_\hH(a)=\one$ for $a=1$ and $\zero$ otherwise (i.e. the measurement defined in 
 (\ref{diag:delta-proj-meas}) for the $1$-simplex $\theta=1$)). 
\end{itemize}
Let $\rho$ be a quantum state. 
Consider the simplicial distribution $p_\rho:X\to D_\nnegR N\ZZ_d$ obtained using the Born rule $p_\rho=\rho_*(\Pi)$. 
The distribution $p_\rho|_{S^1}$ obtained by restriction to the circle 
is a deterministic distribution:
the generating simplex $\sigma$ to the delta distribution $\delta^1\in D_\nnegR \ZZ_d$ concentrated at $1\in \ZZ_d$. 
Moreover, by Example \ref{ex:H1-circle} we have 
$$
\alpha_{S^1}:D(S^1,N\ZZ_d)\xrightarrow{\cong} H^1(S^1) = \ZZ_d,
$$
under which the deterministic distribution $p|_{S^1}$ is mapped to $1\in \ZZ_d$. 
The cohomology witness that constitutes the set $\cl_{S^1}(p_\rho)$ (defined in Eq.~(\ref{eq:coho-classes})) is given by 
$$
[\beta] = \zeta\circ \alpha_{S^1}(p_\rho|_{S^1})= \zeta(1) \in H^2(\bar X).
$$
Note that the cohomology class $[\beta]$ is independent of the quantum state since it only depends on the restriction of $p_\rho$ to the circle. 

\Cor{
If $[\beta]\neq 0$ then $\Pi$ is contextual.
}
\Proof{
Since $\cl_{S^1}(p_\rho)=\set{[\beta]}$ Corollary \ref{cor:Coho-StrongContex} implies that if $[\beta]\neq 0$ the distribution $p_\rho$ is strongly contextual for any quantum state $\rho$. 
}

In practice a simplicial projective measurement $\Pi$ satisfying Eq.~(\ref{eq:proj-meas-sigma}) comes from a map $A:X\to N(\ZZ_d,U(\hH))$ of spaces such that
\begin{equation}\label{eq:A-map-sigma}
A_\sigma=\omega \one.
\end{equation}
Using the isomorphism in Proposition \ref{pro:iso-PNZd} we can obtain a simplicial projective measurement $\Pi=\spec\circ A$ that satisfies   Eq.~(\ref{eq:proj-meas-sigma}). The typical example is the Mermin square scenario interpreted in the simplicial framework.

\begin{figure}[h!]
\begin{center}
\includegraphics[width=.5\linewidth]{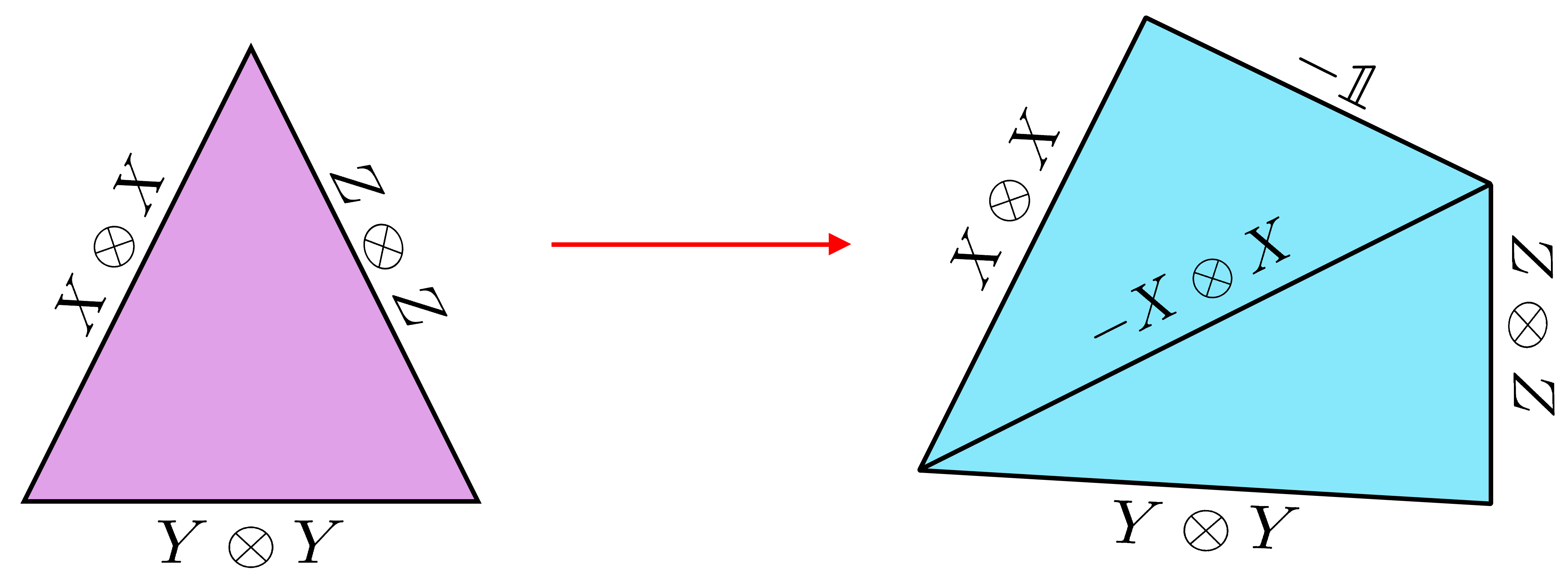} 
\end{center} 
\caption{ 
}
\label{fig:TwoSimplexRefinement}
\end{figure}  
 
\Ex{\label{ex:mermin-sq-quotient-beta}
{\rm
Let us consider the  state-independent Mermin square. 
The space of measurements is the torus $T$ depicted in Fig.~(\ref{fig:Mermin-sq-beta}) and the space of outcomes is $N\ZZ_2$. 
As in the case of state-dependent Mermin square discussed in Example \ref{ex:state-dep-Mer-sq}
we will interpret the assignment of observables to the edges as a map of spaces 
$$
A: X \to N(\ZZ_d,U(\CC^4))
$$
where $X$ is a space of measurements obtained by modifying the torus $T$. 
In the torus the $2$-context for which the cochain $\beta=1$ does not define a simplex of $N(\ZZ_2,U_4)$ since $(X\otimes X)(Z\otimes Z)=-Y\otimes Y$. 
This context can be decomposed into  two simplices as in Fig.~(\ref{fig:TwoSimplexRefinement}). 
Replacing this single simplex in the torus with the two simplices produces a punctured torus with a triangulation 
  as in the middle figure in Fig.~(\ref{fig:Mermin-Sq-Simp}). 
Note that this procedure produces an extra loop $\sigma$, which generates a subspace $Z=S^1$, such that $A_\sigma = -\one$.
Let us write $i:S^1 \to X$ for the inclusion of the loop. 
We can go back to the torus by contacting this loop to a point, that is by considering the quotient map $q:X\to \bar X=X/S^1$. This time though the triangulation of the torus will be different than the original one at the modified $2$-simplex. By a computation similar to Example \ref{ex:state-dep-Mer-sq} we see that the cochain $\beta=1$ on the torus, and hence the associated cohomology class $[\beta]$ is nonzero. This is our cohomological witness for the contextuality of the simplicial projective measurement $\Pi=\spec\circ A$.
\begin{figure}[h!]
\begin{center}
\includegraphics[width=.8\linewidth]{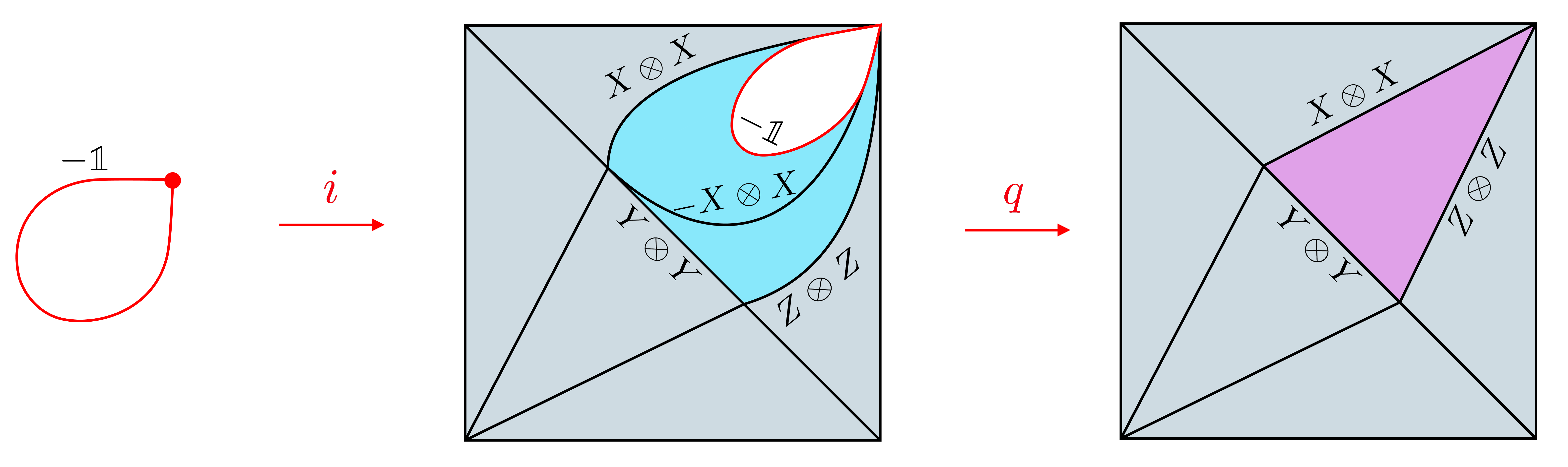} 
\end{center} 
\caption{The inclusion $i:S^1\to X$   maps the circle to the loop corresponding to $-\one$. The second map is the quotient map $q:X\to \bar X$ collapsing the loop to a point. 
}
\label{fig:Mermin-Sq-Simp}
\end{figure}
}}

Other examples studied in \cite{Coho} can also be put into this framework by a similar modification. There is an important consequence of this example which relates to   state-independent contextuality in terms of the failure of eigenvalue assignments. 

\Cor{\label{cor:PNZ2-does-not-split}
The inclusion map $\delta_{N\ZZ_2}:N\ZZ_2 \to P_\hH N\ZZ_2$ (see Diag.~(\ref{diag:delta-simp-Born})) does not split for $\dim(\hH)\geq 4$.
}
\begin{proof}
Let $\Pi:X\to P_{\CC^4} N\ZZ_2$ denote the
simplicial projective measurement considered in Example \ref{ex:mermin-sq-quotient-beta}. We can elevate this to a projective measurement defined over an Hilbert space $\hH\cong\CC^d$ of dimension $d\geq 4$ by using the linear isometry $\CC^4 \to \CC^d$ onto the subspace spanned by the first $4$ canonical basis vectors. Using this observation we can construct the right-hand square of the following commutative diagram 
$$
\begin{tikzcd}
S^1 \arrow[r,"h"] \arrow[d,"i"] &  N\ZZ_2 \arrow[r,equal] \arrow[d,"\delta_{N\ZZ_2}"] &  N\ZZ_2 \arrow[d,"\delta_{N\ZZ_2}"] \\
X \arrow[r,"\Pi"] & P_{\CC^4} N\ZZ_2 \arrow[r] & P_\hH N\ZZ_2
\end{tikzcd}
$$
In Example \ref{ex:mermin-sq-quotient-beta} we showed that $[\beta]\neq 0$. By Lemma \ref{lem:split-connecting-zero} this implies that $i:S^1\to X$ does not split. Therefore $N\ZZ_2\to P_\hH N\ZZ_2$ does not split since such a splitting would provide a splitting for $i$.
\end{proof}
We will see in Section \ref{sec:KS} that this result can be improved to include Hilbert spaces of dimension $3$ and the cases of all $d\geq 2$ as a consequence of the Kochen--Specker theorem (Theorem \ref{thm:KS}). 
To make connection to eigenvalue assignments consider the composition
$$
N\ZZ_d \xrightarrow{\delta_{N\ZZ_d}} P_\hH N\ZZ_d \xrightarrow{\spec^{-1}} N(\ZZ_d,U(\hH)).
$$ 
Under this map an $n$-simplex $(a_1,\cdots,a_n)$ 
of the nerve space
is sent to the tuple $(\omega^{a_1}\one,\cdots,\omega^{a_n}\one)$. 
Splitting of this map would, in particular, yield an assignment of values $\nu(A)\in \ZZ_d$ in such a way that
\begin{enumerate} 
\item $\nu(\omega^{a}\one)=a$ for all $a\in \ZZ_d$,
\item $\nu(A_1)+\nu(A_2)=\nu(A_1A_2)$ for all pairwise commuting unitaries $A_1$ and $A_2$ (with the $d$-torsion condition: $A_i^d=\one$). 
\end{enumerate}
This is precisely an eigenvalue assignment that respects product relations among commuting $d$-torsion unitaries. Corollary \ref{cor:PNZ2-does-not-split} has the interpretation that for $d=2$ and $\dim(\hH)\geq 4$ such an eigenvalue assignment is impossible.

\section{Foundational theorems in the simplicial setting} \label{sec:FoundThms}

In this section we describe two foundational results in quantum theory, Gleason's theorem and Kochen--Specker theorem, in our simplicial framework.

\subsection{Gleason's theorem}

In this section we will  formulate Gleason's theorem in our simplicial framework; see \cite[Prop 1.7.1]{mansfield2013mathematical} for a related result in the sheaf-theoretic framework. 
For this we will consider the simplicial scenario $(X,Y)$ where the space of measurements is  $X=P_\hH S^1$ and the space of outcomes is $Y= S^1$. 
We begin  our discussion with a description of $P_\hH S^1$. The situation is quite similar to simplicial distributions on the circle (see \S \ref{sec:DistOnCircle}) except that instead of probabilities we will use projectors. We can represent an $n$-simplex $\Pi\in (P_\hH S^1)_n$ by an $n$-tuple $(\Pi^1,\Pi^2,\cdots,\Pi^n)$ where $\Pi^k = \Pi(\sigma^{0_{k-1}1_{n-k}})$. 
In this representation the face maps are given by
$$
d_i(\Pi^1,\Pi^2,\cdots,\Pi^{n}) = \left\lbrace
\begin{array}{ll}
(\Pi^2,\Pi^3,\cdots,\Pi^{n}) & i=0\\ 
(\Pi^1,\Pi^2,\cdots,\Pi^i+\Pi^{i+1},\cdots, \Pi^{n}) & 0<i<n \\
(\Pi^1,\Pi^2,\cdots,\Pi^{n-1}) & i=n
\end{array}
\right.
$$
and the degeneracy maps for $0\leq j\leq n$ are given by
$$
s_j(\Pi^1,\Pi^2,\cdots,\Pi^{n}) = (\Pi^1,\Pi^2,\cdots,\Pi^j,\zero,\Pi^{j+1},\cdots,\Pi^n).
$$ 

In our formalism Gleason's theorem 
will be used to obtain a precise description of 
simplicial distributions $f:P_\hH S^1 \to D_{\nnegR} S^1$.  
We are interested in the maps $f$ which make the following diagram commute
\begin{equation}\label{diag:diag-PS1-DS1}
\begin{tikzcd}
P_\hH S^1  \arrow{r}{f} &  D_\nnegR S^1 \\
S^1 \arrow{u}{\delta_{\hH,S^1}} \arrow[ur,"\delta_{S^1}"'] &
\end{tikzcd}
\end{equation}
As we have seen in Diag.~(\ref{diag:delta-simp-Born}) quantum states (density operators) produce such maps via the Born rule.  
We will write $\siDist_{S^1}(P_\hH S^1 ,   S^1)$ for the set of simplicial distributions   making Diag.~(\ref{diag:diag-PS1-DS1}) commute.

\Thm{[Gleason]\label{thm:Gleason}
Let $\hH$ be a finite-dimensional Hilbert space such that $\dim(\hH)\geq 3$. Then the map
$$
 \Den(\hH) \to \siDist_{S^1}(P_\hH S^1 ,  S^1)
$$
defined by sending a density operator $\rho$ to the  simplicial distribution $\rho_*:P_\hH S^1\to D_\nnegR S^1$ given by the Born rule in  Diag.~(\ref{diag:simp-Born})
is a bijection of sets.
}
\Proof{
If we represent the $n$-simplices of $P_\hH S^1$ by tuples of projectors then the Born rule is simply the assignment   
$$
(\Pi^1,\cdots,\Pi^n) \mapsto (\Tr(\rho\Pi^1),\cdots,\Tr(\rho\Pi^n)).
$$ 
Gleason's theorem \cite{gleason1975measures} says that when $\dim(\hH)\geq 3$ finitely-additive normalized measures $\mu:\co{\Proj(\hH)}\to \nnegR$ are given by density operators. Recall that a finitely-additive $R$-measure is a function $\mu:\Proj(\hH)\to R$ such that for any set $\set{\Pi^1,\cdots,\Pi^n}$ of pairwise orthogonal projectors  $\mu(\sum_i \Pi^i)=\sum_i \mu(\Pi^i)$.
If in addition $\mu(\one)=1$ then it is called a normalized measure. 
Then $\mu$ gives a simplicial distribution $\mu_*:P_\hH S^1 \to D_\nnegR S^1$ 
which sends an 
$n$-context 
represented by $(\Pi^1,\cdots,\Pi^n)$ to $(\mu(\Pi^1),\cdots,\mu(\Pi^n))$. 
Such a map makes Diag.~(\ref{diag:diag-PS1-DS1}) commute. 
Therefore $\mu_*$ is an element of $\siDist_{S^1}(P_\hH S^1, S^1)$. Conversely, consider a simplicial distribution $f:P_\hH S^1\to D_\nnegR S^1$: 
\begin{itemize}
\item For $n=1$ the map $f_1$ sends a projector $\Pi^1$ to a number $0\leq p\leq 1$.
\item For $n=2$ the map $f_2$ sends a pair of projectors $(\Pi^1,\Pi^2)$ to a pair $(p_1,p_2)$ of nonnegative numbers such that $0\leq p_1 + p_2\leq 1$.
\item The compatibility of $f$ with   the face maps $d_0$, $d_1$ and $d_2$ implies that $f_1$ assigns $p_1+p_2$ to the projector $\Pi^1+\Pi^2$.
\end{itemize}
The up shot is that $f_1:\Proj(\hH) \to \nnegR$ is a finitely-additive measure. The commutativity of Diag.~(\ref{diag:diag-PS1-DS1}) implies:
\begin{itemize}
\item For $n=1$ the generating simplex $\sigma^{01}$ of the circle
is mapped to the projector $\one$ under $\delta_{\hH,S^1}$, and then mapped to $f_1(\one)$ under $f$, which (by the commutativity of the diagram) is the same as the image  
of the diagonal map $\delta_{S^1}$, which is $1\in \ZZ_d$.
\end{itemize}
This condition implies that $f_1$ is also normalized. Hence by the Gleason's theorem $f_1$ is induced by a quantum state $\rho$, that is $f_1(\Pi)=\Tr(\rho\Pi)$. To finish the proof we observe that $f$ is uniquely determined by $f_1$: Suppose $f(\Pi^1,\cdots,\Pi^n)=(p^1,\cdots,p^n)$. Using $\varphi_i$ defined in Eq.~(\ref{eq:varphi-k}) we compute
$$
p^i = \varphi_i (p^1,\cdots,p^n) 
= \varphi_if(\Pi^1,\cdots,\Pi^n) 
= f(\varphi_i (\Pi^1,\cdots,\Pi^n)) 
= f_1(\Pi^i)=\Tr(\rho \Pi^i).
$$
}

\subsection{Kochen--Specker theorem}\label{sec:KS}

As  is well-known \cite[Page 189]{moretti2019fundamental}  
Gleason's theorem can be used to deduce the Kochen--Specker theorem. The main idea is to consider the embedding 
\begin{equation}\label{diag:embeddingR3-C3}
\varphi:\Proj(\RR^3) \to \Proj(\CC^3)
\end{equation}
 defined as follows:
\begin{itemize}
\item We set $\varphi(\zero)=\zero$ and $\varphi(\one)=\one$.

\item For a nonzero $v \in \RR^3$ let $\Pi_v$ denote the projector onto the line spanned by $v$. Let $J_v =  \sum_{i=1}^3 v_i J_i $ where $J_i$'s are given by
$$
J_1 =  \frac{1}{\sqrt 2} \left( \begin{matrix}
 0 & 1 & 0\\
 1 & 0 & 1\\
 0 & 1 & 0
\end{matrix} \right),\;\;
J_2 =  \frac{1}{\sqrt 2} \left( \begin{matrix}
 0 & -i & 0\\
 i & 0 & -i\\
 0 & i & 0
\end{matrix} \right),\;\;
J_3 =    \left( \begin{matrix}
 1 & 0 & 0\\
 0 & 0 & 0\\
 0 & 0 & -1
\end{matrix} \right).
$$
We set  $\varphi(\Pi_v)= J_v^2$.

\item For a $2$-dimensional subspace $V\subset \RR^3$ let $\Pi_V$ denote the associated projector. Let $v$ denote a nonzero vector orthogonal to $V$. We set $\varphi(\Pi_V) = \one - J_v^2$.
\end{itemize}

\Thm{[Kochen--Specker]\label{thm:KS}
Let $\hH$ be a finite-dimensional Hilbert space such that $\dim(\hH)\geq 3$.
Then the map of spaces 
$$
\delta_{S^1} :S^1 \to P_\hH S^1
$$
 does not split, i.e. there does not exist a map $ \phi:P_\hH S^1 \to S^1$ of spaces such that the composite $\phi\circ \delta_{S^1}$ is  the identity map on $S^1$.   
}
\begin{proof}
It suffices to prove the theorem for $\CC^3$ since we can embed it as a subspace in $\hH$ when $\dim(\hH)\geq 3$ and deduce the result for the larger Hilbert space. For the rest we take $\hH=\CC^3$. A similar argument as in Theorem \ref{thm:Gleason} shows that  maps $g:P_\hH S^1 \to S^1$ that make the following diagram commute 
\begin{equation}\label{diag:diag-PS1-S1}
\begin{tikzcd}
P_\hH S^1  \arrow{r}{g} &  S^1 \\
S^1 \arrow{u}{\delta_{S^1}} \arrow[ur,equals] &
\end{tikzcd}
\end{equation}
coincide with  finitely-additive normalized measures 
\co{$\mu:\Proj(\hH) \to \NN$}. Such a measure satisfies the property that for any set $\set{\Pi^1,\cdots,\Pi^n}$ of orthogonal projectors summing to $\one$ the function $\mu$ assigns $1$ exactly to one of the projectors and assigns $0$ to the rest in the set. Consider the embedding $\varphi:\Proj(\RR^3)\to \Proj(\hH)$ given in 
(\ref{diag:embeddingR3-C3}) and  define $\pi:S^2 \to \Proj(\RR^3)$ by sending a point $v$ to the projector $\Pi_v$. 
Now observe that the following function 
$$
\bar\mu:S^2\xrightarrow{\pi} \Proj(\RR^3) \xrightarrow{\varphi} \Proj(\hH) \xrightarrow{\mu} \NN
$$
has image given by the set $\set{0,1}$ since for any triple of pairwise orthogonal lines the map $\mu$ assigns $1$ to exactly of one them and $0$ to the rest. 
Let $\tau:\NN\to \RR_{\geq 0}$ denote the inclusion of the semirings. 
Gleason's theorem says that the composite $\Proj(\hH)\xrightarrow{\mu} \NN \xrightarrow{\tau} \RR_{\geq 0}$, which is a finitely additive $\RR_{\geq 0}$-measure, is given by $\Pi\mapsto \Tr(\rho \Pi)$ for some $\rho\in \Den(\hH)$. But this implies that $\bar\mu:S^2\to \NN$ is given by \co{$v\mapsto \Tr(\rho J_v)$}. Therefore $\bar\mu$ is a continuous map and since $S^2$ is connected its image cannot be  $\set{0,1}$, a disconnected set.  Hence there does not exist any finitely-additive $\NN$-measures, which translates into the statement that there does not exist any map $g$ of spaces making Diag.~(\ref{diag:diag-PS1-S1}) commute.
\end{proof}
 
Kochen--Specker theorem can be used to improve Corollary \ref{cor:PNZ2-does-not-split}.

\Cor{\label{cor:NZd-PNZd-doesnotsplit}
If $\dim(\hH)\geq 3$ then for all $d\geq 2$ the inclusion map $
\delta_{N\ZZ_d} : N\ZZ_d \to P_\hH N\ZZ_d
$
does not split.}
\begin{proof}
This follows from the commutative diagram of spaces
$$
\begin{tikzcd}
S^1 \arrow[r,"h"] \arrow[d,"\delta_{S^1}"] & N\ZZ_d \arrow[d,"\delta_{N\ZZ_d}"] \\
P_\hH S^1 \arrow[r,"P_\hH h"] & P_\hH N\ZZ_d
\end{tikzcd}
$$
A splitting for $\delta_{N\ZZ_d}$ would give a splitting for $\delta_{S^1}$, which is impossible by Theorem \ref{thm:KS}.
\end{proof}

\section{Conclusion}\label{sec:conclusion}

Here we have presented a new framework based on the theory of simplicial sets which 
\co{extends}
both the topological approach of \cite{Coho} and the sheaf-theoretic approach of \cite{abramsky2011sheaf}. Our formalism generalizes the standard notion of nonsignaling distributions to that of simplicial distributions which are based on scenarios consisting of \emph{spaces} of measurements and outcomes. Contextuality is introduced at this level of generality and fundamental theorems such as Fine's theorem, Gleason's theorem, and the Kochen--Specker theorem are presented in this new formalism. The flexibility afforded by the simplicial framework allows us, moreover, to introduce novel characterizations of contextuality both topological and cohomological.

A chief contribution of our approach is to promote the \emph{set} of measurements and outcomes to a \emph{space} of measurements, given by a simplicial set $X$, and a space of outcomes, given by a simplicial set $Y$. In this picture contexts correspond to the $n$-simplices of $X$ and outcomes correspond to the $n$-simplices of $Y$, both of which have an intrinsic dimensionality. The basic objects of our formalism, called simplicial distributions, are then represented by maps of spaces $p:X\to D_RY$. Our constructions are natural with respect to the change of measurement space or the change of outcome space, as well as the underlying semiring $R$ in which distributions take values. In fact, those familiar with category theory  may notice that with our definitions everything takes place in the Kleisli category \cite{kleisli1965every} of the category of simplicial sets with respect to $D_R$ (regarded as a monad on this category).
Simplicial distributions appear as morphisms in this category. \co{See \cite{kharoof2022simplicial} for the role of convexity in this category-theoretic perspective.} 
It would be interesting to compare this category and its natural variants to other categories associated to nonsignaling distributions, e.g.  \cite{karvonen2018categories,barbosa2021closing}. 
Many more  aspects of these objects, in particular the role of simplicial homotopy, remain to be explored, which we leave for future work to be addressed elsewhere.

Another appealing aspect of the current formalism is how topology can aid in the study of contextuality. This was demonstrated explicitly in our  topological proof of Fine's theorem for \rev{characterizing noncontextuality in}  the CHSH scenario.
A similar analysis can be carried over to other scenarios;
e.g., Bell scenarios \cite{brunner2014bell}, KS-scenarios \cite{budroni2021quantum}, $n$-cycle scenarios \cite{araujo2013all}, etc. In fact, the framework of simplicial sets allows for a much richer landscape of measurement scenarios than those typically encountered, which may be of interest for physics and mathematics alike. We also highlight that noncontextuality can be characterized here in an economical fashion through the constructions of gluing and extension. Such notions provide a potentially powerful analytic tool for applications in quantum computation, such as extending \cite{raussendorf2016cohomological} to temporally ordered MBQC.

\paragraph{Acknowledgments.}
This  work is supported by the Air Force Office of Scientific Research under award
number FA9550-21-1-0002. The first author would like to thank Robert Raussendorf and Ben Williams for useful discussions.

\appendix

\section{Simplicial identities}
\label{sec:app-SimplicialIdentities}

A simplicial set consists of a sequence of sets $X_0,X_1,\cdots,X_n,\cdots$ together with the face maps $d_i:X_n\to X_{n-1}$ and the degeneracy maps $s_j:X_n \to X_{n+1}$ satisfying the following {\it simplicial identities}:
\begin{equation}\label{eq:simplicial-identities}
\begin{aligned}
d_i d_j &= d_{j-1} d_i  \;\;\;\; \text{ if } i<j \\
s_i s_j &= s_j s_{i-1} \vspace{2cm} \;\;\;\;\text{ if } i>j \\
d_i s_j &= \left\lbrace 
\begin{array}{ll}
s_{j-1} d_i & \text{ if } i<j \\
\idy  & \text{ if } i=j,\,j+1 \\
s_jd_{i-1} & \text{ if } i>j+1.
\end{array}
\right. 
\end{aligned}
\end{equation}
Intuitively a simplicial set is a combinatorial description of a space obtained by gluing $n$-simplices. The simplicial relations encode the faces and the degeneracies of an $n$-simplex and they can be derived by specializing to
$X=\Delta^n$. 
We can do this by applying $d_i$ and $s_j$ to the generating simplex $\sigma^{01\cdots n}$ of $\Delta^n$. Recall that $d_i$ deletes the $i$-th entry and $s_j$ copies the $j$-th entry in the string $01\cdots n$. For example, when $i<j$ first copying the $j$-th index and then deleting the $i$-th index is the same as first deleting the $i$-th index and then copying the $(j-1)$-th:
$$
d_{i} s_j \sigma^{01\cdots n} = \sigma^{01\cdots (i-1)(i+1) \cdots jj \cdots n} = s_{j-1}d_i \sigma^{01\cdots n}.
$$ 
Similarly other identities in Eq.~(\ref{eq:simplicial-identities}) can be verified.
  
Simplicial identities in low dimensions  are the most relevant to us. Although,   face maps are already described in the main text we include them for completeness:
\begin{itemize}
\item For an edge $e\in X_1$, the target vertex is $d_0e$ and the source vertex is $d_1e$. 
\item For a triangle $\sigma \in X_2$, the three faces are given by $d_i\sigma$ for $i=0,1,2$; see Fig.~(\ref{fig:Triangle-faces}).
\end{itemize} 
Degeneracy maps in dimensions $\leq 2$ are depicted in Fig.~(\ref{fig:degeneracies}) and they are given as follows:
\begin{itemize}
\item For a vertex $v\in X_0$, the edge $e=s_0v$ has the target and the source vertices given by $v$ itself.
\item For an edge $e\in X_1$, the triangles $s_0e$ and $s_1e$ are described as in Fig.~(\ref{fig:degeneracies}).
\end{itemize}
\begin{figure}[h!]
\begin{center}
\includegraphics[width=.5\linewidth]{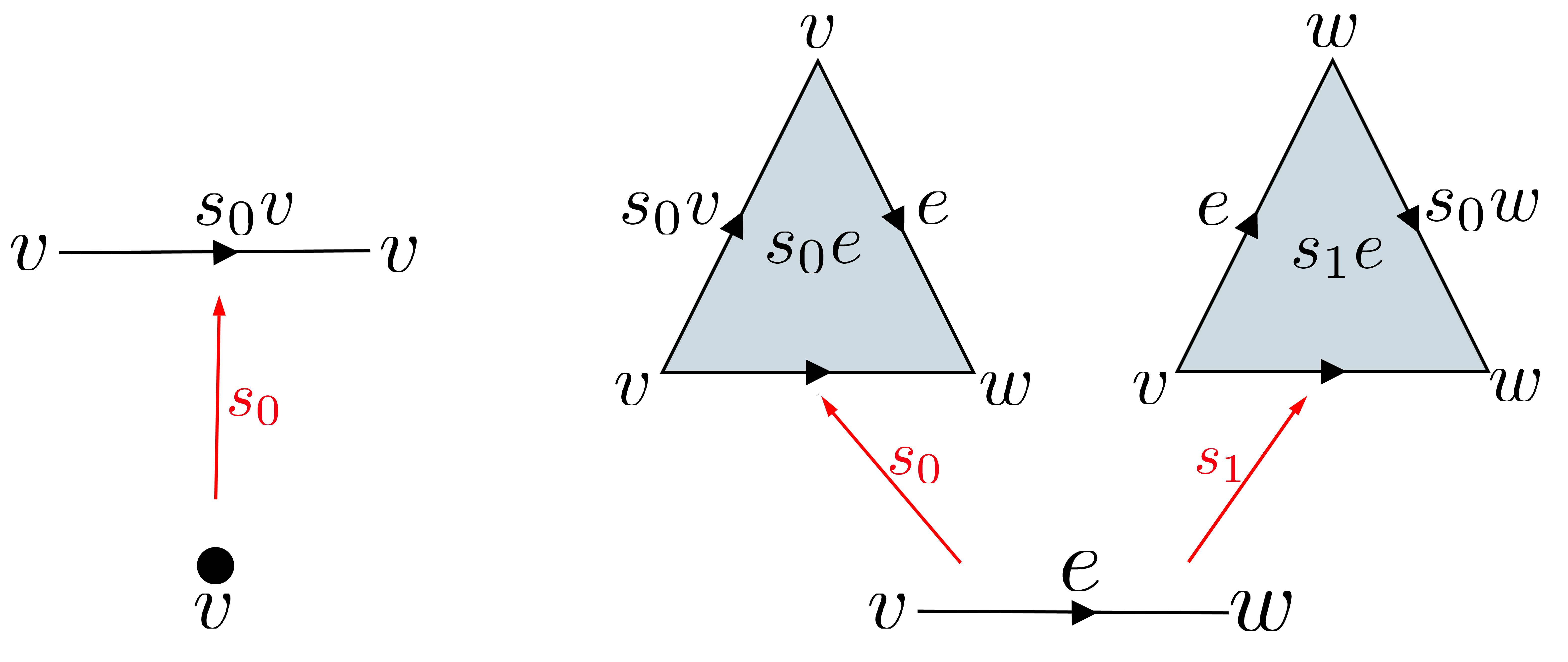} 
\end{center} 
\caption{
}
\label{fig:degeneracies}
\end{figure}

\section{Discrete scenarios}  \label{sec:discrete-scenarios}
 
A {\it discrete scenario}  consists of a triple $(M,\cC,\ZZ_d)$ where 
\begin{itemize}
\item $M$ is a finite set of measurements,
\item $\cC$ is a collection\footnote{The collection $\cC$ is assumed to be an antichain in the sense that any two contexts with $C\subset C'$ implies that $C'=C$.} of contexts covering $M$,
\item $\ZZ_d$ represents the set of outcomes.
\end{itemize}
Such scenarios are called measurement scenarios in the sheaf-theoretic formulation of contextuality \cite{abramsky2011sheaf}.
A {\it nonsignaling distribution} on the scenario $(M,\cC,\ZZ_d)$ is a collection of distributions $p_C\in D_R(\ZZ_d^C)$ satisfying the nonsignaling conditions
$$
p_C|_{C\cap C'} = p_{C'}|_{C\cap C'}\;\;\text{ for }C,C'\in \cC.
$$ 
Restriction to the intersection means marginalization 
$$
p_C|_{C\cap C'}(s) = \sum_{r\,:\,r|_{C\cap C'}=s} p(r)
$$
where $r$ runs over \co{functions} $C\to \ZZ_d$ such that the restriction $r|_{C\cap C'}$ to the intersection coincides with $s$. We will write $\NS(\cC,\ZZ_d)$ for the set of nonsignaling distributions. This space is a (convex) polytope, and is usually referred to as the {\it nonsignaling polytope}. Sending a distribution $d\in D_R(\ZZ_d^M)$ to the nonsignaling distribution $(d|_C)_{C\in \cC}$ gives a function
$$
\Theta':D_R(\ZZ_d^M) \to \NS(\cC,\ZZ_d).
$$ 
   
\Def{\label{def:contextual-discrete}
{\rm
A nonsignaling distribution $p\in \NS(\cC,\ZZ_d)$ is called {\it contextual} (in the sense of \cite{abramsky2011sheaf}) if it does not lie in the image of $\Theta'$. Otherwise, it is called {\it noncontextual}.
}}    

Any discrete scenario can be realized as a simplicial scenario so that  nonsignaling distributions are in one-to-one correspondence with   distributions on 
the simplicial scenario. For this, given a discrete scenario $(M,\cC,\ZZ_d)$ we need to specify the space of measurements and the space of outcomes. 
The space $X_\cC$ of measurements, after we fix an ordering on $M$, is the following simplicial set: 
\begin{itemize}
\item The set $(X_\cC)_n$ of   $n$-simplices is given by   tuples $(m_0,m_1,\cdots,m_n)$ of measurements in $M$, where $m_0\leq m_1\leq \cdots\leq m_n$, such that each $m_i\in C$  for some context $C\in \cC$.  
\item The $i$-th face map deletes the $i$-th measurement
$$
d_i(m_0,m_1,\cdots,m_n) =( m_0,m_1,\cdots,m_{i-1},m_{i+1},\cdots,  m_n)
$$
 and the $j$-th degeneracy map copies the $j$-th measurement
 $$
s_j(m_0,m_1,\cdots,m_n) =( m_0,m_1,\cdots,m_{j-1},m_j,m_{j},m_{j+1},\cdots,  m_n).
$$
\end{itemize} 
As noted in \cite{abramsky2019comonadic} $(M,\cC)$ can be regarded as a simplicial complex. The simplicial set $X_\cC$ is the usual way of realizing an ordered simplicial complex as a simplicial set.
The space of outcomes, denoted by 
$\Delta_{\ZZ_d}$, is the following simplicial set:
\begin{itemize}
\item The set $(\Delta_{\ZZ_d})_n$ of $n$-simplices consists of  $(n+1)$-tuples $(a_0,a_1,\cdots,a_n)$ of outcomes $a_i\in \ZZ_d$.
\item The $i$-th face map acts by deleting the $i$-th outcome
$$
d_i(a_0,a_1,\cdots,a_n) = (a_0,a_1,\cdots,a_{i-1},a_{i+1},\cdots, a_n).
$$ 
\item The $j$-th degeneracy map copies the $j$-th outcome
$$
s_j(a_0,a_1,\cdots,a_n) = (a_0,a_1,\cdots,a_{j-1} ,a_j,a_j,a_{j+1},\cdots, a_n).
$$ 
\end{itemize}
It is straight-forward to verify that the simplicial identities 
in Eq.~(\ref{eq:simplicial-identities}) 
are satisfied.   
Next we consider the space $D_R \Delta_{\ZZ_d}$ of distributions.
An $n$-simplex $p$ is a distribution on $(\Delta_{\ZZ_d})_n=\ZZ_d^{n+1}$. The $i$-th face map  acts by
$$
d_ip(a_0,\cdots,a_{n-1}) = \sum_{a\in \ZZ_d} p(a_0,\cdots,a_{i-1},a,a_{i},\cdots  a_{n-1})
$$
and the $j$-th degeneracy map acts by
$$
s_jp(a_0,\cdots,a_{n+1}) = 
\left\lbrace
\begin{array}{ll}
p(a_0,\cdots,a_{j-1},a_j,a_{j+2},\cdots  a_{n+1}) & a_j=a_{j+1} \\
0 & \text{otherwise.}
\end{array}
\right. 
$$

\begin{thm}\label{thm:ComparisonToSheaf}
There is a commutative diagram
$$
\begin{tikzcd}
\clDist(X_\cC,\Delta_{\ZZ_d}) \arrow[r,"\Theta"] \arrow[d,"\cong"] & \siDist(X_\cC,\Delta_{\ZZ_d}) \arrow[d,"\cong"] \\
D_R(\ZZ_d^M) \arrow[r,"\Theta'"] & \NS(\cC,\ZZ_d)
\end{tikzcd}
$$
where the vertical maps are isomorphisms. In particular, a nonsignaling distribution $p$ is contextual in the sense of Definition \ref{def:contextual-discrete} if and only if it is contextual in the sense of Definition \ref{def:simp-contextuality} when regarded as a distribution on  $(X_\cC,\Delta_{\ZZ_d})$.
\end{thm}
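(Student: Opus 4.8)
My plan is to construct the two vertical maps explicitly, show each is a bijection, verify the square commutes, and then read off the statement about contextuality. For the left-hand map, the key observation is that $\Delta_{\ZZ_d}$ is $0$-coskeletal: since its $n$-simplices are \emph{all} tuples $(a_0,\dots,a_n)\in\ZZ_d^{n+1}$ with face maps given by deletion, a simplicial set map $r\colon X_\cC\to\Delta_{\ZZ_d}$ imposes no constraint beyond its values on vertices, so such maps are in bijection with functions $(X_\cC)_0=M\to\ZZ_d$, i.e.\ with $\ZZ_d^M$. This identifies $\dDist(X_\cC,\Delta_{\ZZ_d})\cong\ZZ_d^M$, and applying $D_R$ (recall $\clDist=D_R(\dDist)$) gives the left isomorphism $\clDist(X_\cC,\Delta_{\ZZ_d})\cong D_R(\ZZ_d^M)$.

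For the right-hand map, I would send a simplicial distribution $p\colon X_\cC\to D_R\Delta_{\ZZ_d}$ to the family $(p_C)_{C\in\cC}$, where for $C=\{m_0<\cdots<m_k\}$ I set $p_C:=p_{\sigma_C}\in D_R(\ZZ_d^{k+1})\cong D_R(\ZZ_d^{C})$ using the nondegenerate top simplex $\sigma_C=(m_0,\dots,m_k)$ and the fixed ordering. Given two contexts $C,C'$, compatibility of $p$ with the composites of face maps carrying $\sigma_C$ and $\sigma_{C'}$ down to the common face $\sigma_{C\cap C'}$ translates, via the marginalization formula for $d_i$ on $D_R\Delta_{\ZZ_d}$, into exactly $p_C|_{C\cap C'}=p_{C'}|_{C\cap C'}$; hence $(p_C)\in\NS(\cC,\ZZ_d)$. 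The inverse takes a nonsignaling family $(p_C)$ and defines $p_\sigma$ for an arbitrary simplex $\sigma$ by writing $\sigma=\alpha(\sigma_C)$ for any context $C$ whose underlying set contains the distinct entries of $\sigma$ and any composite $\alpha$ of face/degeneracy maps, then setting $p_\sigma=\alpha_*(p_C)$. Well-definedness is the heart of the argument: independence of the chosen $C$ is precisely the nonsignaling condition (both choices marginalize to the same distribution on $\ZZ_d^{C\cap C'}$, which then maps identically to $p_\sigma$), and independence of the chosen $\alpha$ together with naturality with respect to all $d_i,s_j$ follows from the simplicial identities and the explicit degeneracy formula on $D_R\Delta_{\ZZ_d}$ (which returns $0$ unless the repeated outcome coordinates agree). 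These two assignments are mutually inverse, giving the right isomorphism.

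Finally I would check commutativity on deterministic distributions and extend by $R$-linearity: for $s\in\ZZ_d^M$ with associated outcome map $r$, the distribution $\Theta(\delta^r)$ restricts on $\sigma_C$ to $\delta^{s|_C}$, which is exactly the $C$-component of $\Theta'(\delta^s)=(\delta^s|_C)_{C}$; since both $\Theta$ and $\Theta'$ are affine over $R$ and both vertical maps are $R$-linear isomorphisms, the square commutes. The equivalence of the two notions of contextuality is then immediate: the isomorphisms match the images of $\Theta$ and $\Theta'$, so $p$ lies outside $\operatorname{im}\Theta$ if and only if the corresponding nonsignaling distribution lies outside $\operatorname{im}\Theta'$. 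The main obstacle I anticipate is the well-definedness and simplicial naturality of the inverse right-hand map—showing that a simplicial distribution is completely reconstructed from its values on the context simplices, and that the nonsignaling equations are exactly the gluing data needed to extend consistently over \emph{all} faces and degeneracies (in particular handling repeated-index degenerate simplices); everything else reduces to bookkeeping with the face/degeneracy formulas already recorded in Section~\ref{sec:discrete-scenarios}.
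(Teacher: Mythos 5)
Your proposal is correct and reaches the same endpoint as the paper --- identify $\clDist(X_\cC,\Delta_{\ZZ_d})$ with $D_R(\ZZ_d^M)$, identify $\siDist(X_\cC,\Delta_{\ZZ_d})$ with $\NS(\cC,\ZZ_d)$, and match $\Theta$ with $\Theta'$ --- but the route to the two bijections is genuinely different. The paper's proof is organized around the decomposition $X_\cC=\bigcup_{C\in\cC}\Delta^C$ and delegates all the gluing to Proposition \ref{pro:union-det-simp-distributions}: both deterministic and simplicial distributions are read off as compatible families $(d_{\Delta^C})_C$, resp.\ $(p_{\Delta^C})_C$, and the restriction diagram for $\Delta^U\subset\Delta^C$ (with $D_R$ applied to it) converts compatibility on intersections directly into the nonsignaling equations. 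You instead (a) observe that $\Delta_{\ZZ_d}$ is the $0$-coskeleton of $\ZZ_d$, so outcome maps are literally functions $M\to\ZZ_d$ --- a cleaner and more conceptual derivation of the left-hand bijection than the paper's context-by-context gluing --- and (b) build the right-hand inverse by hand, writing each simplex as $\alpha(\sigma_C)$ and proving well-definedness over choices of $C$ and $\alpha$. Point (b) is where your version pays a price: the independence-of-$\alpha$ and simplicial-naturality checks you flag as ``the heart of the argument'' are exactly what Proposition \ref{pro:union-det-simp-distributions} packages away, so your proof is more self-contained but carries a real (if routine) verification burden that you have sketched rather than carried out; to be complete you would want to reduce $\alpha$ to a canonical degeneracies-after-faces form via the simplicial identities and check the degenerate-simplex case against the explicit $s_j$ formula on $D_R\Delta_{\ZZ_d}$. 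Your final step --- checking commutativity on deterministic distributions and extending by $R$-linearity of $\Theta$, $\Theta'$, and the vertical maps --- is a slight repackaging of the paper's closing remark and is fine.
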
  
\begin{proof}
Let $U=\set{u_0,u_1,\cdots,u_k}$, where $u_0\leq u_1\leq \cdots\leq u_k$, be a subset of a context $C\in \cC$. 
Let $\Delta^U\cong \Delta^k$ denote the subspace generated by the simplex $\sigma_U=(u_0,u_1,\cdots,u_k)$. 
The inclusion map $i:\Delta^U \to \Delta^C$ induces a commutative diagram
\begin{equation}\label{diag:comm-diag-det-dist}
\begin{tikzcd}
\dDist(\Delta^C,\Delta_{\ZZ_d}) \arrow[r,"\cong"] \arrow[d,"i^*"] & \ZZ_d^C \arrow[d,"i^*"] \\
\dDist(\Delta^U,\Delta_{\ZZ_d}) \arrow[r,"\cong"] & \ZZ_d^U
\end{tikzcd}
\end{equation}
The top horizontal map sends an outcome map $r$ to the function $r_{\sigma_C}\in \ZZ_d^C$. The bottom horizontal map is also similarly defined. The right vertical map is the restriction of a function  $C\to \ZZ_d$ to the subset $U$. Commutativity of the diagram can be expressed as
$$
(r|_{\Delta^U})_{\sigma_U} = r_{\sigma_C}|_{U}.
$$
Observe that $X_\cC$ is the union of the subspaces $\Delta^C$ as $C$ runs over the contexts in $\cC$. 
This follows from the observation that an $n$-context $(m_0,m_1,\cdots,m_n)$, where $m_i\in C$, is also an $n$-context of $\Delta^C$. 
By Proposition \ref{pro:union-det-simp-distributions} deterministic distributions can be identified with tuples $(d_{\Delta^C})_{C\in \cC}$ of distributions compatible under restrictions to the intersections.
By Diag.~(\ref{diag:comm-diag-det-dist}) these are precisely the tuples $(r_{\sigma_C}\in \ZZ_d^C)_{C\in \cC}$ compatible under restriction to intersections. These are in bijective correspondence with functions $M\to \ZZ_d$. 
Therefore $\clDist(X_\cC,\Delta_{\ZZ_d})\cong D_R(\ZZ_d^M)$. 
Again by Proposition \ref{pro:union-det-simp-distributions} simplicial distributions are given by tuples $(p_{\Delta^C})_{C\in \cC}$ compatible under restriction to intersections. Applying $D_R$ to Diag.~(\ref{diag:comm-diag-det-dist}) allows us to identify such tuples with $(p_C\in D_R(\ZZ_d^C))_{C\in \cC}$ compatible under restriction to intersections. They are precisely the elements of the nonsignaling polytope, that is we have $\siDist(X_\cC,\Delta_{\ZZ_d})\cong \NS(\cC,\ZZ_d)$. Under the identifications $\Theta$-map becomes the $\Theta'$-map.
\end{proof}

\Ex{{\rm \label{ex:Bell-discrete}
Consider the CHSH scenario with
measurement set $M=\set{x_0\leq x_1\leq y_0\leq y_1}$ and outcome set $\ZZ_2$. 
The set $\cC$ of contexts of the scenario is given in Eq.~(\ref{eq:Bell-contexts}).
The corresponding measurement space $X_\cC$ is  $1$-dimensional, in fact the boundary of the square scenario $Q$ of Section \ref{sec:TopProofCont}, as depicted in   Fig.~(\ref{fig:Bell-discrete}). 
\begin{figure}[h!]
\begin{center}
\includegraphics[width=.2\linewidth]{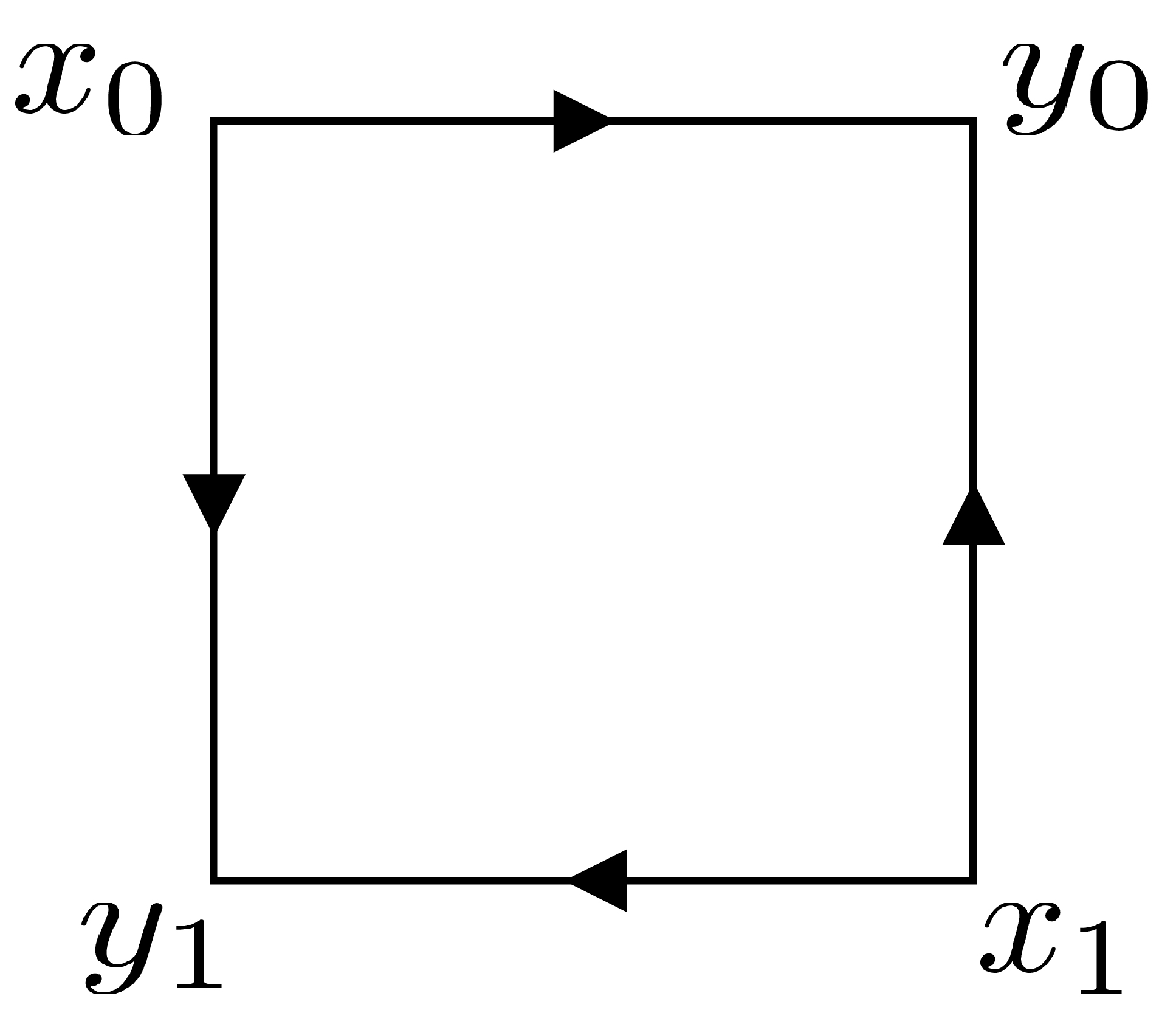} 
\end{center} 
\caption{
}
\label{fig:Bell-discrete}
\end{figure}  
}}

Under the identifications (vertical bijections) of Theorem \ref{thm:ComparisonToSheaf} the support of a distribution $p\in \siDist(X_\cC,\Delta_{\ZZ_d})$ as defined in Eq.~(\ref{eq:support-simp-dist}) coincides with the notion of support for discrete scenarios \cite{abramsky2011sheaf}. 

\Cor{\label{cor:strong-comparison-sheaf}
A nonsignaling distribution is strongly contextual in the sense of \cite{abramsky2011sheaf} if and only if it is strongly contextual in the sense of Definition \ref{def:StronglyContextual} when regarded as a simplicial distribution on $(X_\cC,\Delta_{\ZZ_d})$.}

\bibliography{bib}
\bibliographystyle{ieeetr}  
 

\end{document}